\newtheorem{condition}{Condition}
\newtheorem{definition}{Definition}
\newtheorem{example}{Example}
\newtheorem{lemma}{Lemma}
\newtheorem{theorem}{Theorem}
\DeclareMathOperator*{\argmin}{arg\,min}
\newcommand{\grmr}{GRMR\xspace}
\newcommand{\exact}{E-GRMR\xspace}
\newcommand{\heuristic}{H-GRMR\xspace}
\newcommand{\builddom}{\textsc{BuildDomGraph}\xspace}
\newcommand{\np}{\textnormal{NP}\xspace}
\newcommand{\extremes}{\ensuremath{X}\xspace}
\newcommand{\nphard}{{{\np}-hard}\xspace}
\newcommand{\delaunay}{IPDG\xspace}
\newcommand{\delgraph}{\ensuremath{\mathcal{G}}\xspace}
\newcommand{\vecangle}{\ensuremath{{\theta}}\xspace}
\newcommand{\domgraph}{\ensuremath{\mathcal{H}}\xspace}
\title{GRMR: Generalized Regret-Minimizing Representatives}
\author{Yanhao Wang\textsuperscript{1}, Michael Mathioudakis\textsuperscript{2}, Yuchen Li\textsuperscript{3}, Kian-Lee Tan\textsuperscript{1}\\
	\textsuperscript{1}National University of Singapore \qquad \textsuperscript{2}University of Helsinki \qquad \textsuperscript{3}Singapore Management University\\
	\textsuperscript{1}\small{\texttt{\{yanhao90,tankl\}@comp.nus.edu.sg}} \qquad \textsuperscript{2}\small{\texttt{michael.mathioudakis@helsinki.fi}} \qquad \textsuperscript{3}\small{\texttt{yuchenli@smu.edu.sg}}
}
\begin{document}
\maketitle

\begin{abstract}
  Extracting a small subset of representative tuples from a large database is
  an important task in multi-criteria decision making.
  The \emph{regret-minimizing set} (RMS) problem
  is recently proposed for representative discovery from databases. Specifically,
  for a set of tuples (points) in $d$ dimensions, an RMS problem finds the smallest subset
  such that, for any possible ranking function, the relative difference in scores
  between the top-ranked point in the subset and the top-ranked point in the entire 
  database is within a parameter $\varepsilon \in (0,1)$.
  Although RMS and its variations have been extensively investigated in the literature,
  existing approaches only consider the class of
  nonnegative (monotonic) linear functions for ranking,
  which have limitations in modeling user preferences and decision-making processes.
  
  To address this issue, we define the \emph{generalized regret-minimizing representative}
  (GRMR) problem that extends RMS by taking into account all linear functions including
  non-monotonic ones with negative weights. For two-dimensional databases,
  we propose an optimal algorithm for GRMR via a transformation into 
  the shortest cycle problem in a directed graph.
  Since GRMR is proven to be NP-hard even in three dimensions,
  we further develop a polynomial-time heuristic algorithm for GRMR on databases in
  arbitrary dimensions. Finally, we conduct extensive experiments on real and synthetic
  datasets to confirm the efficiency, effectiveness, and scalability of our proposed algorithms.
\end{abstract}

\section{Introduction}\label{sec:intro}

Nowadays, a database usually contains millions of tuples and is beyond any user's
capability to explore it entirely. Hence, it is an important task to extract a small subset
of representative tuples from a large database.
In multi-criteria decision making, a common method for identifying representatives
is the \emph{top-$k$ query}~\cite{DBLP:journals/csur/IlyasBS08},
which selects $k$ tuples with the highest scores in a database based on
a utility (ranking) function to model user preferences. For databases with multiple
numeric attributes, the ranking function is often expressed in the form of
a linear combination of attributes w.r.t.~a utility vector.
Finding players with the best (or poorest) performance based on a linear
combination of their statistics~\cite{DBLP:journals/pvldb/ChesterTVW14,DBLP:conf/sigmod/AsudehN0D17}
and evaluating credit risks according to
a linear combination of criteria such as income and credit history~\cite{DBLP:conf/sigmod/YuAY12,DBLP:journals/kais/LuoWY09}
are a few examples.

However, the ranking function could be unknown a-priori in many cases as the preference may vary from user to user,
and it is impossible to design a unified ranking function to model a variety of user preferences.
In the absence of explicit ranking functions, the
\emph{maxima representation}~\cite{DBLP:conf/icde/BorzsonyiKS01,DBLP:journals/pvldb/NanongkaiSLLX10,DBLP:conf/sigmod/ChangBCLLS00,DBLP:journals/vldb/XieWL20}
is used for representative discovery from a large database.
Specifically, the maxima representation is
a subset that consists of the maxima (i.e., tuple with the highest score) of the database
for any possible ranking function. If tuples are viewed as points in Euclidean space,
the \emph{convex hull}~\cite{DBLP:books/sp/PreparataS85,DBLP:conf/sigmod/ChangBCLLS00} of 
the point set is a maxima representation for all linear functions.
As another example, the \emph{skyline}~\cite{DBLP:conf/icde/BorzsonyiKS01} of the database
containing all Pareto-optimal tuples is a maxima representation for all nonnegative monotonic
ranking functions. A large body of work has been done in these areas
(see~\cite{DBLP:journals/csur/IlyasBS08,DBLP:journals/corr/KalyvasT17} for extensive surveys).
However, a major issue with maxima representations
is that they can contain a large portion of tuples in the database. For example,
the number of vertices of the convex hull is $O(n^{1/3})$ for a set of $n$ random points
inside a unit circle~\cite{DBLP:journals/corr/abs-1111-5340}.
The problem would become worse in higher dimensions.
The convex hull often contains $O(n)$ vertices even for five-dimensional
databases~\cite{DBLP:conf/sigmod/AsudehN0D17,DBLP:conf/sigmod/AsudehN00J19}.

Consequently, it is necessary to find a smaller subset to approximate maxima representations.
One such method that has attracted much attention recently is the
\emph{\underline{R}egret-\underline{M}inimizing \underline{S}et}
(RMS) problem~\cite{DBLP:journals/pvldb/NanongkaiSLLX10,DBLP:conf/icde/PengW14,DBLP:conf/sigmod/AsudehN0D17,DBLP:conf/wea/AgarwalKSS17,DBLP:conf/sigmod/XieW0LL18,DBLP:journals/pvldb/ShetiyaAAD19,DBLP:journals/vldb/XieWL20}.
In the setting of RMS, we define the \emph{regret ratio} as the relative score difference
between the top-ranked tuple in a subset and the top-ranked tuple in the whole database
for a specific ranking function. Then, we use the \emph{maximum regret ratio}, i.e.,
the maximum of the regret ratios of a subset over all possible ranking functions,
to measure how well the subset approximates the maxima representation of the database.
Given an error parameter $ \varepsilon \in (0,1) $, an RMS problem\footnote{There is a dual formulation of RMS which asks for a subset of size $r$ that minimizes the maximum regret ratio. Considering the equivalence of both formulations, we do not elaborate on the dual problem anymore in this paper.}
asks for the smallest subset whose maximum regret ratio is at most $\varepsilon$.
However, a significant limitation of RMS and its variations~\cite{DBLP:journals/pvldb/ChesterTVW14,DBLP:conf/icde/ZeighamiW19,DBLP:conf/aaai/StorandtF19,DBLP:conf/sigmod/AsudehN00J19,DBLP:conf/sigmod/NanongkaiLSM12,DBLP:conf/sigmod/XieWL19,DBLP:conf/icde/XieW0T20}
is that they only consider the class of nonnegative (monotonic) linear ranking functions, i.e.,
the utility vector is restricted to be nonnegative in each dimension, which may not be suitable
for modeling user preferences and decision-making processes in many cases.
Specifically, they are unable to express the ``liking'' and ``dislike'' of an attribute
at the same time. For example, when evaluating players' performance, we can find that
a criterion is positive for one player but negative for another, depending on their positions
and play styles~\cite{doi:10.1080/24748668.2013.11868632}.
As another example, to find preferable houses in a real estate database,
some users favor the ones close to traffic as they are convenient while others favor
the ones far away from traffic as they are quiet~\cite{DBLP:conf/sigmod/XinHC07}.
In such cases, nonnegative linear functions
cannot express conflicting user preferences on the same attribute simultaneously.
Furthermore, in many applications such as epidemiological and financial analytics~\cite{DBLP:journals/kais/LuoWY09,DBLP:conf/sigmod/YuAY12,DBLP:conf/sigmod/ChangBCLLS00},
the attribute values of tuples and weights of utilities are permitted to be negative,
and both the maxima and minima of a database are considered as representatives.
Obviously, RMS is not applicable in such scenarios.

To address the limitations of RMS, we propose the
\emph{\underline{G}eneralized \underline{R}egret-\underline{M}inimizing \underline{R}epresentative}
(\grmr) problem in this paper. We first extend the notion of
\emph{maximum regret ratio} by considering the class of all linear functions
other than merely nonnegative ones. In this way, we can capture
the ``liking'' (resp.~positive weight) and ``dislike'' (resp.~negative weight)
of an attribute as well as the maxima and minima of a database all at once.
Then, \grmr is defined to find the smallest subset with a regret ratio
of at most $\varepsilon$ for any possible linear function.
Due to the extension of ranking function spaces, \grmr becomes more challenging
than RMS. Most existing algorithms for RMS such as \textsc{Greedy}~\cite{DBLP:journals/pvldb/NanongkaiSLLX10}
and its variants~\cite{DBLP:conf/icde/PengW14,DBLP:conf/sigmod/XieW0LL18}
cannot be used for \grmr because they heavily rely on the monotonicity
and non-negativity of ranking functions for computation. A few RMS algorithms, e.g.,
\textsc{$\varepsilon$-Kernel}~\cite{DBLP:conf/wea/AgarwalKSS17,DBLP:conf/icdt/CaoLWWWWZ17}
and \textsc{HittingSet}~\cite{DBLP:conf/wea/AgarwalKSS17,DBLP:conf/alenex/KumarS18}
can be adapted to \grmr, but unfortunately, they suffer from low efficiency
and/or inferior solution quality.

Therefore, it is essential to design more efficient and effective algorithms for \grmr.
First of all, we prove that \grmr is \nphard on databases of three or higher dimensionality.
Then, we provide an exact polynomial-time algorithm \exact for \grmr on two-dimensional
databases. The basic idea of \exact is to transform \grmr in 2D into an equivalent problem
of finding the shortest cycle of a directed graph. \exact has $O(n^3)$
running time in the worst case where $n$ is the number of tuples in the database
while always providing an optimal result for \grmr.
Furthermore, we develop a polynomial-time heuristic algorithm \heuristic for \grmr
on databases in arbitrary dimensions. Based on a geometric interpretation of
\grmr using \emph{Voronoi diagram} and
\emph{Delaunay graph}~\cite{DBLP:journals/csur/Aurenhammer91},
\heuristic simplifies \grmr to a \emph{minimum dominating set} problem on
a directed graph. Theoretically, the result of \heuristic always has a maximum regret ratio
of at most $\varepsilon$. But \heuristic cannot guarantee the minimality of result size.
Moreover, we discuss two practical issues on the implementation of \heuristic,
including building an approximate Delaunay graph for \heuristic
and improving the solution quality of \heuristic via graph materialization and reuse.

Finally, we conduct extensive experiments on real and synthetic datasets to evaluate
the performance of our proposed algorithms. The experimental results confirm that
\exact always provides optimal results for \grmr on two-dimensional data within reasonable time.
In addition, \heuristic not only is more efficient (up to $10^4$x faster in running time)
but also achieves better solution quality (up to $50$x smaller in result sizes)
than the baseline methods.

In summary, we make the following contributions in this paper.
\begin{itemize}
  \item We propose the \emph{generalized regret-minimizing representative} problem (\grmr) to
  find a small subset as an approximate maxima representation of a database for all possible
  linear functions. We prove that \grmr is \nphard in $\mathbb{R}^d$ when $d \geq 3$.
  (Section~\ref{sec:def})
  \item In the case of two-dimensional databases, we provide an exact polynomial-time algorithm
  \exact for \grmr. (Section~\ref{sec:alg:2d})
  \item For databases of higher dimensionality ($d \geq 3$), we develop a polynomial-time
  heuristic algorithm \heuristic for \grmr. We also discuss the practical
  implementation of \heuristic. (Section~\ref{sec:alg:hd})
  \item We conduct extensive experiments on real and synthetic datasets to demonstrate
  the efficiency, effectiveness, and scalability of our proposed algorithms. (Section~\ref{sec:exp})
\end{itemize}

\section{Preliminaries}\label{sec:def}

\subsection{Model and Problem Formulation}\label{subsec:def}

\textbf{Database Model:}
We consider a database $P$ of $n$ tuples, each of which has $d$ numeric attributes,
and so we represent a tuple in $P$ as a $d$-dimensional point $p=(p[1],\ldots,p[d])$
and view $P$ as a point set in $\mathbb{R}^d$. Without loss of generality, we assume
that attribute values are normalized to the range $[-1,1]$ so that $1$ corresponds to
the maximum possible value and $-1$ corresponds to the minimum possible value.
Tuples are ranked by scores according to
a \emph{ranking function} $f : \mathbb{R}^d \rightarrow \mathbb{R}$.
A tuple $p_i$ outranks $p_j$ based on $f$ if $f(p_i) > f(p_j)$;
when $f(p_i) = f(p_j)$, any consistent rule can be used for tie-break.
In this paper, we focus on the class of \emph{linear} ranking functions as
they are widely used both in practical settings as well as the literature on
regret minimization problems~\cite{DBLP:journals/pvldb/NanongkaiSLLX10,DBLP:conf/sigmod/AsudehN0D17,DBLP:conf/alenex/KumarS18,DBLP:conf/wea/AgarwalKSS17,DBLP:conf/sigmod/XieW0LL18}.
The score of a tuple $p$ based on a linear function $f$
with a utility vector $x=(x[1],\ldots,x[d]) \in \mathbb{R}^d$
is computed as the inner product of $p$ and $x$, i.e.,
$f_x(p) = \langle p,x \rangle = \sum_{i=1}^{d} p[i] \cdot x[i]$.
Notice that, from a geometric perspective, 
we consider the set of all utility vectors $x \in \mathbb{R}^d$
corresponding to the unit $(d-1)$-sphere
$\mathbb{S}^{d-1}=\{x \in \mathbb{R}^d : \|x\|=1 \}$.
Focusing on unit vectors incurs no loss of generality since rankings based on
linear functions are norm-invariant\footnote{Obviously, the \emph{zero vector} $0$ is out of consideration. Because $\langle p,0 \rangle = 0$ for any $p \in \mathbb{R}^d$, $0$ is not meaningful for regret minimization problems.}.
We use $\omega(x,P) = \max_{p \in P} f_x(p)$ to denote the score of
the top-ranked tuple in $P$ based on vector $x$.

\textbf{Maxima Representation:}
For a database $P$ and a class $\mathcal{F}$ of ranking functions,
the \emph{maxima representation} of $P$ w.r.t.~$\mathcal{F}$ is defined as
the subset of tuples in $P$ that are top-ranked (i.e., have the highest score)
for some function $f \in \mathcal{F}$.
As discussed in Section~\ref{sec:intro}, the \emph{convex hull}~\cite{DBLP:books/sp/PreparataS85}
and \emph{skyline}~\cite{DBLP:conf/icde/BorzsonyiKS01} of $P$ are
maxima representations when $\mathcal{F}$ is the set of all \emph{linear functions}
and the set of all \emph{nonnegative monotonic functions}, respectively. 
One problem with maxima representations is that they can include a large portion of tuples
in the database.
To address this issue, we propose to use a relaxed definition of \emph{maxima representation}
that would trade an approximate representation for a smaller size.
Towards this end, we define the \emph{regret ratio} $l_x(Q,P)$ of a subset $Q$ of $P$
w.r.t.~a utility vector $x$ as $l_x(Q,P) = 1 - \frac{\omega(x,Q)}{\omega(x,P)}$.
In other words, for a given linear function, the regret ratio denotes the relative loss of
approximating the top-ranked tuple in $P$ by the top-ranked tuple in $Q$.
Since the maxima representation takes all possible ranking functions into account,
we define the \emph{maximum regret ratio} $l(Q,P)$ of $Q$ over $P$
as the maximum of the regret ratios over all linear ranking functions, i.e.,
$l(Q,P)=\max_{x \in \mathbb{S}^{d-1}} l_x(Q,P)$.
Based on the above measures, we formally define a relaxed notion of
\emph{maxima representation} for all linear functions,
to which we refer as an ``$\varepsilon$-regret set''.
\begin{definition}[$\varepsilon$-Regret Set]
  A subset $Q \subseteq P$ is an $\varepsilon$-regret set of $P$
  if and only if $l(Q,P) \leq \varepsilon$.
\end{definition}
Intuitively, an $\varepsilon$-regret set contains at least one tuple with score
at least $(1-\varepsilon) \cdot \omega(x,P)$ for every utility vector $x$.
In what follows, when the context is clear, we will drop $P$ from the notations
of \emph{regret ratio} and \emph{maximum regret ratio}.

Note that, for the notions of \emph{regret ratio} and \emph{maximum regret ratio}
to be well-formulated, we require that the top-ranked tuple has a positive score.
Or formally,
\begin{condition}
  The database $P$ satisfies that $\omega(x,P) > 0$
  for every vector $x \in \mathbb{S}^{d-1}$.
\end{condition}
This condition guarantees $l_x(Q)$ and $l(Q)$ to be nonnegative for any $Q \subseteq P$.
In fact, it is equivalent to the condition that the convex hull $\mathcal{CH}(P)$ of $P$
contains the origin of axes as an interior point~\cite{DBLP:conf/nips/ZhouTX019}.
This condition is often mild since the attribute values are normalized
to $[-1,1]$ and $P$ is ``centered'' around the origin of axes.
In all that follows, we will assume that $P$ satisfies this condition.

\textbf{Problem Formulation:}
For a given database $P$ and a parameter $\varepsilon$,
there may be many different $\varepsilon$-regret sets of $P$.
Naturally, for the compactness of data representation,
we are interested in identifying the smallest possible among them.
We formally define this problem as \emph{Generalized Regret-Minimizing Representative} (\grmr).
\begin{definition}[\grmr]\label{problem:grmr}
  Given a database $P \subset \mathbb{R}^d$ and a real number $\varepsilon \in (0,1)$,
  find the smallest $\varepsilon$-regret set $Q^*_{\varepsilon}$ of $P$. Formally,
  \begin{equation*}
    Q^*_{\varepsilon} = \argmin_{Q \subseteq P \,:\, l(Q) \leq \varepsilon} |Q|   
  \end{equation*}
\end{definition}

Note that there is a dual formulation of \grmr, i.e.,
given a positive integer $r \in \mathbb{Z}^+$, find
a subset $Q$ of size at most $r$ with the smallest $l(Q)$.
One can trivially adapt an algorithm $\mathcal{A}$ for \grmr to solve the dual problem:
By performing a binary search on $\varepsilon$ and computing a solution of \grmr
using $\mathcal{A}$ for each value of $\varepsilon$, one can find the minimum value
of $\varepsilon$ so that the result size is at most $r$.
If $\mathcal{A}$ provides the optimal result for \grmr, the adapted algorithm can also
return the optimal result for the dual problem at the expense of an additional log factor
(for binary search) in running time.

\textbf{Complexity:}
The \grmr problem is trivial in case of $d=1$ since the two points with the minimum and maximum
attribute values are always an optimal result for \grmr, which can be computed in $O(n)$ time.
In case of $d=2$, we propose an exact polynomial-time algorithm for \grmr
as described in Section~\ref{sec:alg:2d}.
However, \grmr is \nphard for databases of higher dimensionality ($d \geq 3$).
\begin{theorem}\label{thm:np:hardness}
  The \emph{Generalized Regret-Minimizing Representative} (\grmr) problem
  is \nphard in $\mathbb{R}^d$ when $d \geq 3$.
\end{theorem}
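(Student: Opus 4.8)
The plan is to establish NP-hardness by a polynomial-time reduction from \textsc{Vertex Cover} on $3$-connected planar graphs, a class on which the problem remains NP-hard. The conceptual starting point is to reframe \grmr as a covering problem on the sphere: for each point $p \in P$ define its \emph{serving region} $R_p = \{x \in \mathbb{S}^{d-1} : f_x(p) \geq (1-\varepsilon)\,\omega(x,P)\}$, i.e.\ the set of utility vectors for which $p$ alone keeps the regret ratio within $\varepsilon$. Then $Q$ is an $\varepsilon$-regret set exactly when $\bigcup_{p \in Q} R_p = \mathbb{S}^{d-1}$, so \grmr is precisely the problem of covering the sphere by the fewest serving regions. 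I would first prove hardness for $d=3$ and then extend to every fixed $d \ge 3$ by a padding argument: append the $2(d-3)$ points $\{\pm \tfrac{1}{2} e_i : 4 \le i \le d\}$, each of which is the unique point with regret below $\varepsilon$ for the directions $\pm e_i$ and is therefore forced into every solution, while leaving the $3$-dimensional structure (and Condition~1) untouched. The optimum then increases by the fixed additive constant $2(d-3)$.

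For the core $d=3$ construction, take a $3$-connected planar graph $G=(V,E)$. By Steinitz's theorem, $G$ is the edge graph of a convex polytope, and I place its vertices as database points $\{p_v\}_{v \in V}$, translated so that the origin is interior (securing Condition~1). The geometric fact I exploit is that a pair $\{p_u,p_v\}$ is the set of the two top-scoring vertices for some direction if and only if $\{u,v\}$ is an edge of the polytope, i.e.\ an edge of $G$. Thus for each $\{u,v\} \in E$ there is an \emph{edge-normal} direction $x_{uv} \in \mathbb{S}^2$ at which $p_u$ and $p_v$ are co-maximal and every other vertex scores strictly lower. I then add a family of \emph{anchor} points together with a sufficiently small $\varepsilon$ so that: (i) the anchors are forced into every solution and, once present, give regret $0$ for every direction outside small caps $C_{uv}$ around the edge-normals and the vertex-apex directions; and (ii) inside each cap $C_{uv}$ the only database points achieving regret at most $\varepsilon$ are $p_u$ and $p_v$. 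With this calibration, a subset covers every cap $C_{uv}$ if and only if it selects an endpoint of each edge, i.e.\ a vertex cover of $G$, while each vertex-apex direction is automatically served once all its incident edges are.

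The correctness argument then splits into two directions. Given a vertex cover $S$ of size $k$, the set $\{p_v : v \in S\}$ together with the forced anchors covers every cap and hence all of $\mathbb{S}^2$, yielding an $\varepsilon$-regret set of size $k$ plus the fixed number of anchors. Conversely, any $\varepsilon$-regret set must contain all anchors and, to cover each $C_{uv}$, at least one of $p_u,p_v$; deleting the anchors leaves a vertex cover of the corresponding size. Hence a minimum $\varepsilon$-regret set yields a minimum vertex cover, the reduction is size-preserving up to an additive constant, and \grmr is NP-hard in $\mathbb{R}^3$, and therefore in $\mathbb{R}^d$ for all $d \ge 3$.

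The main obstacle is the simultaneous calibration in steps~(i)--(ii): I must choose the anchor positions, the cap radii, and $\varepsilon$ so that the anchors dominate the upper envelope $\omega(x,P)=\max_{p \in P} f_x(p)$ everywhere except in the pairwise-disjoint caps, yet inside each cap exactly the two endpoint vertices remain within a $(1-\varepsilon)$ factor of the local maximum while all other vertices and anchors fall below it. This requires controlling the piecewise-linear convex envelope $\omega(\cdot,P)$ under the insertion of the anchors, verifying that the caps are disjoint, and confirming that the apex directions are covered precisely when the adjacent edge constraints are met. The Steinitz correspondence is what makes the edge structure realizable in fixed dimension three; converting that structural fact into quantitative score gaps that hold uniformly over all $|V|$ vertices is where the technical effort concentrates.
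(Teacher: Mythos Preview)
Your approach is quite different from the paper's and substantially more involved. The paper's proof is a short reduction from the (nonnegative, monotonic) RMS problem in $\mathbb{R}^3$, whose NP-hardness is already established in the literature it cites. Given an RMS instance $P_0 \subset [0,1]^3$, the paper adds just three auxiliary points $b_1,b_2,b_3$ with one large negative coordinate each (e.g.\ $b_1=(1-\eta,1,1)$ with $\eta>3$) and shows that these three points, which are forced into every solution, single-handedly cover every utility vector outside the positive orthant, while the positive orthant is governed exactly by the original RMS instance. Hence $P_0$ has an $\varepsilon$-regret set of size $r$ iff $P_0\cup\{b_1,b_2,b_3\}$ has one of size $r+3$. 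No polytope realization, no anchor calibration, no cap geometry.

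Your route via Steinitz's theorem and \textsc{Vertex Cover} is, in spirit, how the NP-hardness of RMS itself was originally proved, so you are effectively re-deriving that result inside the \grmr setting rather than invoking it. That is legitimate in principle, but the proposal has a genuine gap exactly where you locate it: the calibration of the anchors. You assert that anchors can be placed so that they (i) dominate the envelope $\omega(x,P)$ everywhere except inside pairwise-disjoint caps around the edge-normal directions, yet (ii) inside each cap fall strictly below the $(1-\varepsilon)$ threshold while precisely $p_u,p_v$ remain above it. These requirements pull in opposite directions---anchors must be ``large'' to dominate globally but ``small'' near the edge normals---and nothing in the proposal establishes that they can be met simultaneously for an arbitrary $3$-connected planar $G$, with a single $\varepsilon$ and with caps that are actually disjoint. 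A proof would need an explicit anchor construction, quantitative score-gap bounds uniform over all edges, and a verification that the resulting $\varepsilon$ and coordinates have polynomial bit-length. A secondary omission is that the Steinitz realization must itself be computable in polynomial time with rational coordinates of polynomial size; this is known, but it should be stated and cited, since the polynomiality of the reduction depends on it.
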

Please refer to Appendix~\ref{proof:np:hardness} for the proof of Theorem~\ref{thm:np:hardness}.

\textbf{Relationships to Other Problems:}
The proof of Theorem~\ref{thm:np:hardness} essentially shows that \grmr is a generalization of
RMS~\cite{DBLP:journals/pvldb/NanongkaiSLLX10}: while RMS is defined for monotonic linear
functions (resp.~$x\in\mathbb{S}^{d-1}_+$), \grmr is defined for all linear functions
(resp.~$x\in\mathbb{S}^{d-1}$). This generalization makes \grmr more challenging than RMS.
Because most existing RMS algorithms rely on the monotonicity of ranking functions
for computation, they cannot be used for \grmr.

Furthermore, \grmr is a special case of $\varepsilon$-kernel~\cite{DBLP:journals/jacm/AgarwalHV04}.
An $\varepsilon$-kernel is a coreset to approximate the width of a point set within
an error ratio of $\varepsilon$ along all directions, where the width of $P$
for vector $x$ is defined as
$w(x,P) = \max_{p \in P} \langle p,x \rangle - \min_{q \in P} \langle q,x \rangle $.
It can be shown that any $\varepsilon$-regret set of $P$ is also an $\varepsilon$-kernel of $P$.
But the opposite does not always hold and algorithms for $\varepsilon$-kernels may not
provide valid results for GRMR. Nevertheless, with minor modifications, the
\underline{A}pproximate \underline{N}earest \underline{N}eighbor (ANN)
based algorithm of Agarwal et al.~\cite{DBLP:journals/jacm/AgarwalHV04} for $\varepsilon$-kernels
can compute an $\varepsilon$-regret set of size $O(1/\varepsilon^{(d-1)/2})$
for \grmr.
Although this algorithm provides a desirable theoretical property that the result size
is independent of $n$, there may exist much smaller $\varepsilon$-regret sets
in practice. Therefore, the results computed by the ANN-based
algorithm~\cite{DBLP:journals/jacm/AgarwalHV04} are of inferior quality for \grmr in general.

Finally, \grmr is also closely related to $\varepsilon$-nets~\cite{DBLP:conf/wea/AgarwalKSS17}.
An $\varepsilon$-net of $\mathbb{S}^{d-1}$ is a finite subset
$\mathcal{N} \subset \mathbb{S}^{d-1}$
where there exists a vector $\overline{x} \in \mathcal{N}$ with
$\| \overline{x}-x \| \leq \varepsilon$ for any vector $x \in \mathbb{S}^{d-1}$.
It is known that a set of $O(1/\varepsilon^{d-1})$ uniform vectors forms an $\varepsilon$-net
of $\mathbb{S}^{d-1}$. One can find an $O(1)$-approximation result of GRMR
by generating an $O(\varepsilon)$-net $\mathcal{N}$
and computing the minimum subset of $P$ that contains an $\varepsilon$-approximate
tuple in $P$ for any vector $x \in \mathcal{N}$. This is referred to as
the \textsc{HittingSet} algorithm in~\cite{DBLP:conf/wea/AgarwalKSS17}.
However, due to the high complexity of $\varepsilon$-nets,
\textsc{HittingSet} suffers from a low efficiency for \grmr computation,
particularly so in high dimensions.

\subsection{Voronoi Diagram and Delaunay Graph}\label{subsec:background}

In this subsection, we provide the background information on \emph{Voronoi diagram}
and \emph{Delaunay graph}, as they provide the foundations of our algorithms.

Voronoi diagrams and Delaunay graphs~\cite{DBLP:journals/csur/Aurenhammer91} are geometric data 
structures that are widely used in many application domains such as
similarity search~\cite{DBLP:journals/pami/MalkovY20,DBLP:conf/nips/MorozovB18},
mesh generation~\cite{DBLP:journals/comgeo/Shewchuk02},
and clustering~\cite{DBLP:conf/compgeom/InabaKI94}.
They are defined in terms of a similarity measure between data points, which is set to
the \emph{inner product} of vectors in this paper.
The inner-product variants of these structures have been used in the literature
on graph-based approaches to \emph{maximum inner product search}~\cite{DBLP:conf/nips/ZhouTX019,DBLP:conf/nips/MorozovB18,DBLP:conf/emnlp/TanZXL19}.
To the best our knowledge, our work is the first to
introduce them into regret minimization problems.

\textbf{Voronoi Diagram:}
Let the $n$ points in a database $P$ be indexed by $[1,n]$.
The Voronoi diagram of $P$ is a collection of \emph{Voronoi cells},
one defined for each point.
The Voronoi cell $R(p_i)$ of point $p_i \in P$ is defined as
the vector set 
\begin{equation*}
  R(p_i) \coloneqq
  \{x \in \mathbb{R}^{d} \setminus \{0\} \,:\, \langle p_i,x \rangle \geq \omega(x,P)\}
\end{equation*}
i.e., the set of (non-zero) vectors based on which $p_i$ is top-ranked.
A point $p_i\in P$ is an \emph{extreme point} if and only if its Voronoi cell $R(p_i)$
is non-empty, i.e., there is at least one vector
based on which $p_i$ is top-ranked in $P$.
It is not difficult to see that the set \extremes of extreme points consists
exactly of the set of vertices of $\mathcal{CH}(P)$.

To illustrate \grmr geometrically, it is useful to
define the $\varepsilon$-approximate Voronoi cell $R_{\varepsilon}(p_i)$ of a point $p_i$ 
as the vector set
\begin{equation*}
  R_{\varepsilon}(p_i) \coloneqq \{x \in \mathbb{R}^d \setminus \{0\} \,:\,
  \langle p_i,x \rangle \geq (1-\varepsilon)\cdot \omega(x,P)\}
\end{equation*}
i.e., the set of (non-zero) vectors based on which the score of $p_i$ is
at least a $(1-\varepsilon)$-approximation of the score of the top-ranked point.
From a geometric perspective, \grmr can be seen as a set cover problem
of finding the minimum subset of points such that the union of their $\varepsilon$-approximate
Voronoi cells is $\mathbb{R}^d \setminus \{0\}$.

\textbf{Delaunay Graph:}
The \emph{\underline{I}nner-\underline{P}roduct \underline{D}elaunay \underline{G}raph}
(\delaunay) of a database $P$ records the adjacency information of the Voronoi cells of
extreme points in $P$. Formally, it is an undirected graph $\delgraph(P)=(V,E)$
where $V=P$, and there exists an edge $\{p_i,p_j\}\in E$
if and only if $R(p_i) \cap R(p_j) \neq \varnothing$.
Intuitively, an \delaunay has an edge between two extreme points that tie
as top-ranked for some utility vector.
Typically, the number of edges in $\delgraph(P)$ grows exponentially with $d$,
and thus building an exact \delaunay is often not computationally feasible in high
dimensions~\cite{DBLP:conf/emnlp/TanZXL19}.
Nevertheless, for a point set $P \subset \mathbb{R}^2$, since $\mathcal{CH}(P)$ is
a convex polygon and each edge of $\delgraph(P)$ exactly corresponds
to an edge of $\mathcal{CH}(P)$, we can easily build $\delgraph(P)$
based on $\mathcal{CH}(P)$ and the maximum degree of $\delgraph(P)$ is $2$.

\begin{figure}
  \centering
  \subfigure[Voronoi cells and IPDG]{
    \includegraphics[width=0.25\textwidth]{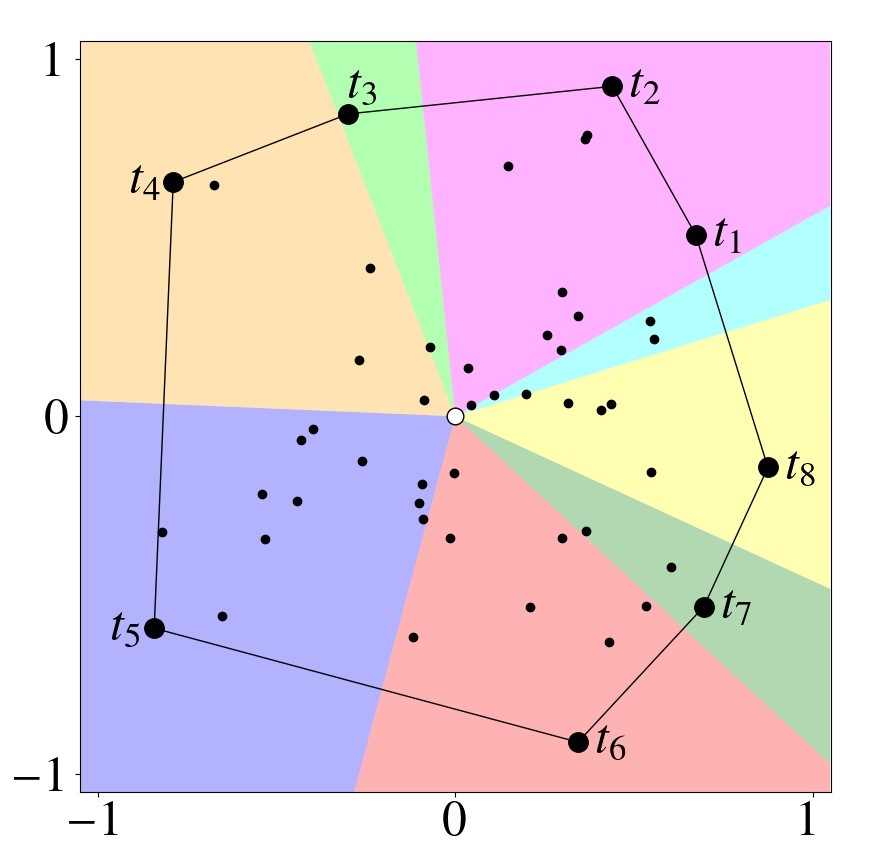}
    \label{fig:ipdg}
  }
  \hspace{1em}
  \subfigure[Attribute values]{
    \includegraphics[width=0.15\textwidth]{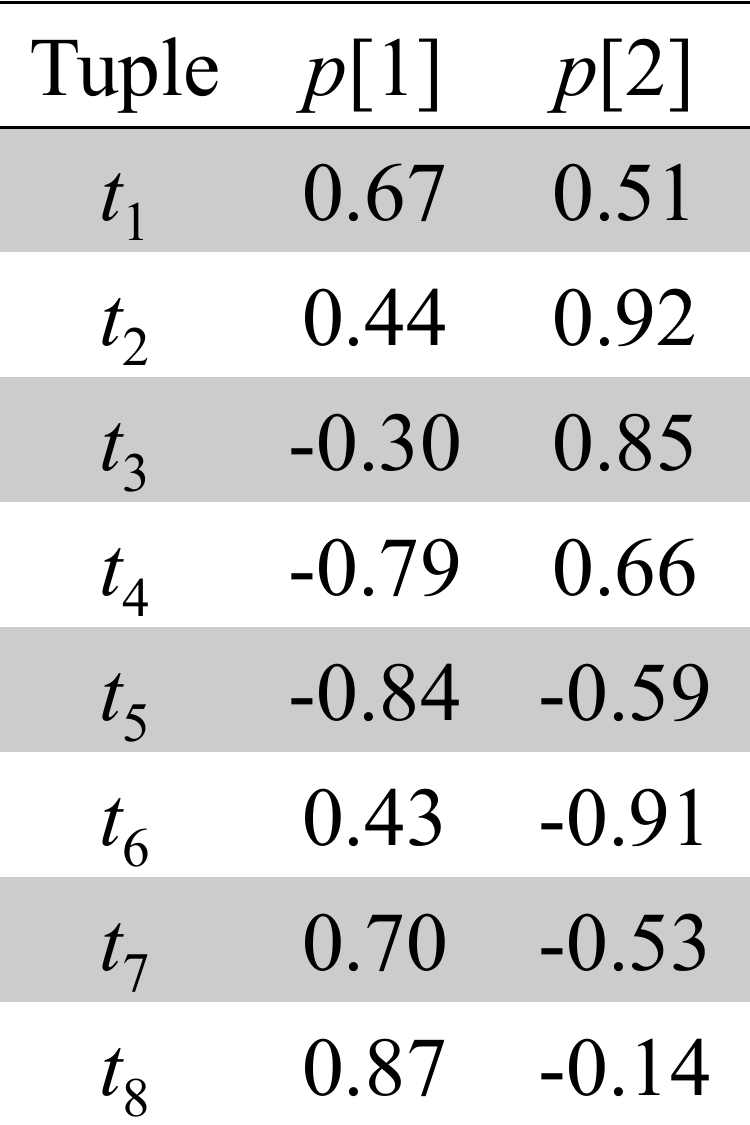}
    \label{fig:tuples}
  }
  \caption{Example of two-dimensional dataset}
  \label{fig:example}
\end{figure}

Figure~\ref{fig:example} illustrates a point set $P$ in $\mathbb{R}^{2}$.
The extreme points $\extremes=\{t_1,\ldots,t_8\}$ are highlighted
and their Voronoi cells are filled with different colors in Figure~\ref{fig:ipdg}.
Note that an extreme point $t$ does not necessarily fall in its Voronoi cell,
as another point $t'$ may be top-ranked for vector $t$,
i.e., $ \langle t,t \rangle < \langle t',t \rangle $.
This explains why $t_1$ is not contained in $R(t_1)$ (the cyan region)
but in $R(t_2)$ (the pink region).
Figure~\ref{fig:ipdg} also illustrates the IPDG of the point set, where any two extreme points
with adjacent Voronoi cells are connected.

\section{Exact Algorithm in 2D}\label{sec:alg:2d}

In this section, we present \exact, our optimal polynomial-time algorithm
for \grmr in $\mathbb{R}^{2}$.
Before delving into the details, we first present some high-level ideas
on the geometric properties of \grmr in $\mathbb{R}^{2}$ and our approach to solving it.

First of all, the set of all linear ranking functions in $\mathbb{R}^{2}$ corresponds to
the one-dimensional unit sphere $\mathbb{S}^1$, i.e., the unit circle,
and thus both exact and $\varepsilon$-approximate
Voronoi cells correspond to arcs of $\mathbb{S}^1$.
Therefore, \grmr in $\mathbb{R}^{2}$ can be equivalently formulated as the problem of
\emph{finding the minimum number of arcs (cells) that fully cover $\mathbb{S}^1$}.
One straightforward approach to this problem is to first compute the $\varepsilon$-approximate
Voronoi cells for an input parameter $\varepsilon$ and then solve the
aforementioned covering problem. Then, we observe that the exact Voronoi cell of each extreme
point can be computed via simple comparisons with its two neighbors in the \delaunay
(recall that the degree of each vertex in $\delgraph(P)$ is $2$ when $d = 2$).
Moreover, computing an $\varepsilon$-approximate Voronoi cell of a point requires
comparisons with merely extreme points. After the Voronoi cells are computed,
the arc-covering problem is polynomial because it is one-dimensional.
In the optimal algorithm we propose below, we further manages to avoid explicit
computations of $\varepsilon$-approximate Voronoi cells and reduce the cost of the
arc-covering problem via a graph-based transformation.

\subsection{The \exact Algorithm}

The main idea behind \exact is to build a directed graph $G$ such that
each (directed) cycle corresponds to an $\varepsilon$-regret set.
Then, the shortest cycle in $G$ provides an optimal solution for \grmr.
The details of \exact is shown in Algorithm~\ref{alg:grmr:2d}.
Note that, besides the database $P$ and parameter $\varepsilon$,
the set \extremes of extreme points is also assumed to be given as input.
For any given $P$, $X$ can be computed using any existing convex hull algorithm,
such as Graham's scan~\cite{DBLP:journals/ipl/Graham72}
and Qhull~\cite{DBLP:journals/toms/BarberDH96}.

For ease of illustration, we arrange all points and vectors in a counterclockwise direction based on their corresponding angles from $0$ to $2\pi$ in the polar coordinate system. 
For a point $p \in \mathbb{R}^2$ or a vector $x \in \mathbb{R}^2$, we use $\vecangle(p)$ or $\vecangle(x) \in [0,2\pi)$
to denote the angle of $p$ or $x$, respectively.
Furthermore, given any two points (vectors) $a,b \in \mathbb{R}^2$, we use $P[a,b] \subseteq P$
to denote a subset of points in $P$ whose angles are in range $[\vecangle(a),\vecangle(b)]$
if $\vecangle(a)<\vecangle(b)$ or ranges $[\vecangle(a),2\pi)$ and $[0,\vecangle(b)]$
if $\vecangle(a)>\vecangle(b)$.

\begin{algorithm}
  \caption{\exact}\label{alg:grmr:2d}
  \Input{Database $P$, Extreme points $X \subseteq P$,
         Parameter $\varepsilon \in (0,1)$}
  \Output{Optimal result $Q^*_{\varepsilon}$ of GRMR}
  \For{$i \gets 1,\ldots,|\extremes|$\label{ln:cand:s}}{
    Let $x^*_i$ be the vector $x \in \mathbb{S}^1$ s.t.~$\langle t_i,x \rangle = \langle t_{i+1},x \rangle \wedge \langle t_i,x \rangle > 0 $\;
  }
  Initialize the candidate set $S \gets X$\;
  \ForEach{point $ p \in P \setminus X $}{
    \If{$ \exists i \in [|\extremes|] : \langle p,x^*_i \rangle \geq (1-\varepsilon) \cdot
    \langle t_i,x^*_i \rangle $\label{ln:cand:condition}}{
      $S \gets S \cup \{p\}$\;
    }
  }
  Arrange all points of $S$ in a counterclockwise direction and index them by
  $[1,\ldots,|S|]$ accordingly\;\label{ln:cand:t}
  Initialize a directed graph $G=(U,A)$ where $U=S$ and $A=\varnothing$\;\label{ln:graph:s}
  \ForEach{$ i,j \in [|S|]$ and $i \neq j $}{
    \textbf{if} $\angle s_i O s_j \geq \pi$ \textbf{then} \textbf{continue}\;
    $l_{ij} \gets$ \textsc{ComputeRegret}$(s_i,s_j)$\;
    \If{$ l_{ij} \leq \varepsilon $}{
      Add a directed edge $(s_i \rightarrow s_j)$ to $A$\;
      \label{ln:graph:t}
    }
  }
  $C^* \gets$ \textsc{ShortestCycle}$(G)$\;\label{ln:cycle}
  \Return{$Q^*_{\varepsilon} \gets \{q : q \in C^*\}$}\;\label{ln:result}
  \BlankLine
  \Fn{\textnormal{\textsc{ComputeRegret}$(s_i,s_j)$}\label{ln:regret:s}}{
    Find the subset of extreme points $X[s_i,s_j] \subseteq X$\;
    Let $x^*$ be the vector $x \in \mathbb{S}^1$ s.t.~$\langle s_i,x \rangle = \langle s_j,x \rangle \wedge \langle s_i,x \rangle \geq 0$\;
    \Return{$ l_{ij} \gets \max_{t \in X[s_i,s_j]} \big(1 - \frac{\langle s_i,x^* \rangle}{\langle t,x^* \rangle}\big) $}\;\label{ln:regret:t}
  }
\end{algorithm}

\exact proceeds in three steps as follows.

\textbf{Step 1 (Candidate Selection, Lines~\ref{ln:cand:s}--\ref{ln:cand:t}):}
The purpose of this step is to identify a subset $S \subseteq P$ of points that may be included
in the solution, while pruning from consideration the points that are certainly not.
To find the candidate points, \exact essentially ignores all points that are never
within an $(1-\varepsilon)$-approximation from a top-ranked point.
In other words, $p \in S$ if and only if the $\varepsilon$-approximate Voronoi cell of $p$
is non-empty, i.e., $R_{\varepsilon}(p) \neq \varnothing$.
Note that this condition holds for any extreme point in $X$.
To determine whether it holds for a point $p \in P \setminus X$ or not,
\exact needs to find at least one vector $x \in R_{\varepsilon}(p)$
by comparing $p$ with each extreme point. In particular, for a point $p \in P \setminus X$
and an extreme point $t \in \extremes$, it should decide if $R(t) \cap R_{\varepsilon}(p)$
is empty. To this end, it computes the minimum of the regret ratio of $p$
w.r.t.~$t$ in $R(t)$. Indeed, this minimum is always reached at a boundary vector
where the score of $t$ is equal to the score of the previous/next extreme point.
The correctness of the above results will be analyzed
in Section~\ref{subsec:analysis:2d}.

Putting everything together, \exact first computes the boundary vector $x^*_i$ of $R(t_i)$
and $R(t_{i+1})$ for each pair of neighboring extreme points $t_i,t_{i+1} \in X$
(when $i = |\extremes|$, $t_{|\extremes|}$ and $t_1$ are used to compute $x^*_{|X|}$).
Then, it checks the regret ratio of $p$ for each $x^*_i$ ($i \in [1,|\extremes|]$)
and adds $p$ to $S$ if its regret ratio is at most $\varepsilon$ for some $x^*_i$.
Finally, all points in $S$ are arranged in a counterclockwise direction from $0$ to $2\pi$
and indexed by $[1,\ldots,|S|]$ accordingly.

\textbf{Step 2 (Graph Construction, Lines~\ref{ln:graph:s}--\ref{ln:graph:t}):}
The purpose of this step is to identify each pair of points in $S$ that,
if included in the solution, would make the points between them redundant
(in the sense that the solution without these points is still an $\varepsilon$-regret set).
This is achieved by building a directed graph $G=(U,A)$ as follows.
First, $G$ includes all points in $S$ as vertices.
For each two vertices $s_i,s_j \in U$ ($i \neq j$), there will be
a directed edge from $s_i$ to $s_j$ if the maximum regret ratio $l_{ij}$
caused by removing all points in $S$ between them (excluding themselves),
i.e., $S[s_{i+1},s_{j-1}]$, is at most $\varepsilon$.
Note that if the vector angle between $s_i$ and $s_j$ is greater than $\pi$,
the regret led by removing points between them will be unbounded.
In this case, the regret computation will be skipped directly.

The procedure to compute $l_{ij}$ is given in Lines~\ref{ln:regret:s}--\ref{ln:regret:t}.
Firstly, it retrieves all extreme points $X[s_i,s_j]$ between $s_i$ and $s_j$.
If there is no extreme point in $X[s_i,s_j]$ (other than $s_i$ or $s_j$ possibly),
then it sets $l_{ij}=0$, leading to the inclusion of an edge from $s_i$ to $s_j$
for any $\varepsilon \geq 0$.
Otherwise, to compute $l_{ij}$, it first finds the boundary vector $x^*$ where $s_i$ and $s_j$
have the same score. Similar to the case of \emph{candidate selection},
the maximum of the regret ratio
of $\{s_i,s_j\}$ w.r.t.~any extreme point $t$ between them is also always reached
at the boundary vector $x^*$. Subsequently, it computes the regret ratio of $\{s_i,s_j\}$
w.r.t.~each extreme point $t \in X[s_i,s_j]$ for vector $x^*$
and finally returns the maximum one as $l_{ij}$.

\textbf{Step 3 (Result Computation, Lines~\ref{ln:cycle}--\ref{ln:result}):}
Given a directed graph $G=(U,A)$ constructed according to the above procedure,
the final step is to find the shortest cycle $C^*$ of $G$ and
return the vertices of $C^*$ as the optimal result $Q^*_{\varepsilon}$ of GRMR on
database $P$.
In our implementation, we run Dijkstra's algorithm~\cite{DBLP:journals/nm/Dijkstra59}
from each vertex of $G$ to find the shortest cycle $C^*$.

\begin{figure}
  \centering
  \subfigure[Candidate set $S$]{
    \label{subfig:2d:cand}
    \includegraphics[width=0.25\textwidth]{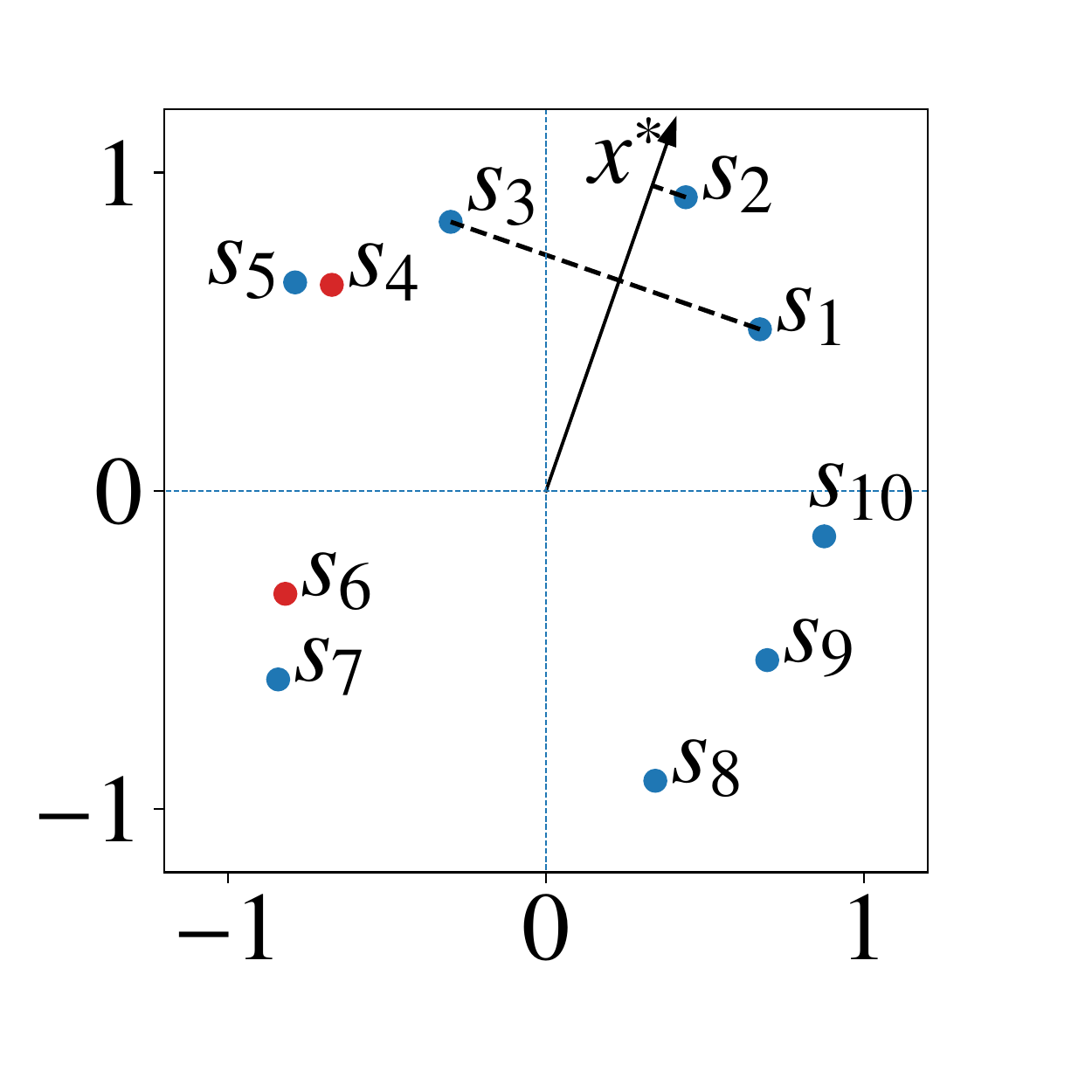}
  }
  \hspace{1em}
  \subfigure[Graph $G$ for $\varepsilon=0.1$]{
    \label{subfig:2d:cycle}
    \includegraphics[width=0.25\textwidth]{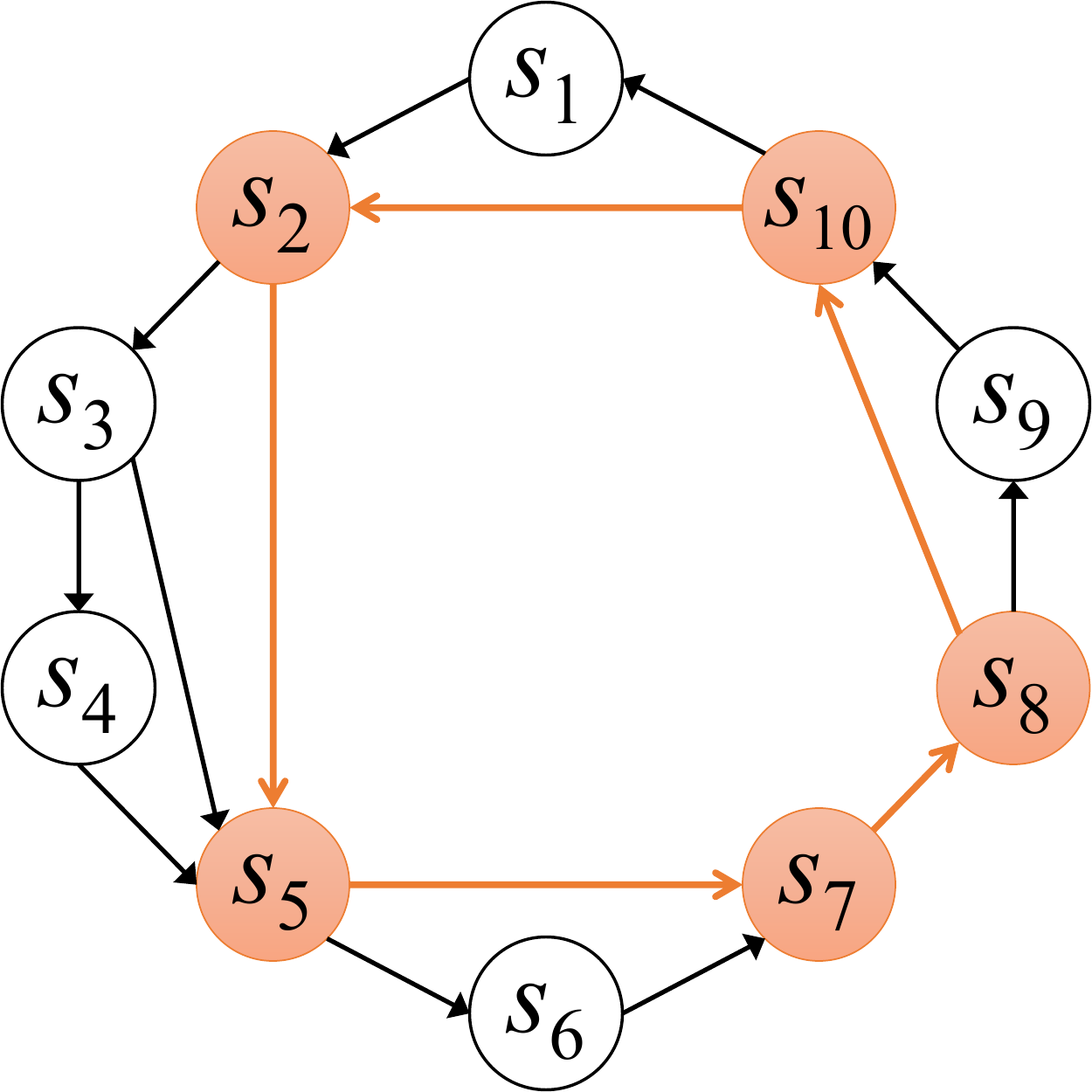}
  }
  \caption{Example for \exact in 2D}
  \label{fig:2d:example}
\end{figure}

\begin{example}
  We give an example of Algorithm~\ref{alg:grmr:2d} in Figure~\ref{fig:2d:example}.
  In Figure~\ref{subfig:2d:cand}, we illustrate the candidate set $S$ for $\varepsilon=0.1$
  extracted from the point set in Figure~\ref{fig:ipdg}.
  The extreme points (resp.~$t_1$--$t_8$ in Figure~\ref{fig:tuples}) are blue
  and the remaining two candidates ($s_4=(-0.67,0.65)$ and $s_6=(-0.82,-0.32)$) are red.
  We also show how to compute $l_{13}$ for $s_1$ and $s_3$ in Figure~\ref{subfig:2d:cand}.
  First of all, the boundary vector $x^*=(0.34,0.97)$
  with $\langle s_1,x^* \rangle = \langle s_3,x^* \rangle$
  is found. Then, the regret ratio of $\{s_1,s_3\}$ over $s_2$ for $x^*$
  is computed as $l_{13}$. Since $l_{13} \approx 0.3 > 0.1$,
  the edge $(s_1 \rightarrow s_3)$ is not added to $G$.
  The graph $G$ constructed for $\varepsilon=0.1$ is shown in Figure~\ref{subfig:2d:cycle}.
  The shortest cycle $C^*$ of $G$ is highlighted in orange and
  $Q^*_{0.1}=\{s_2,s_5,s_7,s_8,s_{10}\}$ is the optimal result of GRMR for $\varepsilon=0.1$.
\end{example}

\subsection{Theoretical Analysis}\label{subsec:analysis:2d}

Next, we will analyze the correctness and time complexity of
Algorithm~\ref{alg:grmr:2d} theoretically. The road map of our analysis is as follows.
Firstly, Lemma~\ref{lm:cand} shows the validity of \emph{candidate selection}.
Secondly, Lemma~\ref{lm:regret} proves the correctness of \emph{regret computation}
and \emph{graph construction}. Thirdly, Lemma~\ref{lm:cycle} states the equivalence between
computing the optimal result of GRMR on $P$ and finding the shortest cycle in $G$.
Combining Lemmas~\ref{lm:cand}--\ref{lm:cycle}, we prove the optimality of
Algorithm~\ref{alg:grmr:2d} in Theorem~\ref{thm:2d:correctness}.
Finally, we analyze the time complexity of Algorithm~\ref{alg:grmr:2d}
in Theorem~\ref{thm:2d:complexity}.

\begin{lemma}\label{lm:cand}
  A point $p \in S$ if and only if $R_{\varepsilon}\neq\varnothing$.
\end{lemma}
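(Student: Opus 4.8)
The plan is to show that the membership test in Line~\ref{ln:cand:condition} of Algorithm~\ref{alg:grmr:2d} evaluates the regret ratio of the singleton $\{p\}$ precisely where that ratio is minimized over $\mathbb{S}^1$, so the test succeeds exactly when $R_{\varepsilon}(p)$ is non-empty. First I would restate the goal analytically: since $R_{\varepsilon}(p)=\{x : \langle p,x\rangle \ge (1-\varepsilon)\,\omega(x,P)\}$, we have $R_{\varepsilon}(p)\neq\varnothing$ if and only if some $x\in\mathbb{S}^1$ satisfies $l_x(\{p\})=1-\frac{\langle p,x\rangle}{\omega(x,P)}\le\varepsilon$, i.e.\ if and only if $\min_{x\in\mathbb{S}^1} l_x(\{p\})\le\varepsilon$. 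The extreme points are handled immediately: each $t\in X$ has $R(t)\neq\varnothing$ and $R(t)\subseteq R_{\varepsilon}(t)$, so both sides of the claimed equivalence hold, consistent with $X\subseteq S$. It remains to treat a non-extreme point $p$.

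Next I would invoke the fact (which follows from $\extremes$ being the vertices of $\mathcal{CH}(P)$) that the Voronoi cells $R(t_1),\dots,R(t_{|X|})$ of the extreme points partition $\mathbb{S}^1$ into arcs whose shared endpoints are exactly the boundary vectors $x^*_1,\dots,x^*_{|X|}$ computed in Lines~\ref{ln:cand:s}. On the arc $R(t_i)$ the top score is $\omega(x,P)=\langle t_i,x\rangle>0$ (using that $\omega(x,P)>0$ for every $x$), so there $l_x(\{p\})=1-g_i(x)$ with $g_i(x)=\frac{\langle p,x\rangle}{\langle t_i,x\rangle}$. The crux of the argument, and the step I expect to be the main obstacle, is to prove that $g_i$ is monotone along each arc, so that $l_x(\{p\})$ attains its minimum over $R(t_i)$ at one of the endpoints $x^*_{i-1}$ or $x^*_i$ (indices read cyclically).

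To establish monotonicity I would parametrize $x=(\cos\theta,\sin\theta)$ and differentiate $g_i(\theta)=\frac{\langle p,x\rangle}{\langle t_i,x\rangle}$. Writing $x^{\perp}=(-\sin\theta,\cos\theta)$, the numerator of $g_i'$ is $\langle p,x^{\perp}\rangle\langle t_i,x\rangle-\langle p,x\rangle\langle t_i,x^{\perp}\rangle$, which by the standard planar identity equals the cross product $p[2]\,t_i[1]-p[1]\,t_i[2]$, a constant independent of $\theta$. Since the denominator $\langle t_i,x\rangle^2$ is strictly positive throughout the arc, $g_i'$ keeps a constant sign, so $g_i$ is monotone and $l_x(\{p\})$ is minimized at a boundary vector. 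Taking the minimum across all arcs then shows that $\min_{x\in\mathbb{S}^1} l_x(\{p\})$ is attained at some $x^*_i$.

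Finally I would evaluate the test at the minimizer. At the boundary $x^*_i$ both $t_i$ and $t_{i+1}$ are top-ranked, so $\omega(x^*_i,P)=\langle t_i,x^*_i\rangle$ and therefore $l_{x^*_i}(\{p\})\le\varepsilon \iff \langle p,x^*_i\rangle\ge(1-\varepsilon)\,\langle t_i,x^*_i\rangle$, which is exactly the predicate checked in Line~\ref{ln:cand:condition}. Chaining the equivalences, ``$\exists i$ satisfying Line~\ref{ln:cand:condition}'' $\iff \min_{x} l_x(\{p\})\le\varepsilon \iff R_{\varepsilon}(p)\neq\varnothing$, which gives $p\in S \iff R_{\varepsilon}(p)\neq\varnothing$ and completes the proof.
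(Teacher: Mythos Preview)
Your proof is correct and follows essentially the same structure as the paper's: decompose $\mathbb{S}^1$ into the Voronoi arcs $R(t_i)$, show that the regret ratio $1-\langle p,x\rangle/\langle t_i,x\rangle$ is monotone on each arc, and conclude that the global minimum is attained at one of the boundary vectors $x^*_i$. The only difference is in the monotonicity step: the paper rewrites the ratio in polar form as $1-\frac{\|p\|}{\|t_i\|}\cdot\frac{\cos(\theta(p)-\theta(x))}{\cos(\theta(t_i)-\theta(x))}$ and argues by cases on the sign of $\theta(p)-\theta(t_i)$, whereas your derivative computation exposing the constant cross product $p[2]\,t_i[1]-p[1]\,t_i[2]$ reaches the same conclusion without case analysis.
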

\begin{lemma}\label{lm:regret}
  For any $s_i,s_j \in S$ ($i < j$), the maximum regret ratio of $\{s_i,s_j\}$
  after removing all points in $S$ between them is at most $l_{ij}$.
\end{lemma}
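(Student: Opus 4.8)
The plan is to reduce the lemma to a one-dimensional monotonicity fact about ratios of linear scores along the unit circle $\mathbb{S}^1$, and then to recognize that the regret of the pair $\{s_i,s_j\}$ is a pointwise minimum of two such ratios, which forces its maximum to occur exactly at the tie vector $x^*$ used by \textsc{ComputeRegret}. First I would pin down which utility vectors are actually affected by deleting $S[s_{i+1},s_{j-1}]$. Among the candidates $S$, only extreme points are ever top-ranked over the whole database (an interior candidate $p$ always has $\langle p,x\rangle<\omega(x,P)$, so its exact Voronoi cell is empty). Hence removing the points strictly between $s_i$ and $s_j$ can only change the best surviving approximator on the arc $A_{ij}:=\bigcup_{t}R(t)$, where $t$ ranges over the extreme points of $X[s_i,s_j]$ lying strictly between $s_i$ and $s_j$; for every other direction the top-ranked tuple is untouched and the regret is unaffected. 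In particular, when $X[s_i,s_j]=\varnothing$ the deleted candidates are all interior, $A_{ij}$ is empty, and the claimed bound $l_{ij}=0$ is vacuously correct. I would also record that the restriction $\angle s_iOs_j<\pi$ together with Condition~1 guarantees $\vecangle(s_i)<\vecangle(t)<\vecangle(s_j)$ for each such $t$ and keeps all relevant scores positive, so that the sign computations below are valid.

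The engine is a monotonicity claim. Writing $x=(\cos\phi,\sin\phi)$ so that $\langle a,x\rangle=\|a\|\cos(\phi-\vecangle(a))$, a direct differentiation gives
\[
  \frac{d}{d\phi}\frac{\langle a,x\rangle}{\langle t,x\rangle}
  =\frac{\|a\|}{\|t\|}\cdot\frac{\sin(\vecangle(a)-\vecangle(t))}{\cos^{2}(\phi-\vecangle(t))},
\]
whose sign is constant in $\phi$. Thus for fixed $a$ and $t$ the regret $\rho^{t}_{a}(\phi)=1-\langle a,x\rangle/\langle t,x\rangle$ is monotone in $\phi$; concretely, for a strictly-between point $t$ it is increasing when $a=s_i$ (since $\vecangle(t)>\vecangle(s_i)$) and decreasing when $a=s_j$ (since $\vecangle(t)<\vecangle(s_j)$).

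Now fix an extreme point $t$ strictly between $s_i$ and $s_j$. Because the pair retains the better of $s_i,s_j$, its regret with respect to $t$ is $h_t(\phi)=1-\max(\langle s_i,x\rangle,\langle s_j,x\rangle)/\langle t,x\rangle=\min(\rho^{t}_{s_i}(\phi),\rho^{t}_{s_j}(\phi))$, the pointwise minimum of an increasing and a decreasing function, which is therefore maximized exactly where the two coincide, i.e.\ at the tie vector $x^*$ with $\langle s_i,x^*\rangle=\langle s_j,x^*\rangle$, with value $1-\langle s_i,x^*\rangle/\langle t,x^*\rangle$. To finish, for any $x\in A_{ij}$ let $t(x)\in X[s_i,s_j]$ be the top-ranked point there; then the regret of $\{s_i,s_j\}$ at $x$ equals $h_{t(x)}(x)\le h_{t(x)}(x^*)=1-\langle s_i,x^*\rangle/\langle t(x),x^*\rangle\le\max_{t\in X[s_i,s_j]}\big(1-\tfrac{\langle s_i,x^*\rangle}{\langle t,x^*\rangle}\big)=l_{ij}$, which is the assertion.

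The main obstacle I anticipate is not the differentiation but the bookkeeping in the first step: one must argue that the only directions whose regret can increase are those governed by the strictly-between extreme points $X[s_i,s_j]$, so that the single tie vector $x^*$ and this index set suffice. Overlooking this — for instance, attempting to bound the regret of $\{s_i,s_j\}$ over the entire angular interval $[\vecangle(s_i),\vecangle(s_j)]$ — would wrongly drag in directions served by extreme points outside $X[s_i,s_j]$ (such as one lying behind $s_i$), where the monotonicity of $\rho^{t}_{s_i}$ reverses and the maximum need no longer sit at $x^*$. Isolating $A_{ij}$ is precisely what keeps every relevant top-point ahead of $s_i$ and behind $s_j$, preserving the clean min-of-(increasing, decreasing) structure on which the whole argument rests.
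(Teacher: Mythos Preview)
Your proposal is correct and follows essentially the same route as the paper's proof: restrict attention to the extreme points $t\in X[s_i,s_j]$, observe via the cosine parametrization that $1-\langle s_i,x\rangle/\langle t,x\rangle$ and $1-\langle s_j,x\rangle/\langle t,x\rangle$ are respectively increasing and decreasing in $\phi$, and conclude that their pointwise minimum is maximized at the tie vector $x^*$. Your treatment is somewhat more explicit (computing the derivative and isolating the affected arc $A_{ij}$), but the underlying argument is the same as the paper's.
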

\begin{lemma}\label{lm:cycle}
  If $C$ is a cycle of $G$, then $Q = \{q : q \in C\}$ is an $\varepsilon$-regret set of $P$;
  If $Q$ is a locally minimal $\varepsilon$-regret set of $P$, then there exists
  a cycle $C$ of $G$ corresponding to $Q$.
\end{lemma}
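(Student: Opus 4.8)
The plan is to prove the two implications separately, using throughout the geometric reformulation established earlier: identifying each ranking direction with a point of $\mathbb{S}^1$, a subset $Q \subseteq P$ is an $\varepsilon$-regret set exactly when the $\varepsilon$-approximate Voronoi cells $\{R_{\varepsilon}(q): q \in Q\}$, which are arcs of $\mathbb{S}^1$, cover $\mathbb{S}^1$. The central bookkeeping observation is that a directed edge $(s_i \rightarrow s_j)$ is, by construction, attached to the \emph{counterclockwise} arc running from $\vecangle(s_i)$ to $\vecangle(s_j)$: the quantity $l_{ij}$ returned by \textsc{ComputeRegret} is, by Lemma~\ref{lm:regret}, an upper bound on the regret of the pair $\{s_i,s_j\}$ over precisely this arc, and the skipping condition $\angle s_i O s_j < \pi$ guarantees the arc has span strictly between $0$ and $\pi$. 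I will use this dictionary to turn cycles into covers and covers back into cycles.

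\textbf{Soundness.} Given a directed cycle $C = (q_1 \rightarrow q_2 \rightarrow \cdots \rightarrow q_k \rightarrow q_1)$, I first argue its edges wind around the origin. Each edge advances the angular coordinate counterclockwise by a positive amount less than $\pi$; since $C$ returns to its start, the total advance is $0 \pmod{2\pi}$ and, being a sum of strictly positive terms, equals $2\pi t$ for some integer $t \ge 1$. Hence the concatenation of the edges' counterclockwise arcs covers all of $\mathbb{S}^1$. Then, for any direction $x \in \mathbb{S}^1$, I pick an edge $(q_\ell \rightarrow q_{\ell+1})$ whose arc contains $x$; by Lemma~\ref{lm:regret} and $l_{\ell,\ell+1}\le\varepsilon$ this yields $l_x(\{q_\ell,q_{\ell+1}\}) \le \varepsilon$, and since $\omega(x,Q)\ge\omega(x,\{q_\ell,q_{\ell+1}\})$ enlarging the set never increases the regret, so $l_x(Q)\le\varepsilon$. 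As $x$ was arbitrary, $l(Q)\le\varepsilon$ and $Q=\{q:q\in C\}$ is an $\varepsilon$-regret set.

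\textbf{Completeness.} Let $Q$ be a locally minimal $\varepsilon$-regret set. I first note $Q\subseteq S$: a point with $R_{\varepsilon}=\varnothing$ contributes nothing to the cover, so by Lemma~\ref{lm:cand} and local minimality no such point lies in $Q$. Order $Q$ counterclockwise as $q_1,\dots,q_m$; the goal is to show that every cyclically consecutive pair $(q_\ell,q_{\ell+1})$ is an edge of $G$, so that $(q_1\rightarrow\cdots\rightarrow q_m\rightarrow q_1)$ is a cycle corresponding to $Q$. This needs two facts. First, the counterclockwise gap between $q_\ell$ and $q_{\ell+1}$ is $<\pi$: if some gap were $\ge\pi$, its midpoint direction $x_0$ would lie angularly more than $\tfrac{\pi}{2}$ from every point of $Q$ --- from the two neighbors because the gap is wide, and from every other $q$ because $\langle q,x\rangle>0$ on $R_{\varepsilon}(q)$ (here $\omega(x,P)>0$ and $\varepsilon<1$) forces $R_{\varepsilon}(q)$ into the open half-circle about $\vecangle(q)$ --- so $x_0$ would be uncovered, contradicting that $Q$ is an $\varepsilon$-regret set.

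The second fact, $l_{\ell,\ell+1}\le\varepsilon$, is the crux and is where local minimality does the real work. I must show the pair $\{q_\ell,q_{\ell+1}\}$ \emph{alone} already covers the arc between them with regret $\le\varepsilon$; the danger is that, for some interior direction of that arc, the $Q$-maximizer is a \emph{third} point $q_r$ of large norm, so that $\{q_\ell,q_{\ell+1}\}$ fails there even though $Q$ does not. The plan is to exclude this via irredundancy: since $Q$ is locally minimal, the arc cover $\{R_{\varepsilon}(q)\}$ is irredundant, and a minimal cover of a circle by arcs admits a cyclic ordering in which each transition region is covered solely by the two neighboring arcs. I will argue this cyclic order of cells coincides with the angular order of the points (again using $R_{\varepsilon}(q)\subseteq(\vecangle(q)-\tfrac{\pi}{2},\vecangle(q)+\tfrac{\pi}{2})$), so that the two cells straddling the gap are exactly $R_{\varepsilon}(q_\ell)$ and $R_{\varepsilon}(q_{\ell+1})$. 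Consequently the $Q$-maximizer for every direction $x$ in the gap is $q_\ell$ or $q_{\ell+1}$, whence $l_x(\{q_\ell,q_{\ell+1}\})=l_x(Q)\le\varepsilon$; maximizing over the gap and invoking Lemma~\ref{lm:regret} gives $l_{\ell,\ell+1}\le\varepsilon$. I expect this to be the main obstacle: formalizing that local minimality forces an irredundant cover whose cell order matches the angular order and whose consecutive cells privately own each transition, since otherwise a distant large-norm point could hijack the middle of a consecutive arc and break the consecutive edge. A removal (exchange) argument showing that any such hijacking would render $q_\ell$ or $q_{\ell+1}$ redundant, contradicting local minimality, should close this gap.
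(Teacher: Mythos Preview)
Your approach is essentially the paper's: soundness via the edge-arcs covering $\mathbb{S}^1$ (your winding-number argument makes this more explicit than the paper does), and completeness by ordering $Q$ counterclockwise and arguing, via local minimality, that each consecutive pair is an edge of $G$.

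One slip to fix in the completeness direction: from ``the two cells straddling the gap are $R_\varepsilon(q_\ell)$ and $R_\varepsilon(q_{\ell+1})$'' you deduce ``the $Q$-maximizer for every $x$ in the gap is $q_\ell$ or $q_{\ell+1}$''. This does not follow and need not hold (a large-norm $q_{\ell-1}$ can be the $Q$-maximizer at some $x$ in the gap even though $x\in R_\varepsilon(q_\ell)$). What you actually need---and what the irredundant-cover structure gives you directly---is $x \in R_\varepsilon(q_\ell)\cup R_\varepsilon(q_{\ell+1})$ for all $x$ in the gap; this already yields $l_x(\{q_\ell,q_{\ell+1}\})\le\varepsilon$ without passing through the maximizer claim. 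So drop the intermediate statement about the $Q$-maximizer.

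On the redundancy step itself, the paper's route is more localized than your arc-cover framing but amounts to the same removal argument you sketch at the end. Rather than analyzing the whole arc, it works at the single vector $x^*$ where $\langle q_\ell,x^*\rangle=\langle q_{\ell+1},x^*\rangle$ (this is exactly where $l_{\ell,\ell+1}$ is evaluated): if $l_{\ell,\ell+1}>\varepsilon$ and no other $q_{i'}\in Q$ has regret $\le\varepsilon$ at $x^*$, then $l_{x^*}(Q)>\varepsilon$, contradicting that $Q$ is an $\varepsilon$-regret set; if some $q_{i'}$ with $i'<\ell$ (resp.\ $i''>\ell+1$) does cover $x^*$, then $q_\ell$ (resp.\ $q_{\ell+1}$) is redundant, contradicting local minimality. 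Your global irredundant-cover argument and the paper's pointwise argument at $x^*$ are two packagings of the same contradiction.
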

Please refer to Appendix~\ref{proof:cand}--\ref{proof:cycle} for the proofs of
Lemma~\ref{lm:cand}-\ref{lm:cycle}.

\begin{theorem}\label{thm:2d:correctness}
  The result $Q^*_{\varepsilon}$ returned by Algorithm~\ref{alg:grmr:2d}
  is optimal for GRMR with parameter $\varepsilon$ on database $P$.
\end{theorem}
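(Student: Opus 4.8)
The plan is to derive the theorem as a corollary of Lemmas~\ref{lm:cand}--\ref{lm:cycle}, establishing two things: that $Q^*_{\varepsilon}$ is a valid $\varepsilon$-regret set (soundness) and that it is a smallest one (optimality). Soundness is immediate from the forward direction of Lemma~\ref{lm:cycle}: because $C^*$ is a cycle of $G$, its vertex set $Q^*_{\varepsilon}=\{q:q\in C^*\}$ is an $\varepsilon$-regret set of $P$, so no additional argument is required for validity.

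For optimality I would run a short extremal comparison. Let $Q^{\mathrm{opt}}$ be a smallest $\varepsilon$-regret set of $P$. First I would note that $Q^{\mathrm{opt}}$ must be \emph{locally minimal}: if any single point could be deleted while preserving the $\varepsilon$-regret property, the result would be a strictly smaller $\varepsilon$-regret set, contradicting the minimality of $|Q^{\mathrm{opt}}|$. The backward direction of Lemma~\ref{lm:cycle} then produces a cycle $C$ of $G$ with $\{q:q\in C\}=Q^{\mathrm{opt}}$. Since \textsc{ShortestCycle} returns a shortest cycle, $|C^*|\le|C|$; converting cycle lengths to set sizes gives $|Q^*_{\varepsilon}|\le|Q^{\mathrm{opt}}|$. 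Combined with soundness (so $Q^*_{\varepsilon}$ is itself a feasible $\varepsilon$-regret set) and the minimality of $Q^{\mathrm{opt}}$, this forces $|Q^*_{\varepsilon}|=|Q^{\mathrm{opt}}|$, hence $Q^*_{\varepsilon}$ is optimal.

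The step I expect to demand the most care is the exact bookkeeping between the edge-length of a cycle in $G$ and the cardinality of the point set it induces. The hard part is to argue that the relevant cycles may be taken to be \emph{simple}: a shortest cycle never revisits a vertex (otherwise it could be shortcut into a shorter one), so its number of edges equals the number of distinct vertices, giving $|Q^*_{\varepsilon}|=|C^*|$, and the same identity holds for the cycle $C$ obtained from the locally minimal $Q^{\mathrm{opt}}$. A companion subtlety, which I would fold into the use of Lemmas~\ref{lm:cand} and~\ref{lm:cycle}, is that a locally minimal $\varepsilon$-regret set cannot contain any point $p$ with $R_{\varepsilon}(p)=\varnothing$, since such a point provides no $(1-\varepsilon)$-approximation for any vector and may be dropped; by Lemma~\ref{lm:cand} every locally minimal solution therefore lies inside the candidate set $S=U$ and so admits a cycle representation in $G$. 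Once these correspondences are pinned down, the inequality $|C^*|\le|C|$ closes the argument.
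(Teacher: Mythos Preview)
Your proposal is correct and follows essentially the same route as the paper: both derive the theorem directly from Lemmas~\ref{lm:cand}--\ref{lm:cycle} by arguing that every locally minimal $\varepsilon$-regret set (in particular the optimum) corresponds to a cycle of $G$, so the shortest cycle yields the smallest such set. Your write-up is in fact more careful than the paper's terse version, explicitly spelling out the simple-cycle/vertex-count bookkeeping and the reason the optimum is locally minimal, details the paper leaves implicit.
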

\begin{proof}
  Based on Lemma~\ref{lm:cand}, Algorithm~\ref{alg:grmr:2d} excludes all
  redundant points from computation. According to Lemmas~\ref{lm:regret}
  and~\ref{lm:cycle}, it is guaranteed that any locally minimal $\varepsilon$-regret
  set of $P$ forms a cycle in $G$. Therefore, the optimal result $Q^*_{\varepsilon}$ of
  GRMR on $P$, i.e., the smallest $\varepsilon$-regret set of $P$, must correspond to
  the shortest cycle of $G$.
  Hence, we conclude that Algorithm~\ref{alg:grmr:2d} is optimal for GRMR in $\mathbb{R}^2$.
\end{proof}

\begin{theorem}\label{thm:2d:complexity}
  The time complexity of Algorithm~\ref{alg:grmr:2d} is $O(n^3)$.
\end{theorem}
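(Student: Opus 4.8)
The plan is to bound the running time of each of the three steps of Algorithm~\ref{alg:grmr:2d} separately and then observe that the total is dominated by the cubic contributions. Throughout I would use the trivial bounds $|\extremes| \leq n$ and $|S| \leq n$, since both the set of extreme points and the candidate set are subsets of $P$. For \textbf{Step 1 (candidate selection)}, computing the $|\extremes|$ boundary vectors $x^*_i$, one per pair of neighboring extreme points, costs $O(1)$ each and hence $O(n)$ in total. For each of the at most $n$ non-extreme points $p \in P \setminus \extremes$, the membership test on Line~\ref{ln:cand:condition} compares $p$ against all $|\extremes| = O(n)$ boundary vectors, costing $O(n)$ per point and $O(n^2)$ overall; sorting the points of $S$ into counterclockwise order adds $O(n \log n)$. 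Thus Step 1 runs in $O(n^2)$ time.

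Next I would analyze \textbf{Step 2 (graph construction)}, which I expect to be one of the two dominant steps. There are $O(|S|^2) = O(n^2)$ ordered pairs $(s_i, s_j)$ to examine. For each pair, the angle test and the computation of the boundary vector $x^*$ take $O(1)$ time, while the cost of \textsc{ComputeRegret} is governed by the loop over the extreme points $X[s_i,s_j]$ lying in the arc between $s_i$ and $s_j$. Since $|X[s_i,s_j]| \leq |\extremes| = O(n)$, retrieving these points from a presorted list and evaluating the maximum regret ratio on Line~\ref{ln:regret:t} both take $O(n)$ time. Hence each pair is processed in $O(n)$ time, and Step 2 runs in $O(n^2 \cdot n) = O(n^3)$.

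Finally I would analyze \textbf{Step 3 (result computation)}. The constructed graph $G=(U,A)$ has $|U| = |S| = O(n)$ vertices and $|A| = O(n^2)$ edges. Because every edge carries unit weight — traversing an edge contributes exactly one point to the solution — a single-source shortest-path computation reduces to a breadth-first search costing $O(|U| + |A|) = O(n^2)$; equivalently, an array-based Dijkstra run costs $O(|U|^2 + |A|) = O(n^2)$. To extract the fewest-vertex cycle, the algorithm performs one such computation from each of the $O(n)$ sources, giving $O(n^3)$ overall. Combining the three steps yields a total of $O(n^2) + O(n^3) + O(n^3) = O(n^3)$, as claimed.

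The step requiring the most care is the shortest-cycle computation in Step 3: the bound $O(n^3)$ holds only if each single-source run is charged as $O(n^2)$, which demands either exploiting the uniform edge weights (so that BFS suffices) or using an array-based priority queue; a naive binary-heap Dijkstra would incur an extra logarithmic factor and overshoot the stated bound. A secondary subtlety, in Step 2, is that although $S$ may be substantially larger than $\extremes$, the inner maximization inside \textsc{ComputeRegret} ranges only over extreme points, so the per-pair cost never exceeds $O(|\extremes|) = O(n)$; this is precisely why the quadratic number of pairs does not blow up beyond cubic total. No finer accounting is needed, since sharpening any individual bound would not alter the $O(n^3)$ verdict.
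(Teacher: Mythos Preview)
Your proposal is correct and follows the same three-step decomposition as the paper, arriving at the same $O(n^3)$ bound. The only minor differences are that the paper gets a sharper $O(n\log|\extremes|)$ bound for Step~1 via binary search (versus your coarser but adequate $O(n^2)$), and for Step~3 the paper invokes Fibonacci-heap Dijkstra~\cite{DBLP:journals/jacm/FredmanT87} or the specialized shortest-cycle algorithm of~\cite{DBLP:conf/soda/OrlinS17}, whereas your observation that the edges carry unit weight and hence BFS suffices is actually a cleaner route to the same $O(n^3)$ total.
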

\begin{proof}
  Firstly, the time complexity of candidate selection is $O(|\extremes|+n\cdot\log{|\extremes|})$.
  Here, a binary search can be used to find the index $i$
  and thus it takes $O(\log{|\extremes|})$ time to decide whether $p$ is added to $S$ or not.
  Secondly, the time complexity of graph construction is $O(|S|^2\cdot|\extremes|)$.
  Thirdly, the time complexity of finding the shortest cycle in a directed graph
  is $O(|U|\cdot|A|+|U|^2\cdot\log{|U|})$
  when the variant of Dijkstra's algorithm in~\cite{DBLP:journals/jacm/FredmanT87}
  is used for computing the shortest path from each vertex.
  Recently, an $O(|U|\cdot|A|)$ time algorithm~\cite{DBLP:conf/soda/OrlinS17}
  has also been proposed for finding the shortest directed cycle.
  In the worst case (i.e., $\varepsilon$ is close to $1$),
  since $|S|=|U|=O(n)$ and $|A|=O(|S|^2)$, the time complexity
  of Algorithm~\ref{alg:grmr:2d} is $O(n^3)$.
  Nevertheless, if $\varepsilon$ is far away from $1$,
  it typically holds that $|S|=|U| \ll O(n)$ and $ |A| = O(|U|) $.
  In this case, the time complexity of Algorithm~\ref{alg:grmr:2d}
  is $ O(n\cdot\log{|\extremes|}+|S|^2\cdot|\extremes|) $.
\end{proof}

\section{Heuristic Algorithm in HD}\label{sec:alg:hd}

In the case of $d \geq 3$, \grmr becomes much more challenging
(see Theorem~\ref{thm:np:hardness}).
Let us first highlight some challenges of \grmr in HD.
Firstly, as discussed in Section~\ref{subsec:background},
\grmr can be seen as a geometric set-cover problem.
Solving it directly as such will require the invocation of set operations between Voronoi cells.
However, the geometric shapes of Voronoi cells in high dimensions
are convex cones defined by intersections of half-spaces, and
set operations on Voronoi cells become very inefficient.
Secondly, constructing the \delaunay exactly is also computationally intensive
when $d \geq 3$, as the number of edges in the \delaunay grows exponentially with $d$.
Thirdly, even when the Voronoi cells and \delaunay of a database are given as inputs,
finding the optimal solution of \grmr is still infeasible unless P=NP
due to the combinatorial complexity of geometric covering problems
in two or higher dimensions~\cite{DBLP:journals/ipl/FowlerPT81}.

To devise a practical heuristic that addresses the above issues, we adopt two major
simplifications in relation to \grmr. First, we only consider solutions that are subsets
of the extreme points \extremes instead of the entire database.
Empirically, this does not significantly degrade the solution quality,
as the optimal result of \grmr is often a subset of \extremes.
Secondly, when considering an extreme point $t \in \extremes$ for the solution set,
we restrict ourselves to merely two possibilities: either $t$ is in the solution;
or there exists an extreme point $t'$ in the solution whose $\varepsilon$-approximate Voronoi
cell can fully cover the Voronoi cell of $t$, in which case we say
that $t'$ \emph{dominates} $t$. In other words, we do not consider the case that $R(t)$
is covered by a \emph{union} of the $\varepsilon$-approximate Voronoi cells of two or more points,
because computing the union is very inefficient.
The above simplifications allow us to develop an approach that can be expressed in terms of a graph structure,
to which we refer as the \emph{dominance graph}, because it encapsulates
information about the dominance relationships among vertices.
The resulting approach can be seen as targeting a simplified formulation of
the original \grmr problem.

In what follows, we first describe the dominance graph  and how we obtain a simplified problem
formulation from it in Section~\ref{sec:graph-formulation}.
Subsequently, in Section~\ref{sec:heurhd}, we describe \heuristic, an efficient heuristic
algorithm for \grmr on databases of arbitrary dimensionality, and analyze its properties.
Finally, we discuss some issues on the practical implementation of \heuristic
in Section~\ref{sec:implementation}.

\subsection{Dominance Graph}\label{sec:graph-formulation}

Let the points in \extremes be indexed by $[1,\ldots,|\extremes|]$
as $\{t_1,t_2,\ldots,t_{|\extremes|}\}$. The dominance graph $\domgraph=(V,E)$
is a directed weighted graph where $V$ is the set $\extremes$ of extreme points.
A directed edge $(t_i \rightarrow t_j)$ with an associated
weight $\varepsilon_{ij} \in (0,1)$ exists if and only if the $\varepsilon_{ij}$-approximate
Voronoi cell of $t_i$ fully covers the Voronoi cell of $t_j$.
Therefore, the presence of edge $(t_i \rightarrow t_j)$ signifies that, in any solution
of \grmr with $\varepsilon \geq \varepsilon_{ij}$, $t_i$ can replace $t_j$ without
a violation of the regret constraint.

The edge weight $\varepsilon_{ij}$ for each pair of points $t_i,t_j$ can be computed
from the linear program (LP) in Eq.~\ref{eq:lp:regret}.
\begin{gather}\label{eq:lp:regret}
  \begin{aligned}
  \text{Maximize:}   \quad & 1 - t_i^{\top}x \\
  \text{Subject to:} \quad & (t_j - t)^{\top}x \geq 0, \forall t \in N(t_j) \\
                           & t_j^{\top}x = 1
  \end{aligned}
\end{gather}
In Eq.~\ref{eq:lp:regret}, $N(t)$ is the set of neighbors of $t$ in $\delgraph(P)$,
i.e., the set of extreme points whose Voronoi cells are adjacent to $R(t)$.
The first set of inequality constraints means that the feasible region of the linear program is
the Voronoi cell $R(t_j)$ of $t_j$, which is defined by the intersections of $|N(t_j)|$
closed half-spaces. Each half-space corresponds to the region where the score of $t_j$ is
greater than or equal to that of $t \in N(t_j)$. Hence,
$\langle t_j,x \rangle \geq \langle t,x \rangle \; \Leftrightarrow \; (t_j - t)^{\top} x \geq 0$.
The second constraint normalizes the score of $t_j$ to be $1$.
Under this constraint, for a given vector $x$, the regret ratio of $t_i$ w.r.t.~$t_j$
is equal to $1-t_i^{\top}x$.
The LP in Eq.~\ref{eq:lp:regret} finds the maximum regret ratio $\varepsilon_{ij}$ of $t_i$
w.r.t.~$t_j$ over the feasible region, i.e., $R(t_j)$.

After the dominance graph \domgraph is built, one can use it to compute a
(possibly suboptimal) solution of \grmr. In particular, for a parameter
$\varepsilon \in (0,1)$, an $\varepsilon$-regret set can be obtained as any
subset $S \subseteq \extremes$ of points that satisfies the following condition:
for each $t_j \in X$, either $t_j \in S$ or there exists an
edge $(t_i \rightarrow t_j)$ with $\varepsilon_{ij} \leq \varepsilon$ for some $t_i \in S$.
Let $\domgraph_{\varepsilon}=(V,E_{\varepsilon})$ be the subgraph of $\domgraph$
where $V=X$ and $E_{\varepsilon} \subseteq E$ is a subset of edges
with weights at most $\varepsilon$.
Then, a heuristic solution of \grmr can be obtained by
finding the minimum dominating set of $\domgraph_{\varepsilon}$.

\subsection{The \heuristic Algorithm}\label{sec:heurhd}

The \heuristic algorithm proceeds in two steps as presented in
Algorithms~\ref{alg:builddom} and~\ref{alg:grmr:hd}.

\begin{algorithm}
  \caption{\builddom}\label{alg:builddom}
  \Input{Extreme points $\extremes \subseteq P$, IPDG $\delgraph(P)$, Parameter $\varepsilon$}
  \Output{Dominance graph $\domgraph_{\varepsilon}$}
  Initialize a directed graph $\domgraph_{\varepsilon}=(V,E_{\varepsilon})$
  where $V = \extremes$ and $E_{\varepsilon} = \varnothing$\;\label{ln:hd:graph:s}
  \For{$i \gets 1,\ldots,|\extremes|$}{
    Initialize an empty queue $\mathcal{Q}$\;
    \textbf{foreach} $t \in N(t_i)$ \textbf{do} $\mathcal{Q}$.enqueue$(t)$\;
    Set $t_i$ as \emph{visited}\;
    \While{$\mathcal{Q}$ is not empty}{
      $t_j \gets \mathcal{Q}$.dequeue()\;
      \If{$t_j$ is not visited}{
        Set $t_j$ as \emph{visited}\;
        Solve LP in Eq.~\ref{eq:lp:regret} for $t_i,t_j$ to compute $\varepsilon_{ij}$\;
        \If{$\varepsilon_{ij} \leq \varepsilon$}{
          Add a directed edge $(t_i \rightarrow t_j)$ to $E_{\varepsilon}$\;
          \textbf{foreach} $t \in N(t_j)$ \textbf{do} $\mathcal{Q}$.enqueue$(t)$\;
        }
      }
    }
    \label{ln:hd:graph:t}
  }
  \Return{$\domgraph_{\varepsilon}$}\;
\end{algorithm}

\textbf{Step 1 (Dominance Graph Construction, Algorithm~\ref{alg:builddom}):}
As discussed above, for a parameter $\varepsilon \in (0,1)$,
we only need to build a subgraph $\domgraph_{\varepsilon}$ of the dominance graph $\domgraph$
to compute a solution of \grmr.
The procedure to build $\domgraph_{\varepsilon}$ is described in Algorithm~\ref{alg:builddom}.
It assumes that the \delaunay $\delgraph(P)$ is provided as an input,
and in Section~\ref{sec:implementation} we will discuss a practical alternative
to this assumption. Generally, it performs a breadth-first search (BFS) on $\delgraph(P)$
starting from each vertex $t_i \in \extremes$.
For a starting vertex $t_i$, when the BFS encounters another vertex $t_j$,
the LP in Eq.~\ref{eq:lp:regret} is solved
to compute the weight $\varepsilon_{ij}$ from $t_i$ to $t_j$.
If $\varepsilon_{ij} \leq \varepsilon$, it will add an
edge $(t_i \rightarrow t_j)$ to $E_{\varepsilon}$ and continue to traverse
the neighbors of $t_j$ in $\delgraph(P)$.
Otherwise, the BFS does not expand to the neighbors of $t_j$ anymore.
This is because vertices with higher inner-product similarities (resp.~lower weights)
tend to be closer to each other in the \delaunay and stopping the BFS expansion early
allows us to reduce the number of LP computations with little loss of solution quality.

\begin{algorithm}
  \caption{\heuristic}\label{alg:grmr:hd}
  \Input{Database $P$, Extreme points $X$, IPDG $\delgraph(P)$, Parameter $\varepsilon$}
  \Output{Result $Q_{\varepsilon}$ of GRMR}
  $\domgraph_{\varepsilon}\gets$ \builddom($\extremes,\delgraph(P),\varepsilon$)\;\label{ln:hd:domgraph}
  Initialize $Q_{\varepsilon} \gets \varnothing$ and $\mathcal{U} \gets X$\;
  \label{ln:hd:result:s}
  \For{$i \gets 1,\ldots,|\extremes|$}{
    $Dom(t_i) \gets \{t_i\} \cup \{t_j \in X : (t_i \rightarrow t_j) \in E_{\varepsilon}\}$\;
  }
  \While{$\mathcal{U} \neq \varnothing$}{
    $t^* \gets \arg\max_{t_i \in X \setminus Q_{\varepsilon}} |Dom(t_i) \cap \mathcal{U}|$\;
    $Q_{\varepsilon} \gets Q_{\varepsilon} \cup \{t^*\}$
    and $\mathcal{U} \gets \mathcal{U} \setminus Dom(t^*)$\;
  }
  \Return{$Q_{\varepsilon}$}\;\label{ln:hd:result:t}
\end{algorithm}

\textbf{Step 2 (Result Computation, Algorithm~\ref{alg:grmr:hd}):}
After building $\domgraph_{\varepsilon}$ using Algorithm~\ref{alg:builddom},
\heuristic approaches the dominating set problem on $\domgraph_{\varepsilon}$
as an equivalent set-cover problem on a set system
$\Sigma=(\mathcal{U},\mathcal{S})$ where $\mathcal{U}$ is equal to $X$ and
$\mathcal{S}$ is a collection of sets, with the $i$\textsuperscript{th} set
equal to all points $Dom(t_i)$ dominated by $t_i$, i.e.,
$Dom(t_i) \coloneqq \{t_i\}\cup\{t_j: (t_i\rightarrow t_j) \in \domgraph_{\varepsilon}\}$.
Then, it runs the greedy algorithm to compute an approximate set-cover solution on $\Sigma$. 
Specifically, starting from $Q_{\varepsilon}=\varnothing$, it adds a vertex whose dominating
set contains the most number of uncovered vertices at each iteration until all vertices
in $\mathcal{U}$ are covered. 
Finally, $Q_{\varepsilon}$ is returned as the solution of \grmr on database $P$.

\begin{example}
Figure~\ref{subfig:hd:graph} illustrates the dominance graph $\domgraph$ of the dataset
in Figure~\ref{fig:example}. Then, in Figure~\ref{subfig:hd:result}, we show how $\domgraph$
can be used to compute the solution of \grmr for $\varepsilon=0.2$.
Specifically, $\domgraph_{0.2}$ is a subgraph of $\domgraph$ where only the edges
with weights at most $0.2$ are preserved (deleted edges are in gray).
Then, \heuristic runs the greedy algorithm on $\domgraph_{0.2}$ to compute
the smallest dominating set.
At the first iteration, $t_8$ is added to $Q_{0.2}$ because $|Dom(t_8)|=3$ is
the maximum among all vertices. Then, $t_4$ is added at the second iteration.
Next, $t_2,t_5,t_6$ are added accordingly. 
Finally, the dominating set $Q=\{t_2,t_4,t_5,t_6,t_8\}$ of $\domgraph_{0.2}$
provides a heuristic solution of \grmr for $\varepsilon=0.2$.
\end{example}

\textbf{Theoretical Analysis:}
The result $Q_{\varepsilon}$ returned by Algorithm~\ref{alg:grmr:hd} is guaranteed not to break the regret constraint of \grmr, even if it may not be the smallest possible solution.
\begin{theorem}\label{thm:grmr:hd}
  It holds that $Q_{\varepsilon}$ returned by Algorithm~\ref{alg:grmr:hd} is
  an $\varepsilon$-regret set of $P$, i.e., $l(Q_{\varepsilon}) \leq \varepsilon$.
\end{theorem}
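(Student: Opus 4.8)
The plan is to verify the regret constraint directionwise: for an arbitrary utility vector $x \in \mathbb{S}^{d-1}$, I would exhibit a point in $Q_{\varepsilon}$ whose score is at least $(1-\varepsilon)\cdot\omega(x,P)$. Two structural facts drive the argument. First, the Voronoi cells of the extreme points cover the whole vector space, i.e.\ $\bigcup_{t \in \extremes} R(t) = \mathbb{R}^d \setminus \{0\}$; this is because for every $x$ the top-ranked tuple in $P$ is attained at a vertex of $\mathcal{CH}(P)$, which is precisely an extreme point. Second, the greedy loop in Algorithm~\ref{alg:grmr:hd} terminates only when $\mathcal{U}=\varnothing$, and since $t_i \in Dom(t_i)$ for every $i$ the sets $\{Dom(t_i)\}$ always cover $\extremes$; hence $Q_{\varepsilon}$ is a valid \emph{dominating set} of $\domgraph_{\varepsilon}$, meaning every $t_j \in \extremes$ either lies in $Q_{\varepsilon}$ or has an incoming edge $(t_i \rightarrow t_j) \in E_{\varepsilon}$ from some $t_i \in Q_{\varepsilon}$.

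With these in hand, fix $x \in \mathbb{S}^{d-1}$ and let $t_j$ be the extreme point with $x \in R(t_j)$, so that $\langle t_j, x\rangle = \omega(x,P)$. If $t_j \in Q_{\varepsilon}$, then trivially $\omega(x, Q_{\varepsilon}) = \omega(x,P)$ and $l_x(Q_{\varepsilon}) = 0 \leq \varepsilon$. Otherwise, domination yields some $t_i \in Q_{\varepsilon}$ with $(t_i \rightarrow t_j) \in E_{\varepsilon}$, so its weight satisfies $\varepsilon_{ij} \leq \varepsilon$. By the defining property of the dominance edge, the $\varepsilon_{ij}$-approximate Voronoi cell of $t_i$ covers $R(t_j)$, i.e.\ $R(t_j) \subseteq R_{\varepsilon_{ij}}(t_i)$. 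Since $x \in R(t_j)$, I obtain $x \in R_{\varepsilon_{ij}}(t_i)$, which unwinds to $\langle t_i, x\rangle \geq (1-\varepsilon_{ij})\cdot\omega(x,P) \geq (1-\varepsilon)\cdot\omega(x,P)$, using $\varepsilon_{ij}\leq\varepsilon$. As $t_i \in Q_{\varepsilon}$, this gives $\omega(x,Q_{\varepsilon}) \geq (1-\varepsilon)\cdot\omega(x,P)$ and hence $l_x(Q_{\varepsilon}) \leq \varepsilon$. Taking the maximum over all $x \in \mathbb{S}^{d-1}$ yields $l(Q_{\varepsilon}) \leq \varepsilon$.

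The only nontrivial ingredient is the semantics of the edge weight $\varepsilon_{ij}$: I would confirm that the LP in Eq.~\ref{eq:lp:regret} indeed computes the maximum regret ratio of $t_i$ with respect to $t_j$ over exactly the cell $R(t_j)$, so that $\varepsilon_{ij} \leq \varepsilon$ is equivalent to the containment $R(t_j) \subseteq R_{\varepsilon}(t_i)$ after normalization. This is already set up in Section~\ref{sec:graph-formulation}: under the constraint $t_j^{\top} x = 1$ the feasible region is precisely $R(t_j)$, and the objective $1 - t_i^{\top} x$ is exactly the regret ratio of $t_i$ there. Given this, the containment used above is immediate, and I expect no further obstacle — the argument is a clean cover-then-dominate chain that does not invoke the minimality of $Q_{\varepsilon}$, only its validity as a dominating set of $\domgraph_{\varepsilon}$.
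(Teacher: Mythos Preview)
Your proposal is correct and follows essentially the same approach as the paper's proof: pick $x$, locate the extreme point $t_j$ whose Voronoi cell contains $x$, and use the fact that $Q_{\varepsilon}$ is a dominating set of $\domgraph_{\varepsilon}$ to conclude either $t_j \in Q_{\varepsilon}$ or some $t_i \in Q_{\varepsilon}$ with $\varepsilon_{ij}\leq\varepsilon$ covers the regret. Your write-up is simply more explicit than the paper's about why the Voronoi cells cover $\mathbb{R}^d\setminus\{0\}$, why the greedy loop yields a valid dominating set, and why the LP semantics give $R(t_j)\subseteq R_{\varepsilon_{ij}}(t_i)$, all of which the paper leaves implicit.
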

\begin{proof}
  For any vector $x \in \mathbb{S}^{d-1}$, there exists a point $t_j$ such that $x \in R(t_j)$.
  Since $Q_{\varepsilon}$ is a dominating set of $G_{\varepsilon}$, we have either
  $t_j \in Q_{\varepsilon}$ or there exists an edge $(t_i \rightarrow t_j) \in E_{\varepsilon}$
  for some $t_i \in Q_{\varepsilon}$. In the previous case, we have
  $l_x(Q_{\varepsilon})=0$; In the latter case, we have
  $l_x(Q_{\varepsilon}) \leq \varepsilon_{ij} \leq \varepsilon$.
  In both cases, we have $l(Q_{\varepsilon}) \leq \varepsilon$.
\end{proof}

\begin{figure}
  \centering
  \subfigure[Dominance graph $\domgraph$]{
    \label{subfig:hd:graph}
    \includegraphics[width=0.25\textwidth]{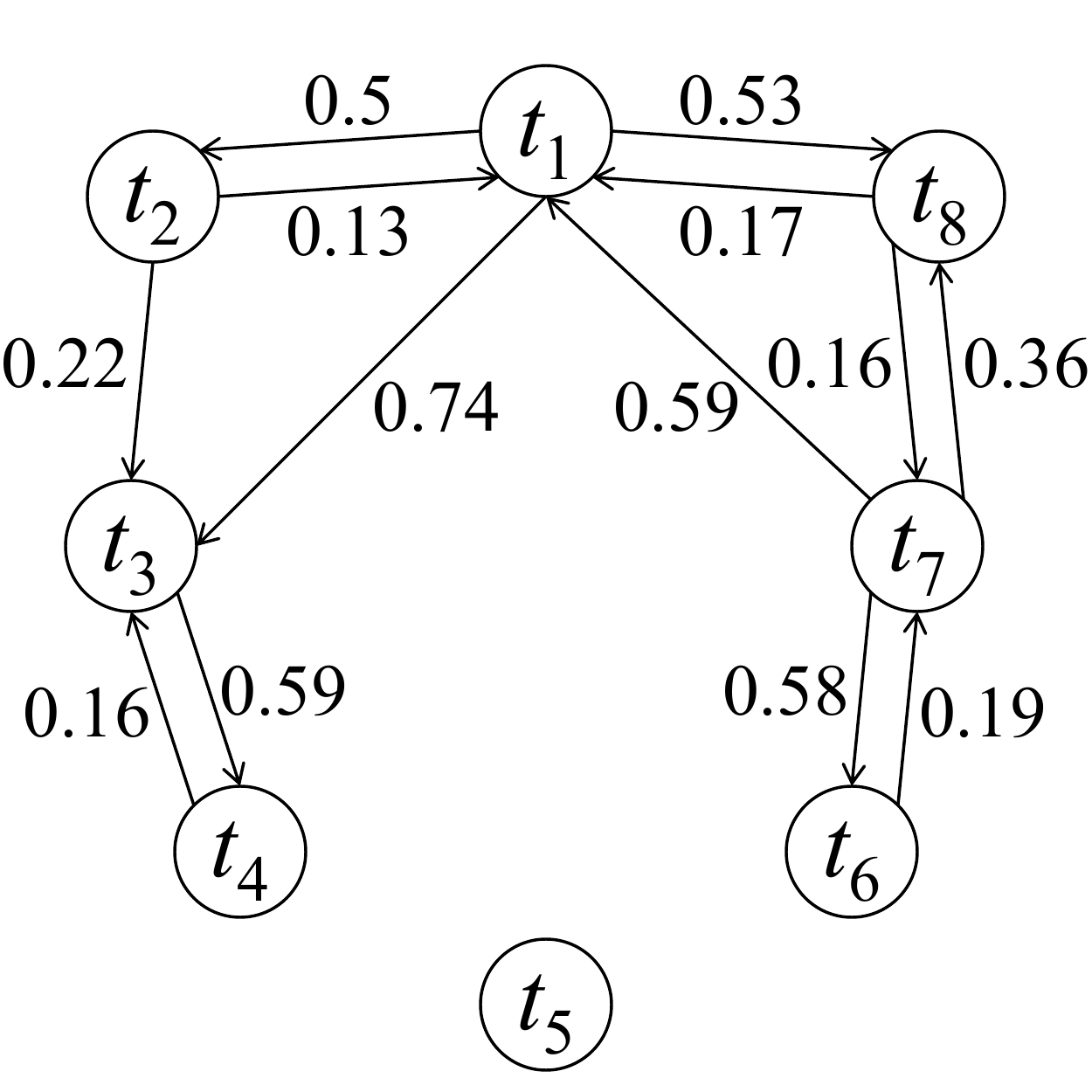}
  }
  \subfigure[Subgraph $\domgraph_{0.2}$ for $\varepsilon=0.2$]{
    \label{subfig:hd:result}
    \includegraphics[width=0.25\textwidth]{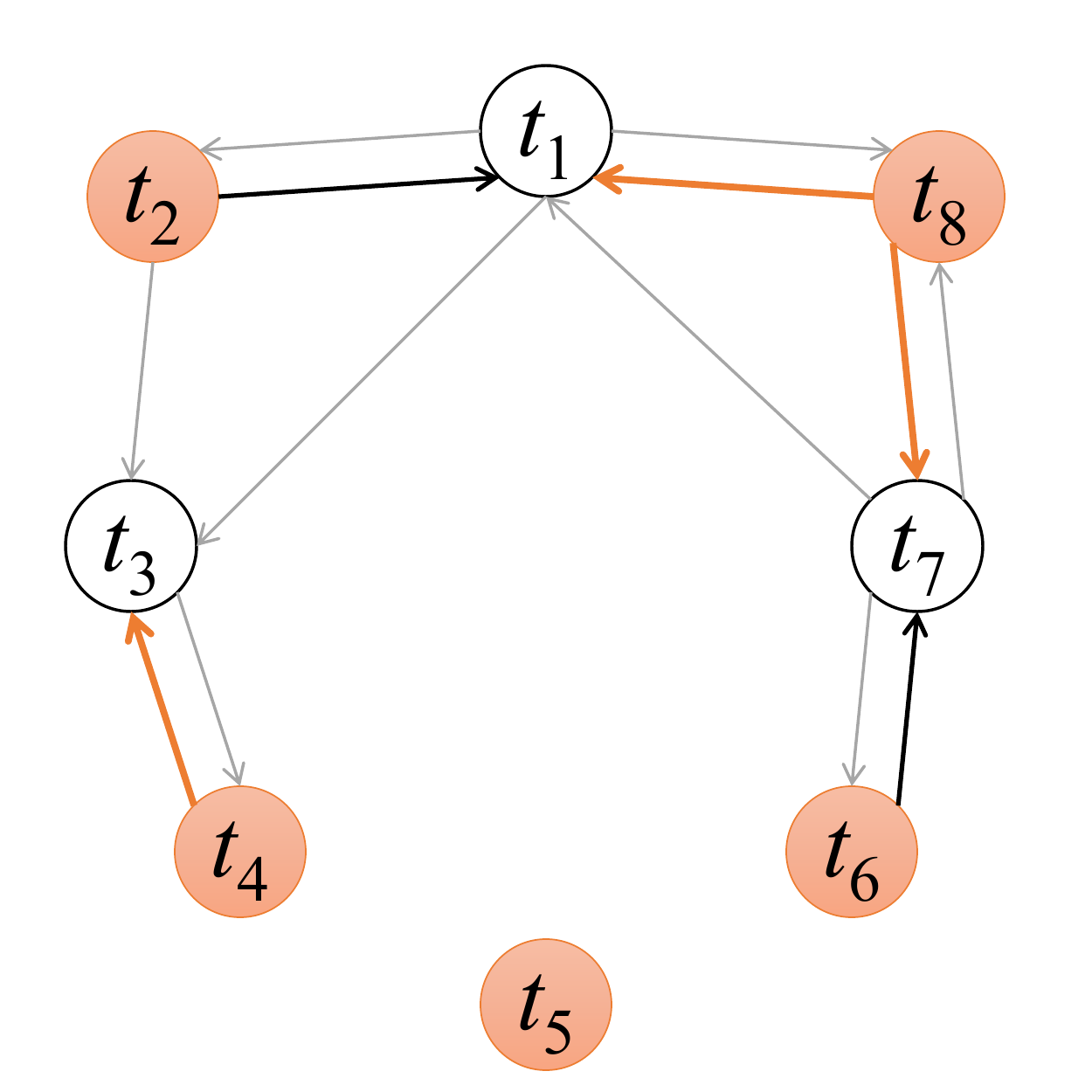}
  }
  \caption{Example of dominance graph and \heuristic}
  \label{fig:hd:example}
\end{figure}

How large is the size of the solution of \heuristic compared to the size of the optimal solution
for \grmr? While the approximation factor of \heuristic on the solution size is still
an open problem, we do have an upper bound for it.
\begin{theorem}
  If $Q_{\varepsilon}^*$ is the optimal solution of \grmr with parameter $\varepsilon$
  and $Q_{\varepsilon}$ is the solution of \grmr returned by \heuristic, then it holds
  that $|Q_{\varepsilon}| \leq \frac{|\extremes|}{d+1} \cdot |Q_{\varepsilon}^*|$.
\end{theorem}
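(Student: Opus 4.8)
The plan is to bound $|Q_{\varepsilon}|$ from above and $|Q_{\varepsilon}^*|$ from below by quantities that combine to give the claimed factor. The upper bound is immediate: \heuristic returns a dominating set of $\domgraph_{\varepsilon}$, whose vertex set is \extremes, so $Q_{\varepsilon} \subseteq \extremes$ and hence $|Q_{\varepsilon}| \leq |\extremes|$. (Since every vertex dominates itself, \extremes itself is always a valid dominating set, so the greedy procedure in Algorithm~\ref{alg:grmr:hd} always terminates within this bound.) All the content therefore lies in establishing the lower bound $|Q_{\varepsilon}^*| \geq d+1$ on any feasible solution.

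For the lower bound I would argue as follows. Because $Q_{\varepsilon}^*$ is an $\varepsilon$-regret set with $\varepsilon \in (0,1)$, we have $l_x(Q_{\varepsilon}^*) \leq \varepsilon < 1$ for every $x \in \mathbb{S}^{d-1}$, which rearranges to
\begin{equation*}
  \omega(x,Q_{\varepsilon}^*) \geq (1-\varepsilon)\cdot\omega(x,P) > 0,
\end{equation*}
where the final strict inequality is exactly Condition~1. Thus $\max_{p \in Q_{\varepsilon}^*}\langle p,x\rangle > 0$ holds in every direction $x$. By the same equivalence already invoked for $P$ in the discussion of Condition~1 (the origin is an interior point of the convex hull of a point set if and only if its supporting value is strictly positive in all directions), this means the origin lies in the \emph{interior} of $\mathcal{CH}(Q_{\varepsilon}^*)$. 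A finite point set whose convex hull has a nonempty interior must be full-dimensional, and $k$ points span an affine subspace of dimension at most $k-1$; requiring dimension $d$ forces $k \geq d+1$. Hence $|Q_{\varepsilon}^*| \geq d+1$.

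Combining the two bounds gives the result directly:
\begin{equation*}
  |Q_{\varepsilon}| \leq |\extremes| = \frac{|\extremes|}{d+1}\cdot(d+1) \leq \frac{|\extremes|}{d+1}\cdot|Q_{\varepsilon}^*|.
\end{equation*}

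The main obstacle is not the arithmetic but the clean justification of the lower bound's two geometric steps. I would take care to state precisely that strict positivity of $\omega(x,Q_{\varepsilon}^*)$ in all directions is equivalent to interiority of the origin (reusing the fact cited for Condition~1 rather than reproving it), and that interiority of \emph{any} point in the convex hull forces at least $d+1$ affinely independent vertices. Everything else, including the upper bound $|Q_{\varepsilon}|\leq|\extremes|$, is essentially definitional.
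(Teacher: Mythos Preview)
Your proposal is correct and follows essentially the same approach as the paper: bound $|Q_{\varepsilon}|$ above by $|\extremes|$ and bound $|Q_{\varepsilon}^*|$ below by $d+1$. The only cosmetic difference is in justifying the lower bound---the paper argues directly that any $d$ points lie on a hyperplane whose normal direction yields regret at least $1$, whereas you route the same fact through the equivalence in Condition~1 (strict positivity in all directions $\Leftrightarrow$ origin interior to the convex hull $\Rightarrow$ full-dimensionality $\Rightarrow$ at least $d+1$ points).
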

\begin{proof}
First, the size of $Q_{\varepsilon}$ is at most $|\extremes|$ since $Q_{\varepsilon} \subseteq X$.
Second, $Q^*_{\varepsilon}$ must contain at least $d+1$ points to guarantee
$l(Q^*_{\varepsilon}) < 1$. This is because, for any point set $Q$ of size $d$
in $\mathbb{R}^d$, we can find a vector $x$ perpendicular to the hyperplane
containing all points in $Q$ such that $l_x(Q) = 1$.
Thus, $|Q_{\varepsilon}| \leq \frac{|\extremes|}{d+1}\cdot|Q^*_{\varepsilon}|$.
\end{proof}

Next, we prove that \heuristic is a polynomial-time algorithm.
\begin{theorem}\label{thm:grmr:hd:time}
  The time complexity of \heuristic is
  $O(|\extremes|^2 \cdot (\Delta \cdot d^{3.5} + \Gamma))$ where
  $\Delta = \max_{t \in \extremes}|N(t)|$ and $\Gamma = \max_{t \in \extremes}Dom(t)$.
\end{theorem}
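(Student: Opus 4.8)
The plan is to bound the running time of \heuristic by analyzing its two phases separately---the dominance-graph construction of Algorithm~\ref{alg:builddom} and the greedy set-cover computation of Algorithm~\ref{alg:grmr:hd}---and then summing the two bounds.

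First I would analyze \builddom. The outer loop runs once for each of the $|\extremes|$ starting vertices, and for each starting vertex $t_i$ it performs a BFS over $\delgraph(P)$. Since each vertex $t_j$ is marked \emph{visited} the first time it is dequeued and the LP in Eq.~\ref{eq:lp:regret} is solved exactly once per newly visited vertex, the number of LP invocations in a single BFS is at most $|\extremes|$; over all starting vertices this yields at most $O(|\extremes|^2)$ LP solves. I then need the cost of a single LP: the program for the pair $(t_i,t_j)$ has $d$ variables and $|N(t_j)|+1 = O(\Delta)$ linear constraints, so an interior-point method solves it in $O(\Delta \cdot d^{3.5})$ time. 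The remaining BFS bookkeeping (queue operations, visited flags, edge insertions) costs only $O(\Delta)$ per vertex and is dominated by the LP cost. Multiplying, the construction phase runs in $O(|\extremes|^2 \cdot \Delta \cdot d^{3.5})$ time.

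Next I would analyze the result-computation phase. Reading off the sets $Dom(t_i)$ from $E_{\varepsilon}$ costs $O(|\extremes| \cdot \Gamma)$ in total, since $|Dom(t_i)| \leq \Gamma$ by the definition of $\Gamma$. The greedy \texttt{while} loop adds one new vertex to $Q_{\varepsilon} \subseteq \extremes$ per iteration, so it runs at most $|\extremes|$ times. Each iteration scans all candidate vertices and evaluates $|Dom(t_i) \cap \mathcal{U}|$; maintaining $\mathcal{U}$ as a bitmap gives $O(1)$ membership tests, so each evaluation costs $O(\Gamma)$ and selecting the maximizer $t^*$ costs $O(|\extremes| \cdot \Gamma)$, while the subsequent update $\mathcal{U} \gets \mathcal{U} \setminus Dom(t^*)$ is only $O(\Gamma)$. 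Hence this phase runs in $O(|\extremes|^2 \cdot \Gamma)$ time.

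Summing the two phases gives $O(|\extremes|^2 \cdot \Delta \cdot d^{3.5}) + O(|\extremes|^2 \cdot \Gamma) = O(|\extremes|^2 \cdot (\Delta \cdot d^{3.5} + \Gamma))$, as claimed. The main subtlety---and the step I expect to require the most care---is the bookkeeping of the LP count: one must argue that the \emph{visited} marking guarantees each vertex triggers at most one LP solve per BFS (so that early termination of the BFS expansion never re-solves an LP), and one must justify the $O(\Delta \cdot d^{3.5})$ per-LP bound in terms of the $d$ variables and $O(\Delta)$ constraints of Eq.~\ref{eq:lp:regret} rather than the full vertex count. Everything else is routine accounting dominated by these two terms.
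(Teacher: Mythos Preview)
Your proposal is correct and follows essentially the same approach as the paper's own proof: both bound \builddom by counting at most $O(|\extremes|^2)$ LP solves at $O(\Delta\cdot d^{3.5})$ each, bound the greedy phase by at most $|\extremes|$ iterations of cost $O(|\extremes|\cdot\Gamma)$ each, and sum. Your write-up is in fact slightly more explicit than the paper's about why the \emph{visited} marking caps the LP count and about the bitmap implementation of $\mathcal{U}$, but the structure and the key estimates are identical.
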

\begin{proof}
  First of all, the number of LPs solved by \builddom is $O(|\extremes|^2)$, as the worst-case corresponds to computing the weight for each pair of vertices.
  Each LP in Eq.~\ref{eq:lp:regret} has $|N(t_j)|+1$ constraints and $d$ variables. 
  When the interior point method is used as the LP solver, a worst-case time complexity
  of $O(\Delta \cdot d^{3.5})$ can always be guaranteed.
  Therefore, the time complexity of \builddom is
  $O(|\extremes|^2 \cdot \Delta \cdot d^{3.5})$.
  Then, the time to build the set system $\Sigma$ is $O(|\extremes| \cdot \Gamma)$
  where $\Gamma = \max_{t \in \extremes} Dom(t)$.
  The greedy algorithm should evaluate the union of $Dom(t)$ and $\mathcal{U}$
  for each $t \in X$ at each iteration. Thus, the running time of each iteration is $O(|\extremes| \cdot \Gamma)$. Then, it runs $|Q_{\varepsilon}|=O(|\extremes|)$ iterations.
  Hence, the time complexity of result computation is $O(|\extremes|^2 \cdot \Gamma)$.
  We conclude the proof by summing up both results.
\end{proof}

\subsection{Practical Implementation}\label{sec:implementation}

Next, we discuss practical aspects of the implementation of \heuristic.
The first one is how to build an approximate \delaunay $\widehat{\delgraph}(P)$.
The second one is how to reuse the dominance graph for
achieving better solution quality.

\textbf{\delaunay Construction:}
A na\"ive method to construct an exact \delaunay $\delgraph(P)$ for a point
set $P \subset \mathbb{R}^d$ is to compute $\mathcal{CH}(P)$
and enumerate all edges of $\mathcal{CH}(P)$. 
Then, an edge of $\mathcal{CH}(P)$ corresponds to an edge
of $\delgraph(P)$~\cite{DBLP:conf/nips/ZhouTX019}.
Although this method is practically efficient in $\mathbb{R}^2$ or even $\mathbb{R}^3$,
it quickly becomes computationally intensive for higher dimensionality.
One alternative approach we considered was based on existing works on graph-based
\emph{maximum inner product search}~\cite{DBLP:conf/nips/MorozovB18,DBLP:conf/emnlp/TanZXL19,DBLP:conf/nips/ZhouTX019},
which propose to build a graph analogous to \delaunay.
However, since graphs built by these methods are significantly
different from the original IPDG, e.g., they are directed graphs and
contain non-extreme points, it is not reasonable to use them for our problems directly.

\begin{algorithm}
  \caption{IPDG Construction}\label{alg:ipdg}
  \Input{Extreme points $X$, Sample size $m$, Parameter $k$}
  \Output{Approximate IPDG $\widehat{\delgraph}(P)$}
  Initialize an undirected graph $G=(V,E)$ where $V=X$ and $E=\varnothing$\;
  \For{$i \gets 1,\ldots,m$}{
    Draw a random vector $x_i$ uniformly from $\mathbb{S}^{d-1}$\;
    Compute the top-$k$ results $\Phi_{k}(x_i,X)$ for $x_i$ in $X$\;
    Let $t^*$ be the top-ranked tuple for $x_i$ in $X$\;
    \ForEach{$t \in \Phi_{k}(x_i,X)$ and $t \neq t^*$}{
      Add an undirected edge $\{t^*,t\}$ to $E$\;
    }
  }
  \Return{$\widehat{\delgraph}(P) \gets G$}\;
\end{algorithm}

Instead, we propose a new method to build an approximate
\delaunay as described in Algorithm~\ref{alg:ipdg}.
Our method is based on an intuitive observation:
if the Voronoi cells of two points $t,t^\prime$ are adjacent,
then there is some vector in $R(t)$ for which $t^\prime$ is high-ranked,
i.e., $t^\prime$ is among the top-$k$ results for a small $k$ -- and vice versa.
We use this observation in Algorithm~\ref{alg:ipdg} as follows.
First, we draw $m$ random vectors from $\mathbb{S}^{d-1}$.
Then, we compute the top-$k$ results $\Phi_{k}(x_i,X)$ for each sampled vector $x_i$.
Subsequently, we identify the top-ranked point $t^*$ from $\Phi_{k}(x_i,X)$
and add the edges between $t^*$ and the remaining points in $\Phi_{k}(x_i,X)$.
Finally, the resultant graph after processing
all sampled vectors are returned as $\widehat{\delgraph}(P)$.

We note that using an approximate \delaunay $\widehat{\delgraph}(P)$ instead of the exact one
$\delgraph(P)$ does not affect the correctness of \heuristic, i.e.,
the solution returned by \heuristic is still an $\varepsilon$-regret set.
Compared with $\delgraph(P)$, $\widehat{\delgraph}(P)$ may both
contain some additional edges and miss some existing edges.
On the one hand, an additional edge has no effect on the result of
the LP in Eq.~\ref{eq:lp:regret}. This is because its feasible region is exactly
the Voronoi cell $R(t_j)$ of $t_j$.
Hence, an additional edge leads to a redundant constraint that does not reduce
the feasible region at all. On the other hand, a missing edge may cause that the result
the LP in Eq.~\ref{eq:lp:regret} is larger than the optimal maximum regret ratio,
because the feasible region is larger than $R(t_j)$.
As a result, $\domgraph_{\varepsilon}$ may contain fewer edges
and the solution $Q_{\varepsilon}$ may have more points.
Nevertheless, $Q_{\varepsilon}$ is still guaranteed to be an $\varepsilon$-regret set.
Therefore, the values of $k$ and $m$ can affect the performance of \heuristic
by controlling the number of edges in $\widehat{\delgraph}(P)$.
When larger values of $k$ and $m$ are used, $\widehat{\delgraph}(P)$ will contain more edges
in $\delgraph(P)$ and $Q_{\varepsilon}$ will be smaller.
At the same time, larger values of $k$ and $m$ will also lead to more additional edges
in $\widehat{\delgraph}(P)$ and thus a lower efficiency of dominance graph construction.

\textbf{Graph Reuse:}
In \heuristic, we build a graph $\domgraph_{\varepsilon}$
that only contains edges with weights at most $\varepsilon$ for computation.
In fact, it is possible to use $\domgraph_{\varepsilon}$ for GRMR
with any $\varepsilon' \in (0,\varepsilon)$ since one can extract
a subgraph $\domgraph_{\varepsilon'}$ from $\domgraph_{\varepsilon}$ by deleting the edges
with weights greater than $\varepsilon'$.
Then, the dominating set of $\domgraph_{\varepsilon'}$ is also
the result $Q_{\varepsilon'}$ for GRMR with parameter $\varepsilon'$.
In light of this observation, we devise an approach to improving the solution quality
of \heuristic.

As discussed already, \heuristic returns an $\varepsilon$-regret set
that may not have the minimum size.
To explore solutions of smaller size than the one returned by \heuristic, we invoke it with a
larger parameter, and by doing so, we reuse previously materialized instances of \domgraph
as much as possible. Specifically, given a parameter $\varepsilon \in (0,1)$, we first build
a graph $\domgraph_{\eta}$ for some $\eta > \varepsilon$.
Then, we perform a binary search on $\delta$ in the range $[\varepsilon,\eta]$,
and for each value of $\delta$, obtain a solution $Q_{\delta}$ by invoking \heuristic
with parameter $\delta$, while reusing $\domgraph_{\eta}$ as described above.
The goal of the binary search is to find the maximum value of $\delta$ that satisfies the regret constraint of \grmr, i.e., $l(Q_{\delta}) \leq \varepsilon$,
and return $Q_{\delta}$ for GRMR with parameter $\varepsilon$.
In practice, we observe that the size of $Q_{\delta}$
is smaller than the size of $Q_{\varepsilon}$ we would obtain by a single invocation
of \heuristic for the input parameter.
In practice, for small values of $\varepsilon$, setting $\eta=3\cdot\varepsilon$
is good enough in almost all cases.
With the incorporated binary search, the time complexity of \heuristic increases to
$O(|\extremes|^2 \cdot (\Delta \cdot d^{3.5} + \Gamma\cdot\log{\frac{1}{\varepsilon}}))$,
since the result computation procedure is repeated $O(\log{\frac{1}{\varepsilon}})$ times
for the binary search on the value of $\delta$.

Similar approach can also be used for the dual formulation of \grmr in
Section~\ref{subsec:def}.
By building $\domgraph$ and performing a binary search on $\delta \in (0,1)$,
we can find the minimum value of $\delta$ that guarantees $|Q_{\delta}| \leq r$
and return $Q_{\delta}$ as the result for the dual problem.

\section{Experiments}\label{sec:exp}

In this section, we evaluate the performance of our algorithms on real and synthetic datasets.
We first introduce our experimental setup in Section~\ref{subsec:setup}.
Then, we present the experimental results on two-dimensional datasets
in Section~\ref{subsec:result:2d}. Finally, the experimental results on high-dimensional datasets
are reported in Section~\ref{subsec:result:hd}.

\subsection{Experimental Setup}\label{subsec:setup}

\textbf{Implementation:}
We conduct all experiments on a server running Ubuntu 18.04.1
with a 2.3GHz processor and 256GB memory. All algorithms are
implemented in C++11. We use GLPK as the LP solver.
The implementation is published
on GitHub\footnote{\url{https://github.com/yhwang1990/Generalized-RMS}}.

\begin{table}
  \centering
  \caption{Statistics of Real Datasets}
  \label{tbl:datasets}
  \begin{tabular}{cccc}
  \toprule
    Dataset & Size & Dimension & Source \\
  \midrule
    \textsc{Airline} & 1,700,782 & 2 & DOT \\
    \textsc{NBA}     & 24,585    & 2 & Kaggle \\
  \midrule
    \textsc{Climate}   & 566,262   & 6 & CRU \\
    \textsc{El Nino}   & 178,080   & 5 & UCI \\
    \textsc{Household} & 2,049,280 & 7 & UCI \\
    \textsc{SUSY}      & 5,000,000 & 8 & UCI \\
  \bottomrule
  \end{tabular}
\end{table} 

\textbf{Real Datasets:}
We use six publicly available real datasets for evaluation.
Basic statistics of these datasets are listed in Table~\ref{tbl:datasets}.
\begin{itemize}
  \item \textsc{Airline}\footnote{\url{https://www.transtats.bts.gov/DL_SelectFields.asp?Table_ID=236&DB_Short_Name=On-Time}}:
  It records the information of all flights conducted by US carriers
  from January 2019 to March 2019. We used two attributes, i.e.,
  \emph{arrival delay} and \emph{air-time}, for evaluation.
  \item \textsc{NBA}\footnote{\url{https://www.kaggle.com/drgilermo/nba-players-stats}}:
  It contains the statistics of NBA players aggregated by season/team.
  We used two attributes, i.e., \emph{offensive win shares} (\emph{ows}) and \emph{defensive win shares} (\emph{dws}), for evaluation.
  \item \textsc{Climate}\footnote{\url{http://www.cru.uea.ac.uk/data}}:
  It contains the climate information of different locations.
  We used six attributes in our experiments, including (annual average) \emph{temperature} and \emph{humidity}.
  \item \textsc{El Nino}\footnote{\url{https://archive.ics.uci.edu/ml/datasets/El+Nino}}:
  It contains oceanographic data in the Pacific ocean. We used all five numerical attributes for evaluation.
  \item \textsc{Household}\footnote{\url{https://archive.ics.uci.edu/ml/datasets/Individual+household+electric+power+consumption}}:
  It contains the electric power consumption measurements gathered in a household.
  We used all seven numerical attributes for evaluation.
  \item \textsc{SUSY}\footnote{\url{https://archive.ics.uci.edu/ml/datasets/SUSY}}: A high-energy physics dataset that contains eight numerical features, representing kinematic properties measured by particle detectors.
\end{itemize}
We normalized all attribute values of each dataset to the range $[-1,1]$
in the preprocessing step.

\textbf{Synthetic Datasets:}
To evaluate the performance of different algorithms in controlled settings,
we use two synthetic datasets, namely \textsc{Normal} and \textsc{Uniform},
in our experiments. In the \textsc{Normal} dataset, each attribute is independently drawn
from the standard normal distribution $\mathcal{N}(0,1)$.
In the \textsc{Uniform} dataset, each attribute is independently drawn
from a uniform distribution $\mathcal{U}(-1,1)$.
For both datasets, we vary the number of tuples $n$ from $10^4$ ($10$k) to $10^7$ ($10$m)
and the dimensionality $d$ from $2$ to $10$
for testing the performance of different algorithms with varying $n$ and $d$. 
By default, we use the dataset with $n=10^6$ ($1$m) and $d=6$.

\textbf{Algorithms:}
We compare the following eight algorithms for \grmr in our experiments.
\begin{itemize}
  \item \textsc{$\varepsilon$-Kernel}~\cite{DBLP:journals/jacm/AgarwalHV04}:
  We use the ANN-based algorithm to compute
  an $\varepsilon$-kernel as the result of \grmr. 
  The approach is a straightforward adaptation of the analysis
  in~\cite{DBLP:journals/jacm/AgarwalHV04} and
  returns an $\varepsilon$-regret set of size $O(\frac{1}{\varepsilon^{(d-1)/2}})$ for \grmr.
  \item \textsc{HittingSet}~\cite{DBLP:conf/wea/AgarwalKSS17}
  for RMS can be used for \grmr by extending the sample space of ranking functions from
  monotonic linear to general linear functions
  (i.e., sampling vectors from $\mathbb{S}^{d-1}$ instead of $\mathbb{S}^{d-1}_{+}$)
  based on the $\varepsilon$-net property.
  It returns an $\varepsilon$-regret set of
  size $O(|Q^{*}_{\varepsilon}| \cdot \log{|Q^{*}_{\varepsilon}|})$
  where $Q^{*}_{\varepsilon}$ is the optimal solution of \grmr.
  \item \textsc{2D-RRMS}~\cite{DBLP:conf/sigmod/AsudehN0D17} is
  an exact algorithm for RMS in $\mathbb{R}^2$.
  \item \textsc{Greedy}~\cite{DBLP:journals/pvldb/NanongkaiSLLX10},
  \textsc{HD-RRMS}~\cite{DBLP:conf/sigmod/AsudehN0D17},
  and \textsc{Sphere}~\cite{DBLP:conf/sigmod/XieW0LL18} are typical
  (approximate/heuristic) algorithms for RMS in $\mathbb{R}^d$ when $d \geq 3$.
  \item \exact is our exact algorithm for \grmr
  in $\mathbb{R}^2$ (Section~\ref{sec:alg:2d}).
  \item \heuristic is our heuristic algorithm for \grmr
  in $\mathbb{R}^d$ (Section~\ref{sec:alg:hd}).
\end{itemize}

Theoretically, the sampling complexities of \textsc{$\varepsilon$-Kernel} and \textsc{HittingSet}
are $O(\frac{1}{\varepsilon^{(d-1)/2}})$ and $O(\frac{1}{\varepsilon^{d-1}})$, respectively,
which makes them impractical in high dimensions.
In our implementation, rather than performing the sampling all at once, we sample them in stages and maintain a result $Q$ based on the current samples until $l(Q) \leq \varepsilon$.
Moreover, the RMS algorithms, i.e., \textsc{2D-RRMS}, \textsc{Greedy}, \textsc{HD-RRMS},
and \textsc{Sphere}, cannot be directly used for \grmr because they only consider monotonic
linear functions, but not ones with negative weights. 
In practice, we cast a \grmr problem into $2^d$ RMS problems
by dividing the points in a dataset into the standard $2^d$ orthants.
We then run an RMS algorithm on each partition (resp.~each orthant)
and return the union of results for GRMR.
Note that this method has no theoretical guarantees, and empirically it often fails to provide a valid $\varepsilon$-regret set when the points are not evenly distributed among orthants. 
Nevertheless, we could not find in the literature any other approach to adapting them to GRMR.
Finally, the set of extreme points (or the skylines of all partitions for RMS algorithms)
on each dataset is precomputed and provided as an input to all algorithms.

\textbf{Performance Measures:}
The efficiency of each algorithm is measured by \emph{running time}, i.e.,
the CPU time to compute the result of \grmr for a given $\varepsilon$.
The quality of a result $Q$ is measured by the maximum regret ratio $l(Q)$,
as well as \emph{its size} $|Q|$ for a given $\varepsilon$.
Among all solutions satisfying $l(Q) \leq \varepsilon$,
the one with a smaller size is considered better.
To estimate the maximum regret ratio, we draw a set of $10^6$ random vectors from $\mathbb{S}^{d-1}$,
compute the regret ratio of $Q$ for each vector, and use the maximum one
as an estimate for $l(Q)$. For a given dataset, we run each algorithm $10$ times and
take the averages of these measures for evaluation.

\subsection{Results on Two-Dimensional Datasets}\label{subsec:result:2d}

In this subsection, we focus on the two-dimensional datasets and compare the performance of
our proposed algorithms \exact and \heuristic with algorithms tailored to two-dimensional
settings, i.e., \textsc{$\varepsilon$-Kernel}, \textsc{HittingSet}, and \textsc{2D-RRMS}.
We present the results for two real datasets (\textsc{Airline} and \textsc{NBA})
and one synthetic dataset (\textsc{Normal}). Since the results for \textsc{Uniform}
are similar to those for \textsc{Normal} in 2D, we omit them due to space limitations.

\begin{figure}[t]
  \centering
  \includegraphics[width=0.6\textwidth]{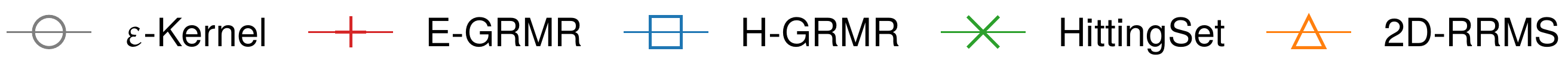}
  \\
  \subfigure[Airline]{
    \label{subfig:al:loss}
    \includegraphics[width=0.25\textwidth]{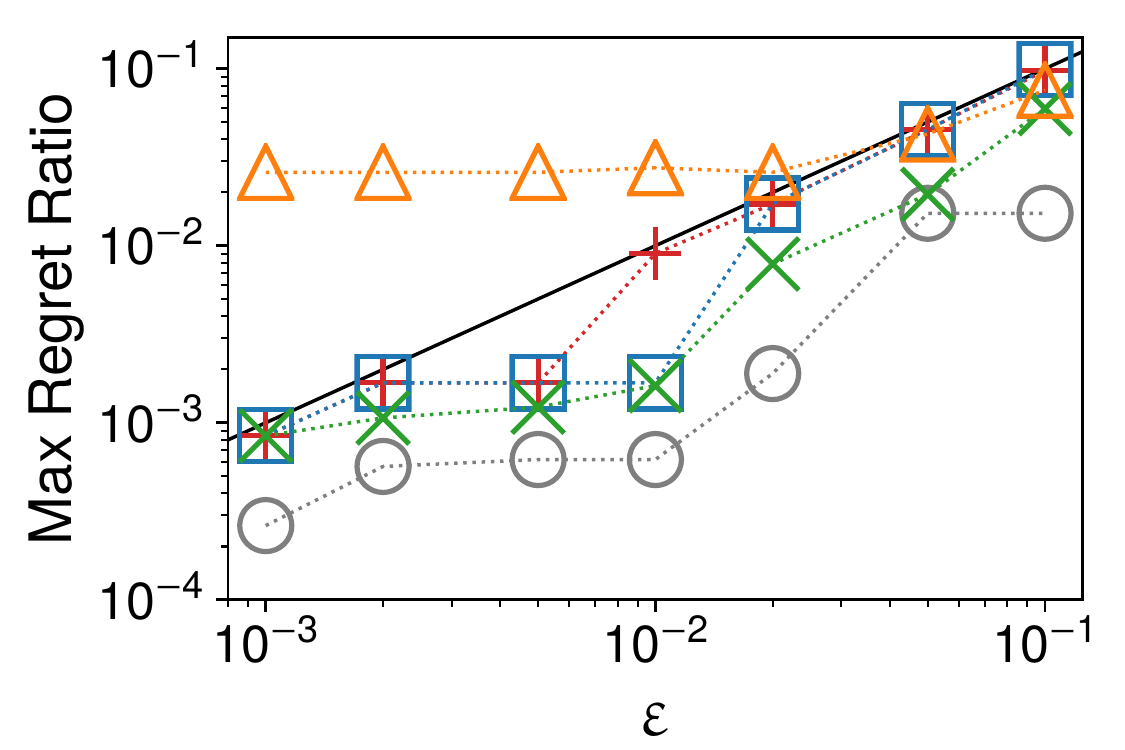}
  }
  \subfigure[NBA]{
    \label{subfig:nba:loss}
    \includegraphics[width=0.25\textwidth]{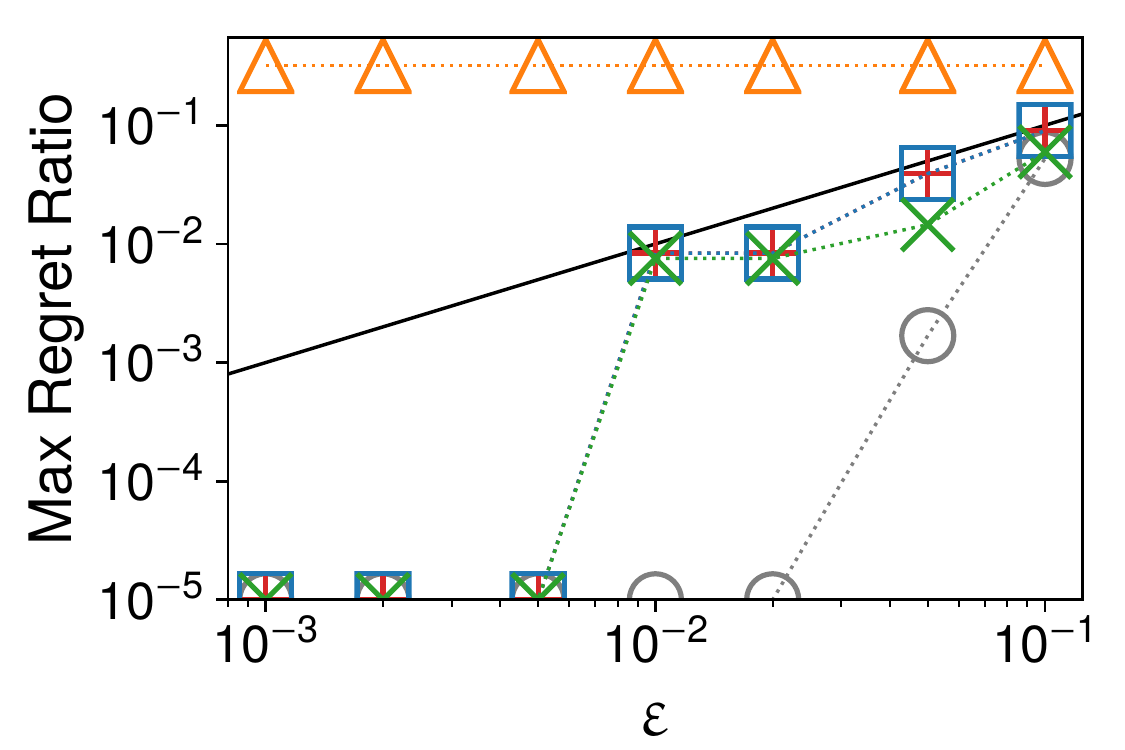}
  }
  \caption{Parameter $\varepsilon$ vs.~max regret ratio}\label{fig:loss:2d}
  \vspace{1em}
  \subfigure[Airline]{
    \label{subfig:al:time}
    \includegraphics[width=0.25\textwidth]{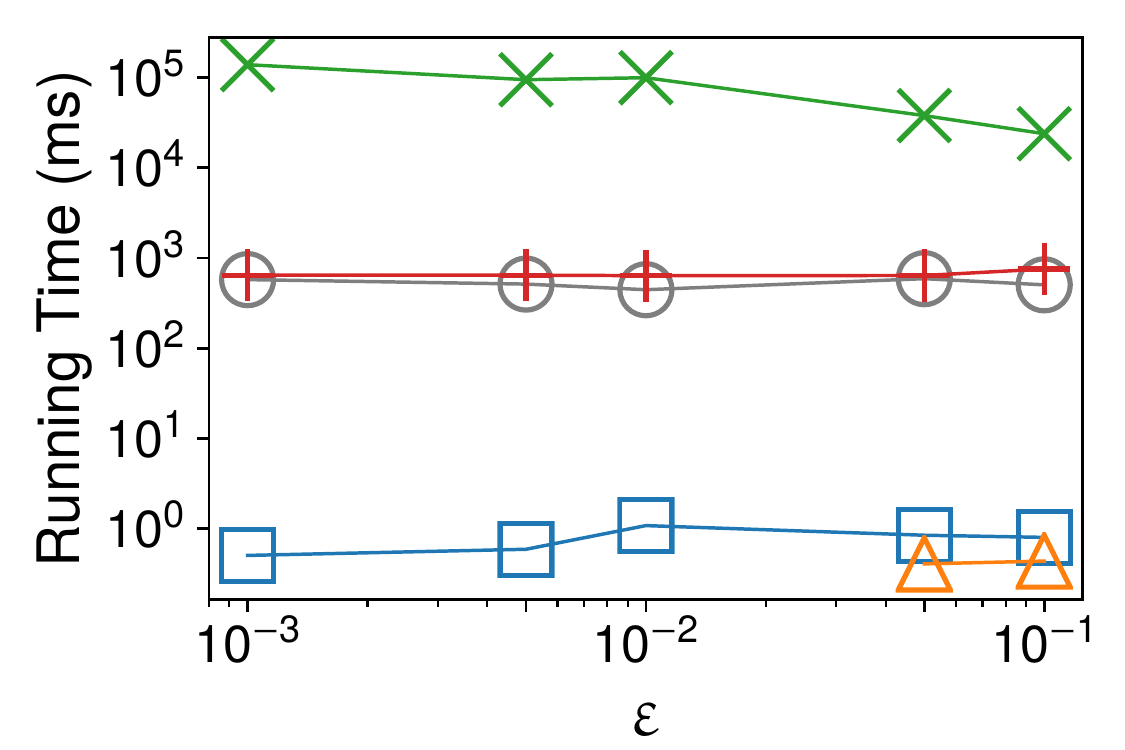}
  }
  \subfigure[NBA]{
    \label{subfig:nba:time}
    \includegraphics[width=0.25\textwidth]{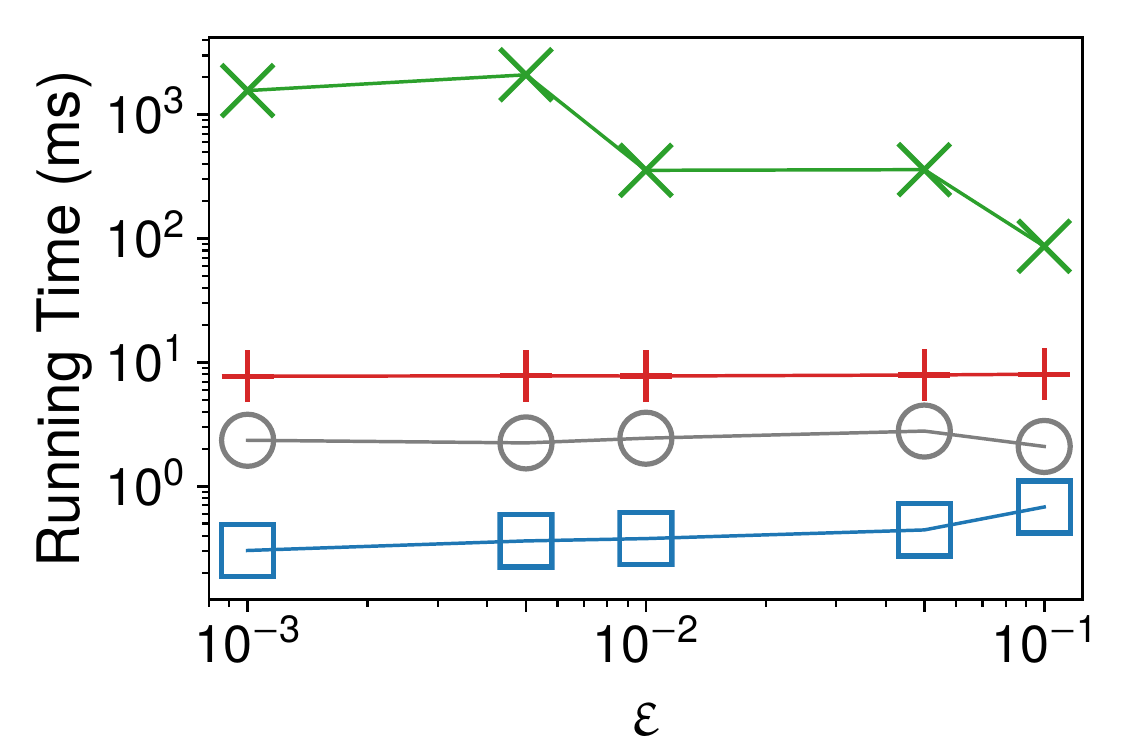}
  }
  \caption{Running time with varying $\varepsilon$}\label{fig:time:2d}
  \vspace{1em}
  \subfigure[Airline]{
    \label{subfig:al:size}
    \includegraphics[width=0.25\textwidth]{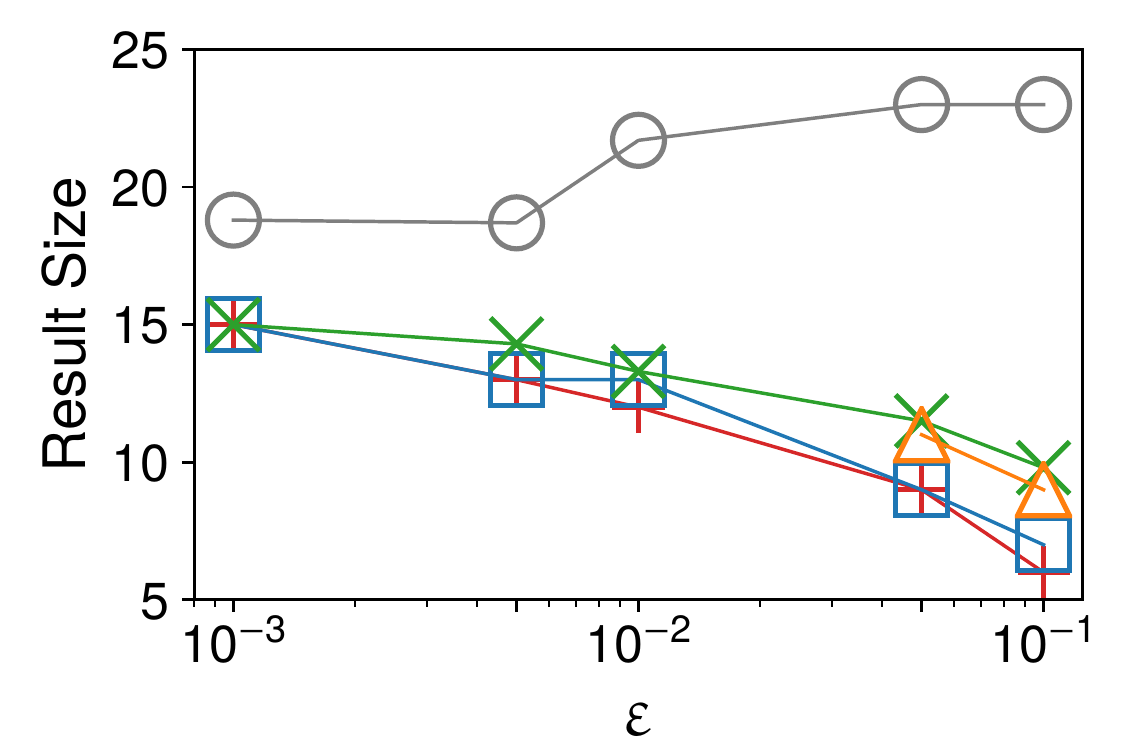}
  }
  \subfigure[NBA]{
    \label{subfig:nba:size}
    \includegraphics[width=0.25\textwidth]{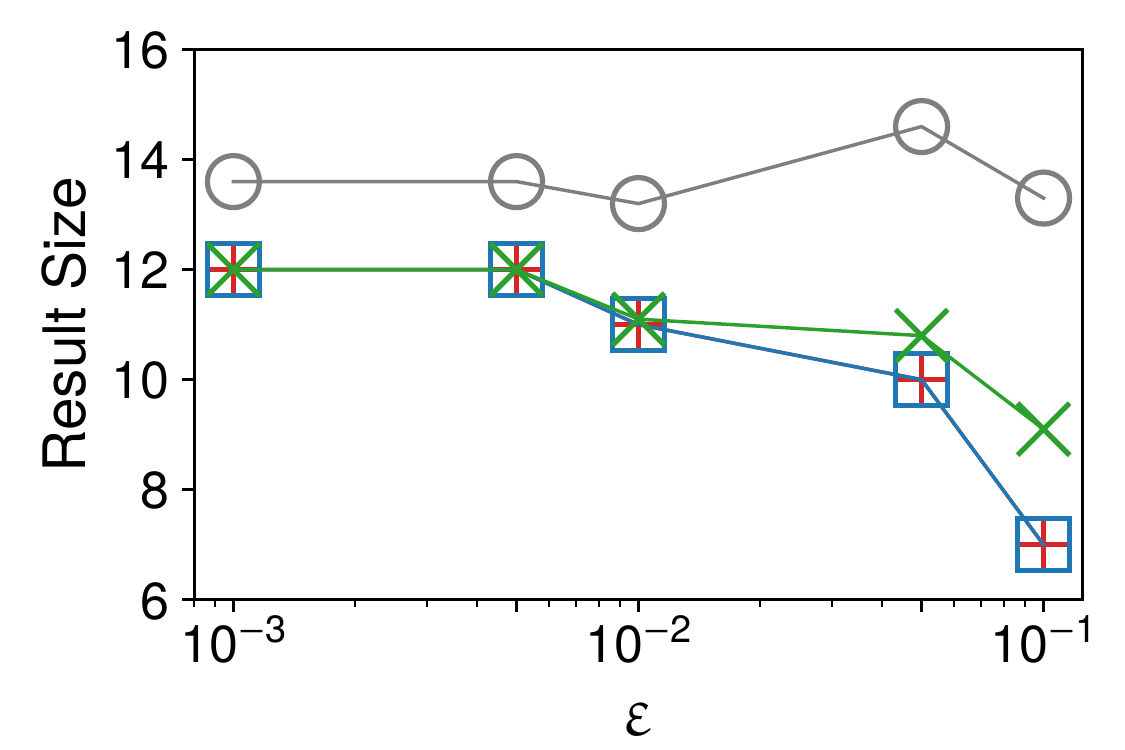}
  }
  \caption{Result sizes with varying $\varepsilon$}\label{fig:size:2d}
\end{figure}

\begin{figure}[t]
  \centering
  \includegraphics[width=0.6\textwidth]{legend-2d.pdf}
  \\
  \subfigure[Normal (2D)]{
    \label{subfig:n:2d:time}
    \includegraphics[width=0.25\textwidth]{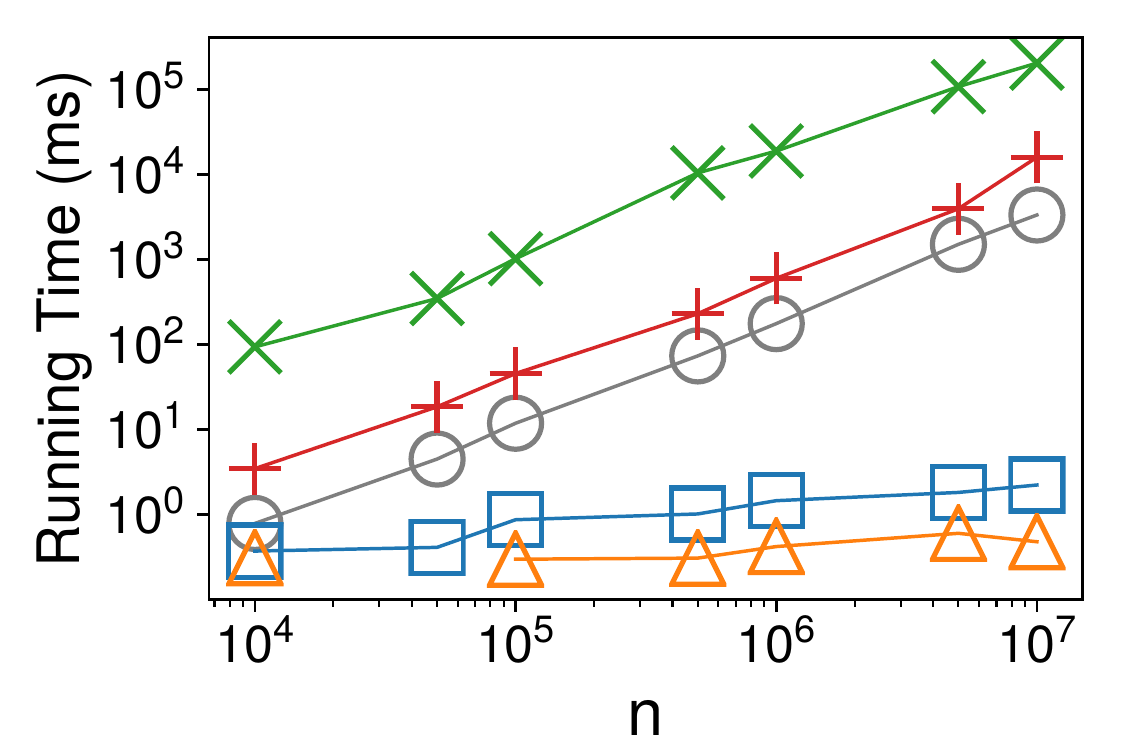}
  }
  \subfigure[Normal (2D)]{
    \label{subfig:n:2d:size}
    \includegraphics[width=0.25\textwidth]{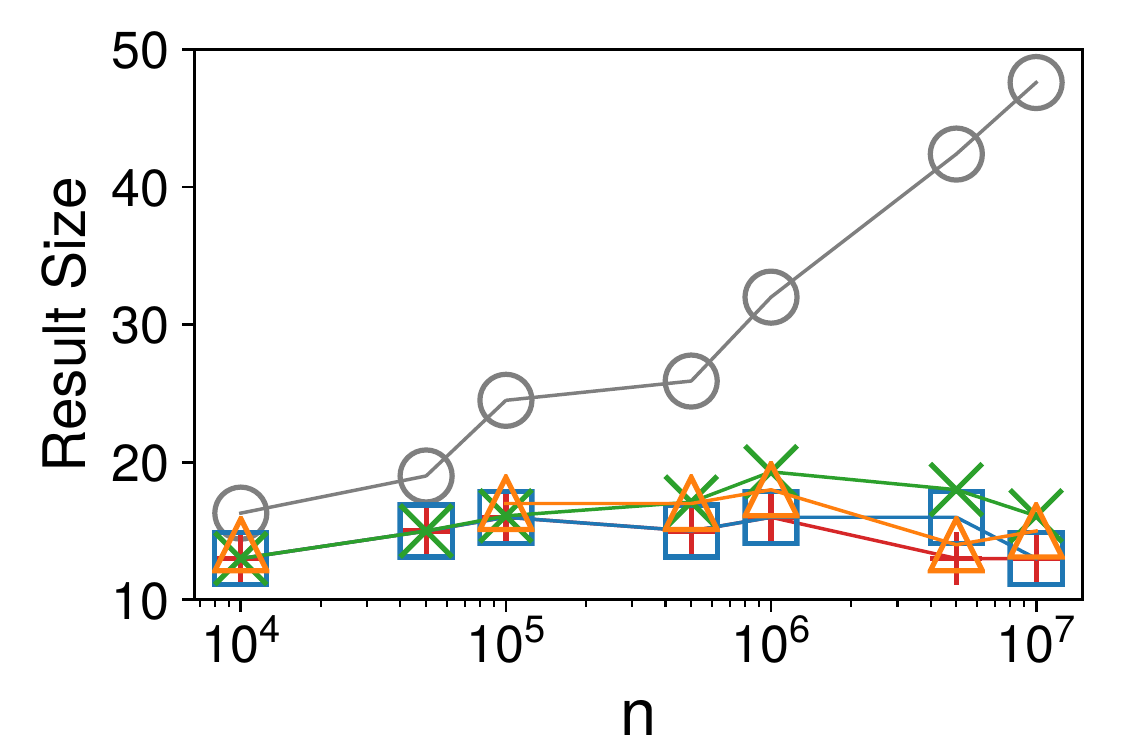}
  }
  \caption{Performance with varying $n$ ($\varepsilon=0.01$)}\label{fig:n:2d}
\end{figure}

\textbf{Impact of Parameter $\varepsilon$:}
We vary $\varepsilon$ from $0.001$ to $0.1$ to evaluate the effect of $\varepsilon$
on the performance of each algorithm.
Figure~\ref{fig:loss:2d} illustrates the maximum regret ratio $l(Q)$ of the result $Q$
of each algorithm for different $\varepsilon$ on \textsc{Airline} and \textsc{NBA}.
Note that \grmr requires that $l(Q) \leq \varepsilon$, and so valid results should
map below the line $l(Q) = \varepsilon$ in Figure~\ref{fig:loss:2d}.
We observe that all algorithms except 2D-RRMS can always guarantee to provide
an $\varepsilon$-regret set.
2D-RRMS fails to provide valid results in most cases because casting GRMR in 2D
into $4$ RMS subproblems does not work well in the case that
the data points are not evenly distributed among orthants.

The running time and result sizes of each algorithm with varying $\varepsilon$
are presented in Figures~\ref{fig:time:2d} and~\ref{fig:size:2d}, respectively.
First of all, the running time of \exact is stable with increasing $\varepsilon$.
This is because \exact spends over $95\%$ of CPU time on
\emph{candidate selection}, whose time complexity is independent of $\varepsilon$.
Since the size of the candidate set is much smaller than the size of the dataset,
the time for \emph{graph construction} and \emph{result computation}
is nearly negligible compared with that for \emph{candidate selection}.
Then, as expected, the result size of \exact decreases with $\varepsilon$
and is always the smallest (optimal) one among all algorithms.
However, \exact only outperforms \textsc{HittingSet} in terms of efficiency.
On the other hand, \heuristic runs one to four orders of magnitude faster than
all other algorithms (except 2D-RRMS that cannot provide valid results in most cases).
At the same time, we observe empirically that \heuristic often provides the optimal or near-optimal results for \grmr, because of the optimality of Delaunay graphs in $\mathbb{R}^2$
and the effectiveness of the dominance graph for \grmr.

\textbf{Impact of Dataset Size $n$:}
Subsequently, we fix $\varepsilon = 0.01$ and vary the size $n$ of \textsc{Normal}
from $10^4$ to $10^7$. The performance of each algorithm is shown in Figure~\ref{fig:n:2d}.
First of all, the running time of \textsc{$\varepsilon$-Kernel}, \exact, and \textsc{HittingSet}
increases linearly with $n$ since their time complexities are all linear with $n$.
2D-RRMS and \heuristic show much better scalability w.r.t.~$n$
because they only use the skyline and extreme points for computation,
whose sizes are much smaller than $n$ and increase sub-linearly with $n$.
Furthermore, the solution quality of \textsc{$\varepsilon$-Kernel} is
significantly inferior to all other algorithms, especially when
$n$ is larger, because the ANN-based method
does not consider whether the $\varepsilon$-kernel is the smallest or not.
Conversely, our proposed algorithms as well as \textsc{HittingSet} and 2D-RRMS
prefer smaller results to larger ones.
2D-RRMS can provide valid results on \textsc{Normal} because
the data points are almost evenly distributed among four orthants.

In general, \exact always returns the optimal results of \grmr
within reasonable time while \heuristic provides near-optimal results
of \grmr with superior efficiency on all 2D datasets
for different values of $\varepsilon$ and $n$.

\subsection{Results on High-Dimensional Datasets}\label{subsec:result:hd}

In this subsection, we compare the performance of \heuristic with
\textsc{$\varepsilon$-Kernel}, \textsc{HittingSet},
and typical RMS algorithms (\textsc{Greedy}, \textsc{HD-RRMS}, and \textsc{Sphere})
in high dimensions.
Note that \exact and \textsc{2D-RRMS} can only work in 2D
and thus are not evaluated in this subsection.
We test these algorithms on four real datasets of dimensionality $d>2$
(\textsc{Climate}, \textsc{El Nino}, \textsc{Household}, and \textsc{SUSY}),
as well as two synthetic datasets (\textsc{Normal} and \textsc{Uniform}).

\begin{figure}[t]
  \centering
  \subfigure[Climate]{
    \label{subfig:climate:k}
    \includegraphics[width=0.235\textwidth]{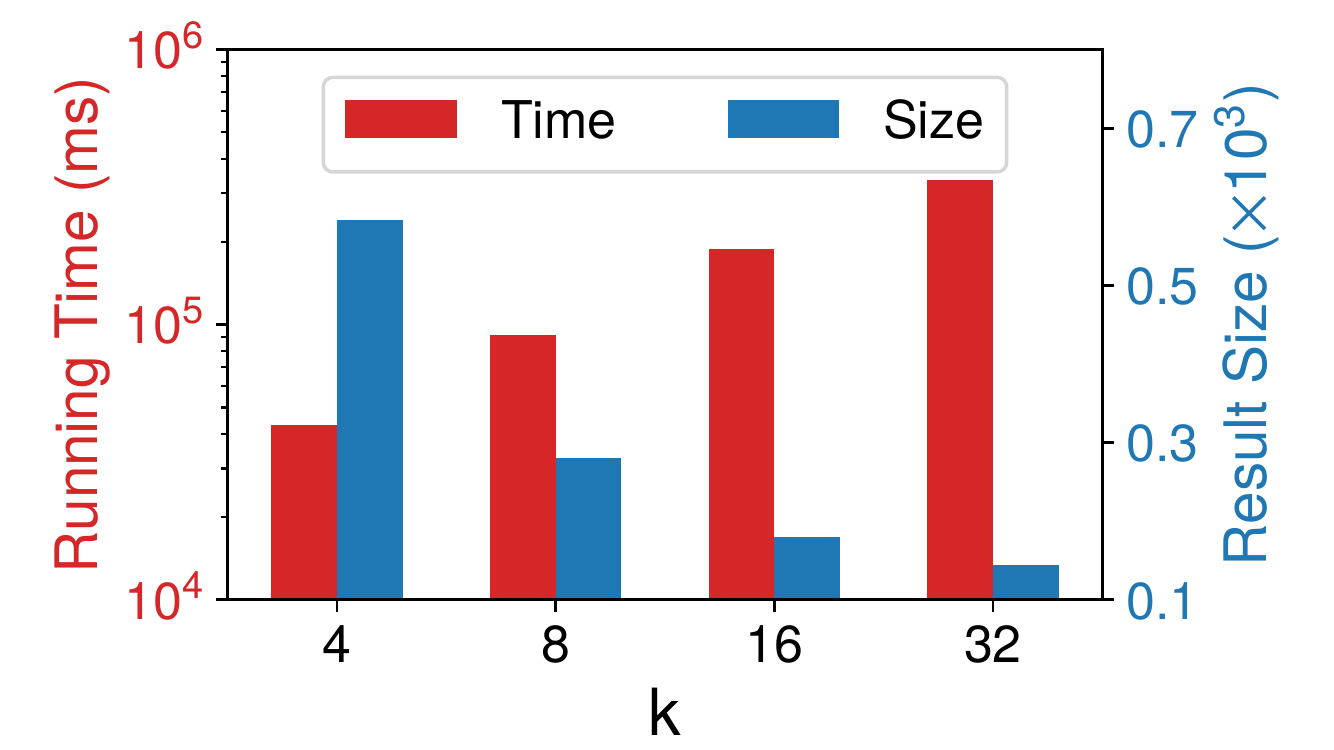}
  }
  \subfigure[El Nino]{
    \label{subfig:elnino:k}
    \includegraphics[width=0.235\textwidth]{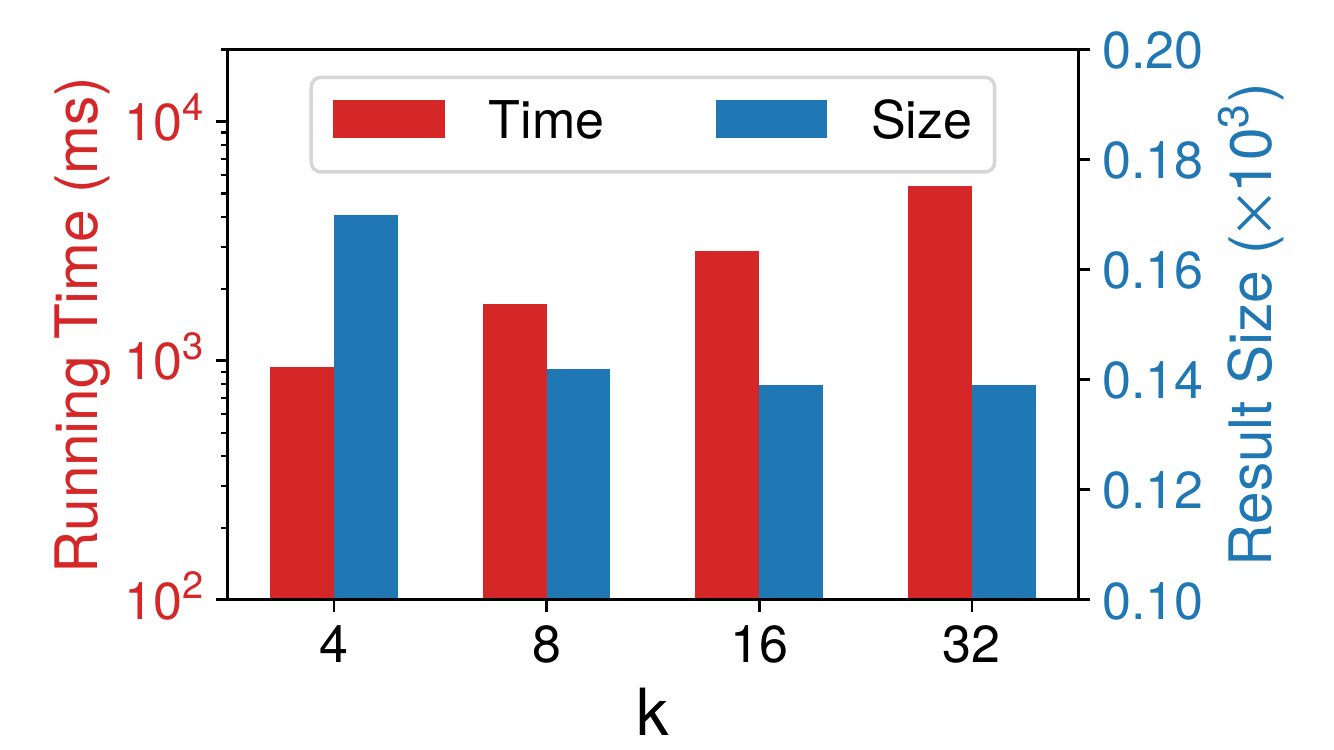}
  }
  \subfigure[Household]{
    \label{subfig:household:k}
    \includegraphics[width=0.235\textwidth]{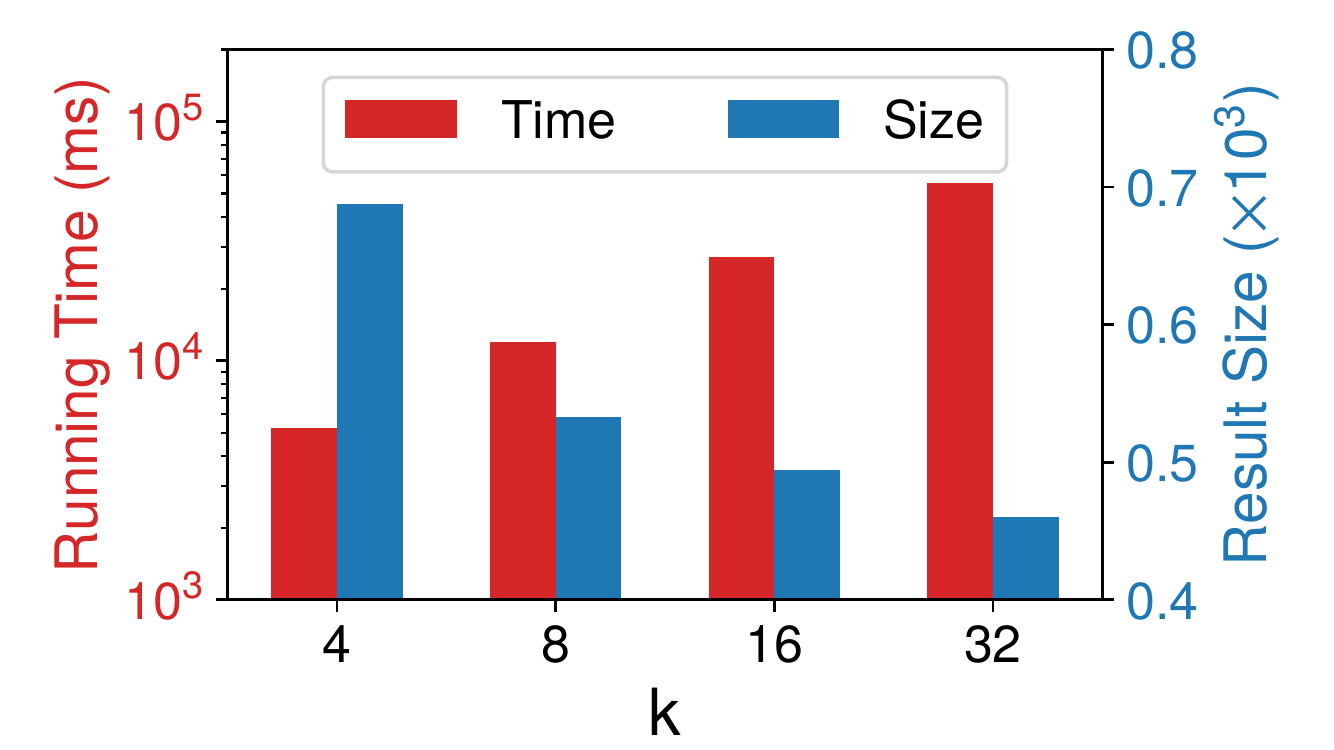}
  }
  \subfigure[SUSY]{
    \label{subfig:susy:k}
    \includegraphics[width=0.235\textwidth]{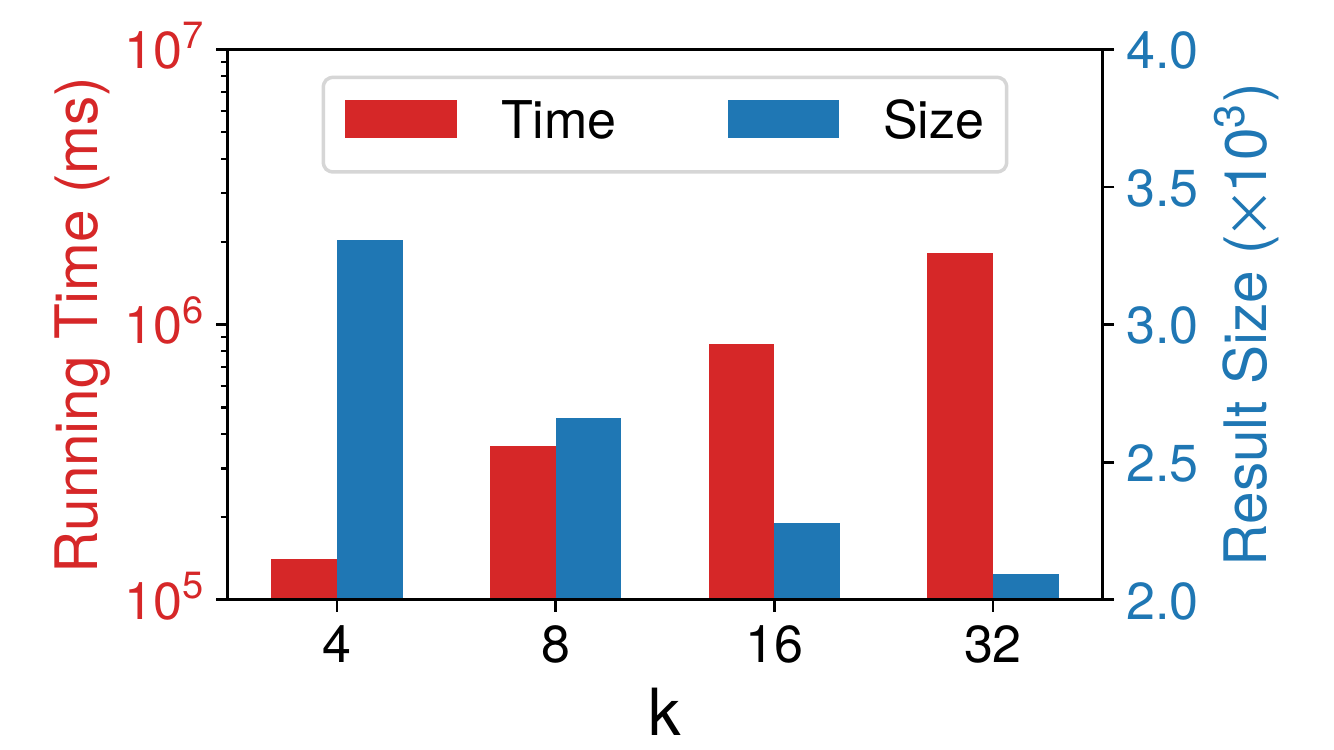}
  }
  \caption{Performance of \heuristic with varying $k$ in the IPDG construction ($\varepsilon=0.1$)}
  \label{fig:k:hd}
\end{figure}

\textbf{Impact of IPDG Construction:}
First of all, we test the effect of $k$ in the IPDG construction
on the performance of \heuristic. Figure~\ref{fig:k:hd} shows
the running time and result sizes of \heuristic for $k=4,8,16,32$
on four real-world datasets when $\varepsilon$ and $m$ are fixed to $0.1$ and $10^6$,
respectively. As discussed in Section~\ref{sec:alg:hd}, when $k$ is larger,
the running time of \heuristic increases significantly because
the approximate IPDG has more edges, which leads to
the increases in both the number of LPs and the number of constraints in each LP for \emph{dominance graph construction}.
Meanwhile, the result sizes of \heuristic decrease with increasing $k$
because more edges in the exact IPDG are contained
in the approximate one and thus the edge weights computed from LPs
are tighter and closer to the optimal ones. Nevertheless,
the solution quality on \textsc{El Nino}
does not improve anymore when $k=16,32$. This is because
the approximate IPDG built for $k=8$
has covered almost all edges of the exact IPDG.
Using a larger $k$ only leads to more redundant edges in this case.
In the remaining experiments, we will use the values of $k$
selected from $[4,8,16,32]$ that can strike the best balance between
efficiency and quality of results for \heuristic.
Note that if we fix $k$ and vary $m$
in the IPDG construction, we can observe the same trend as varying $k$:
The running time increases while the result sizes decrease for a larger $m$.
The results for varying $m$ are omitted due to space limitations.

\begin{figure}[t]
  \centering
  \includegraphics[width=0.96\textwidth]{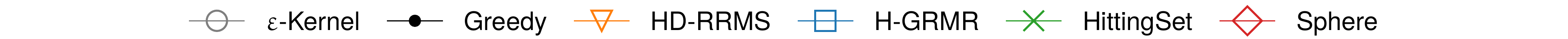}
  \subfigure[Climate]{
    \label{subfig:climate:loss}
    \includegraphics[width=0.235\textwidth]{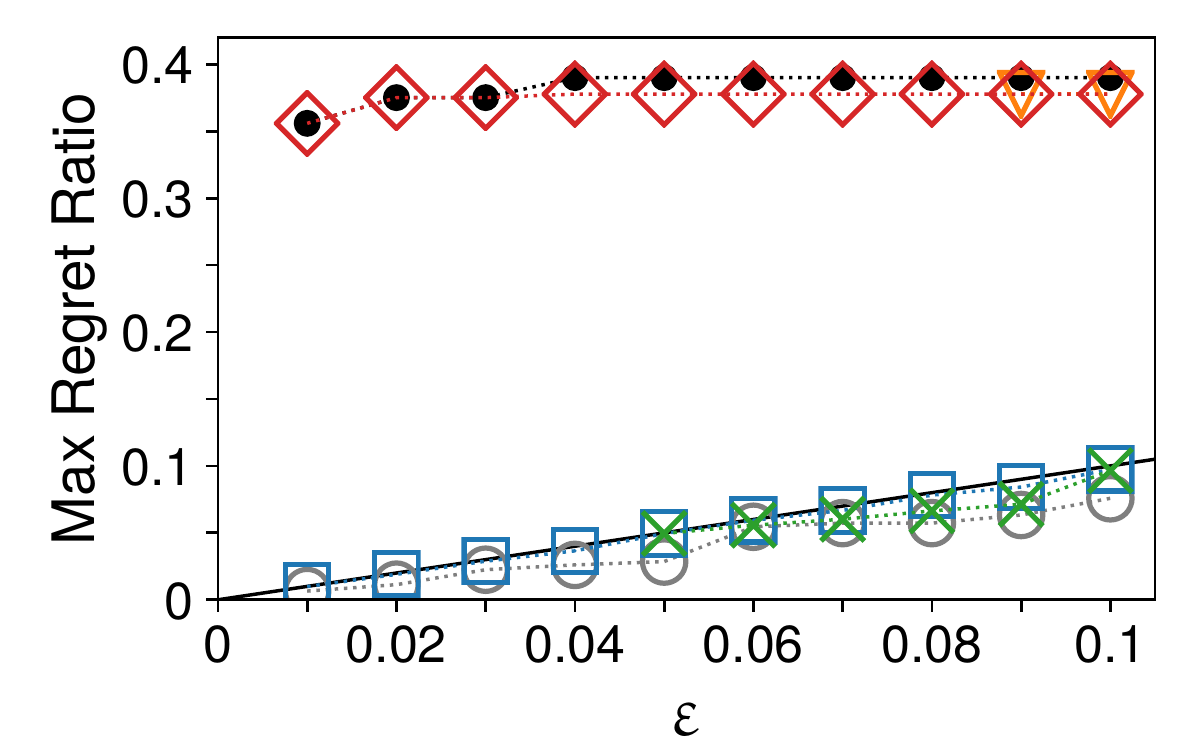}
  }
  \subfigure[El Nino]{
    \label{subfig:elnino:loss}
    \includegraphics[width=0.235\textwidth]{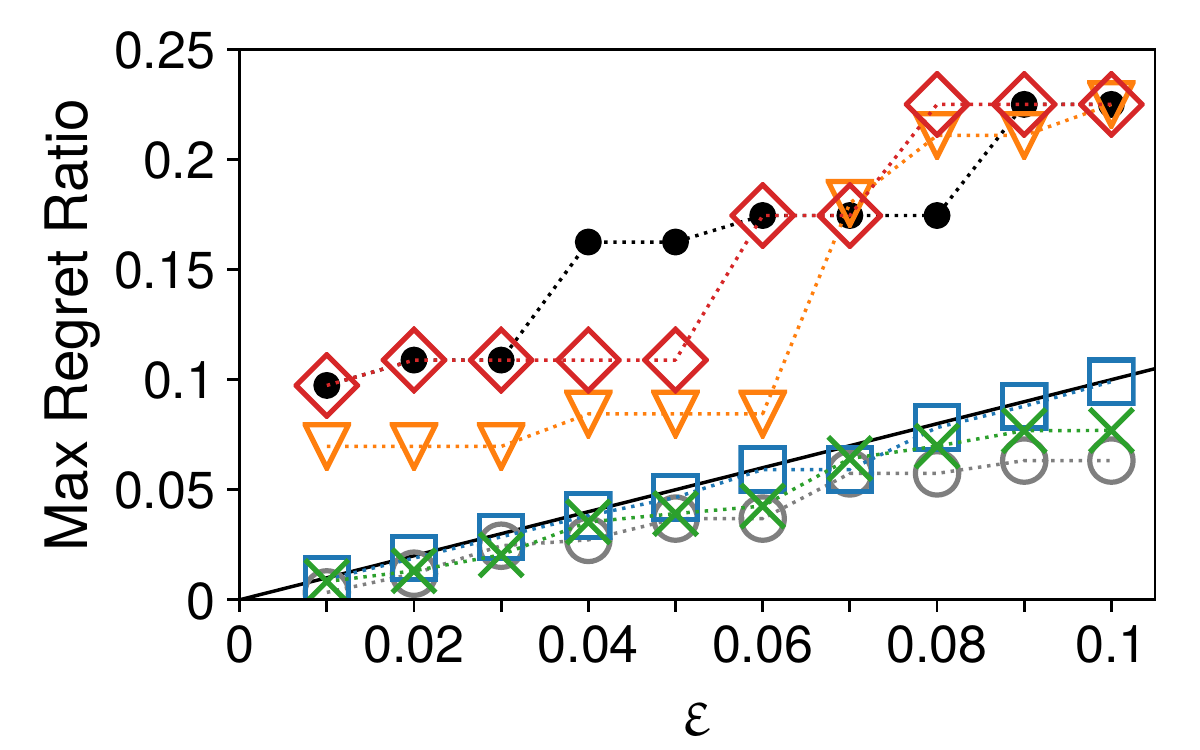}
  }
  \subfigure[Household]{
    \label{subfig:household:loss}
    \includegraphics[width=0.235\textwidth]{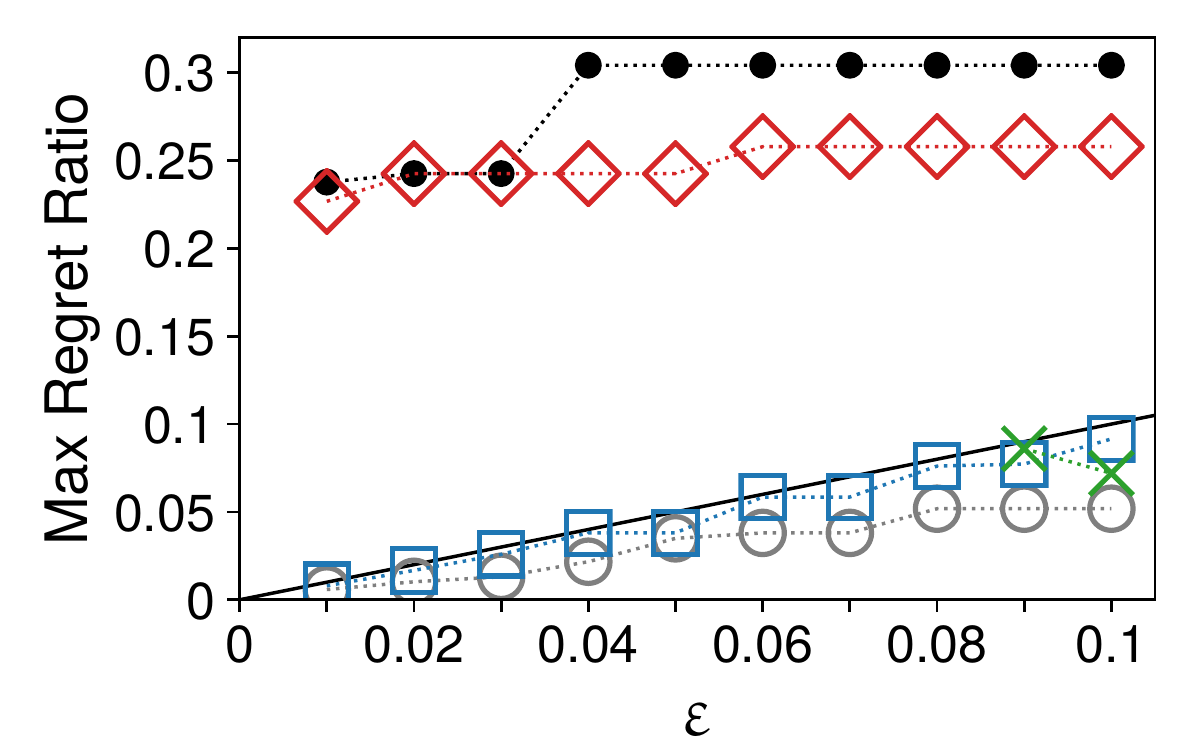}
  }
  \subfigure[SUSY]{
    \label{subfig:susy:loss}
    \includegraphics[width=0.235\textwidth]{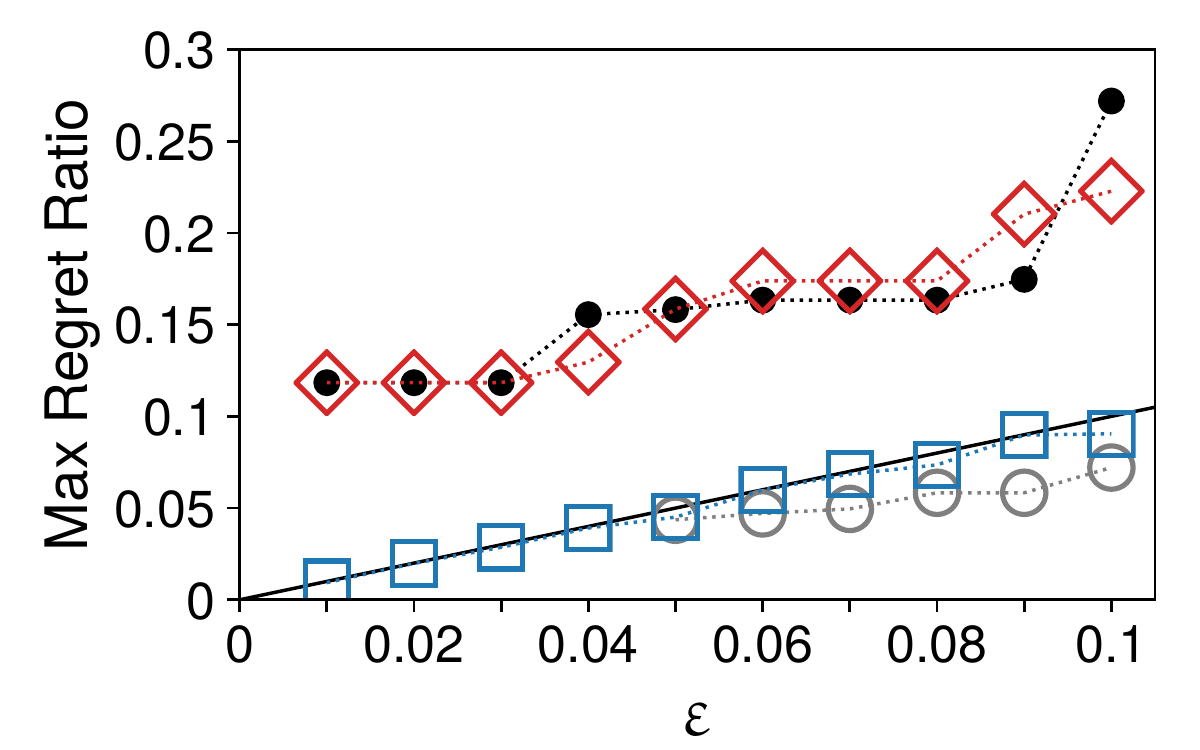}
  }
  \caption{Parameter $\varepsilon$ vs.~max regret ratio}\label{fig:loss:hd}
  \vspace{1em}
  \subfigure[Climate]{
    \label{subfig:climate:time}
    \includegraphics[width=0.235\textwidth]{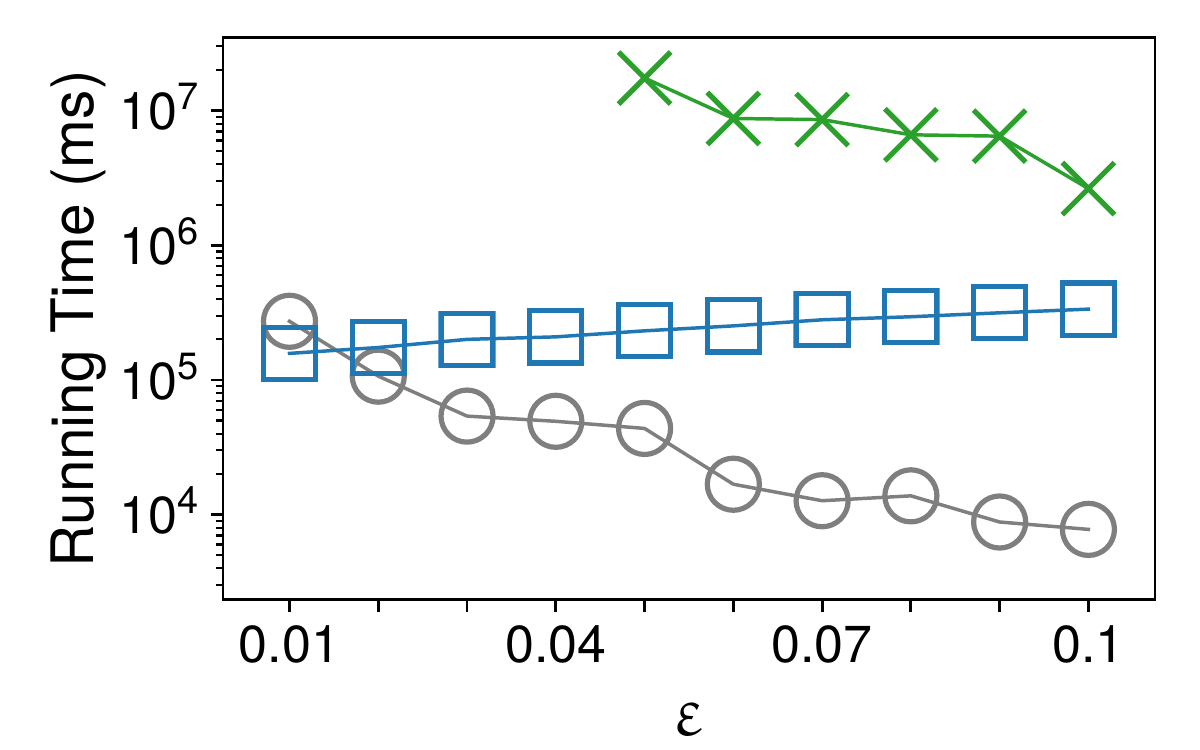}
  }
  \subfigure[El Nino]{
    \label{subfig:elnino:time}
    \includegraphics[width=0.235\textwidth]{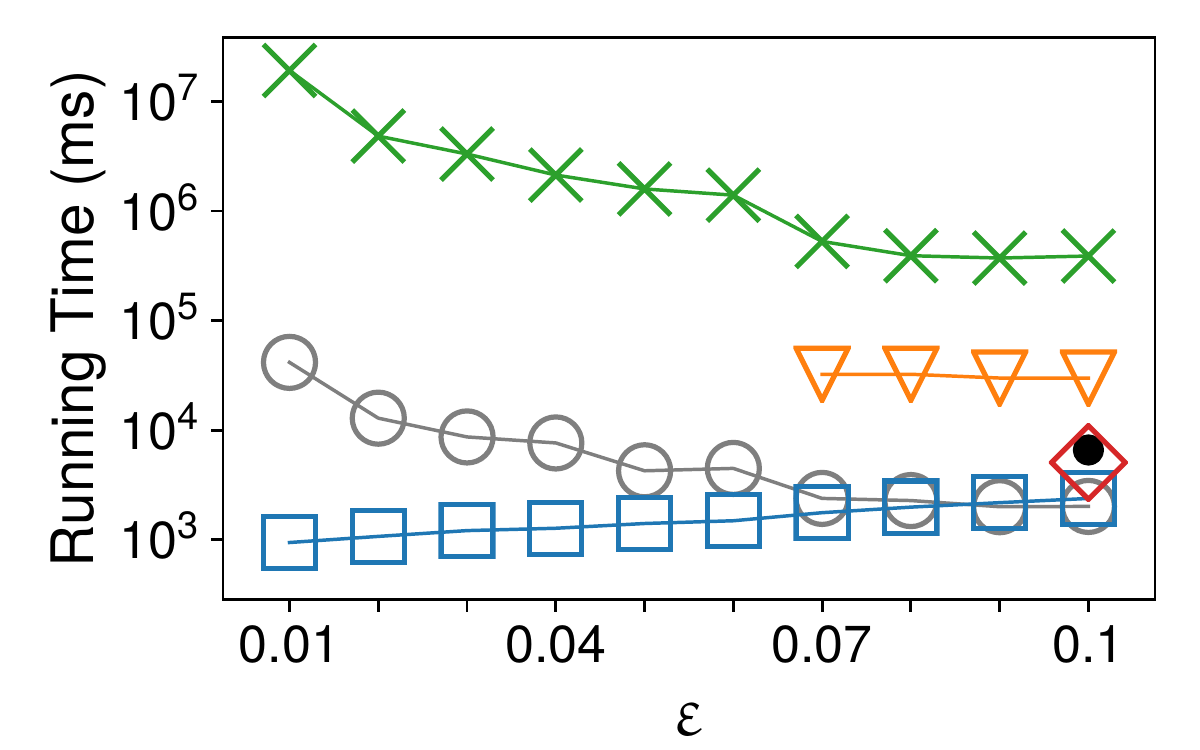}
  }
  \subfigure[Household]{
    \label{subfig:household:time}
    \includegraphics[width=0.235\textwidth]{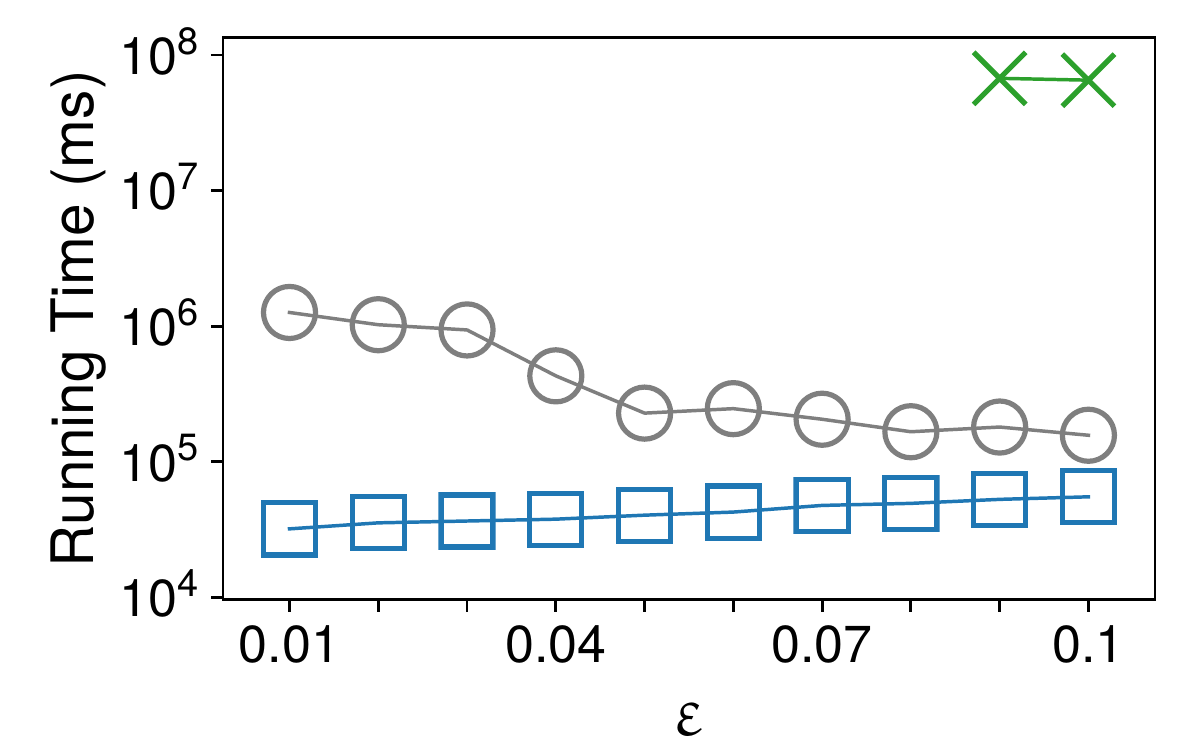}
  }
  \subfigure[SUSY]{
    \label{subfig:susy:time}
    \includegraphics[width=0.235\textwidth]{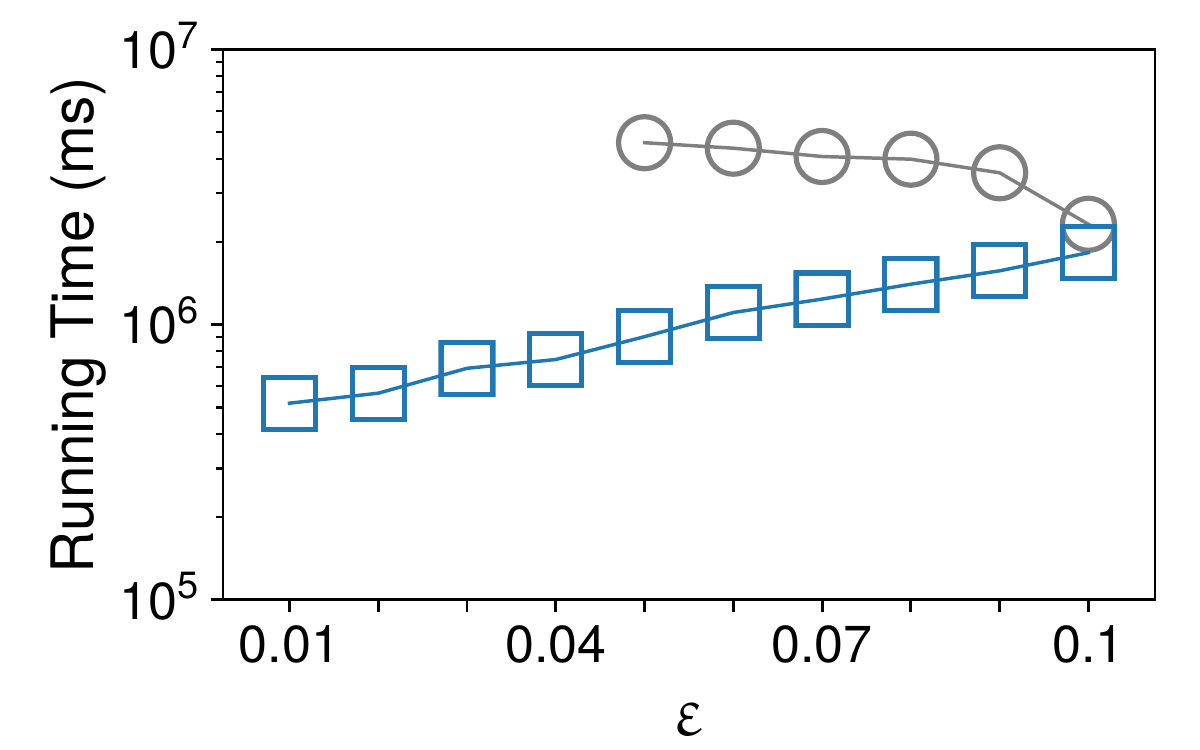}
  }
  \caption{Running time with varying $\varepsilon$}\label{fig:time:hd}
  \vspace{1em}
  \subfigure[Climate]{
    \label{subfig:climate:size}
    \includegraphics[width=0.235\textwidth]{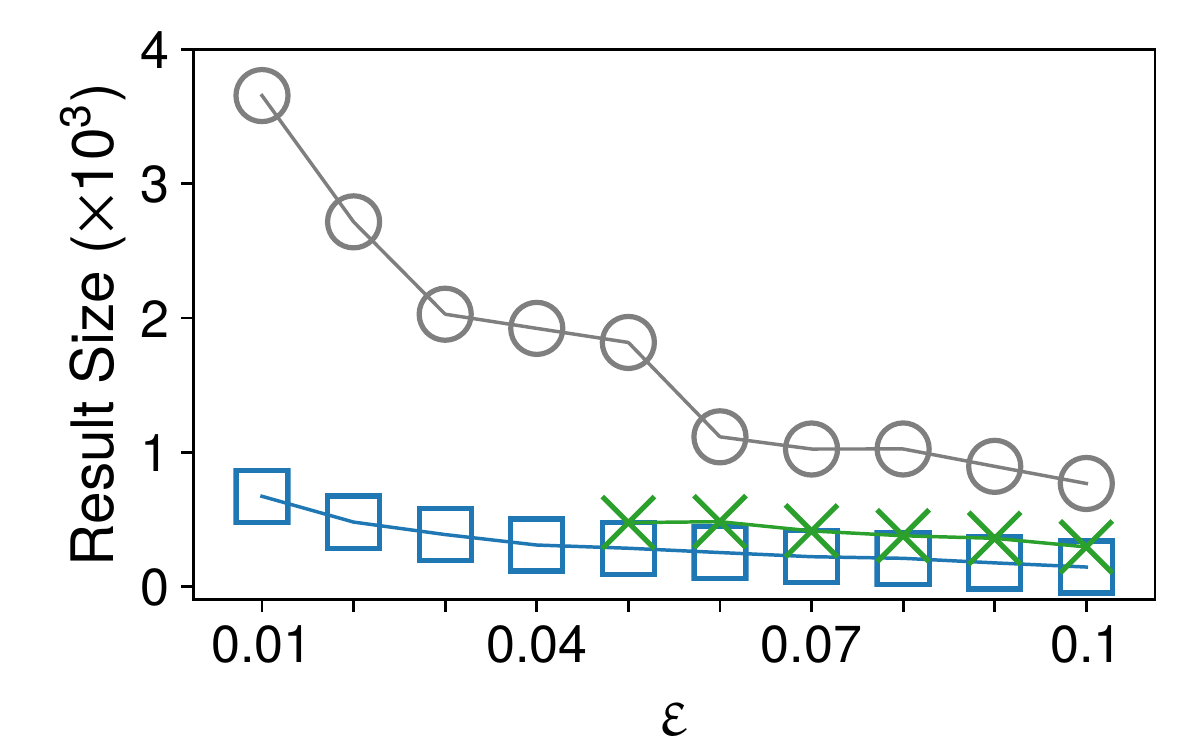}
  }
  \subfigure[El Nino]{
    \label{subfig:elnino:size}
    \includegraphics[width=0.235\textwidth]{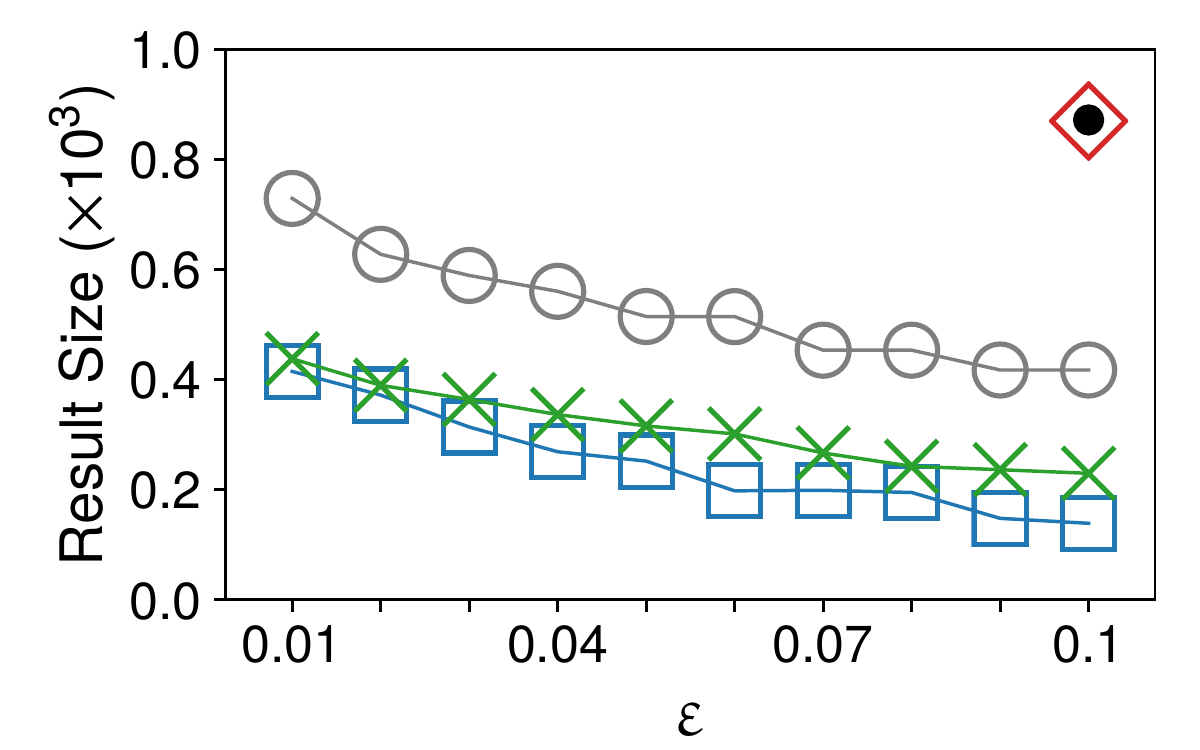}
  }
  \subfigure[Household]{
    \label{subfig:household:size}
    \includegraphics[width=0.235\textwidth]{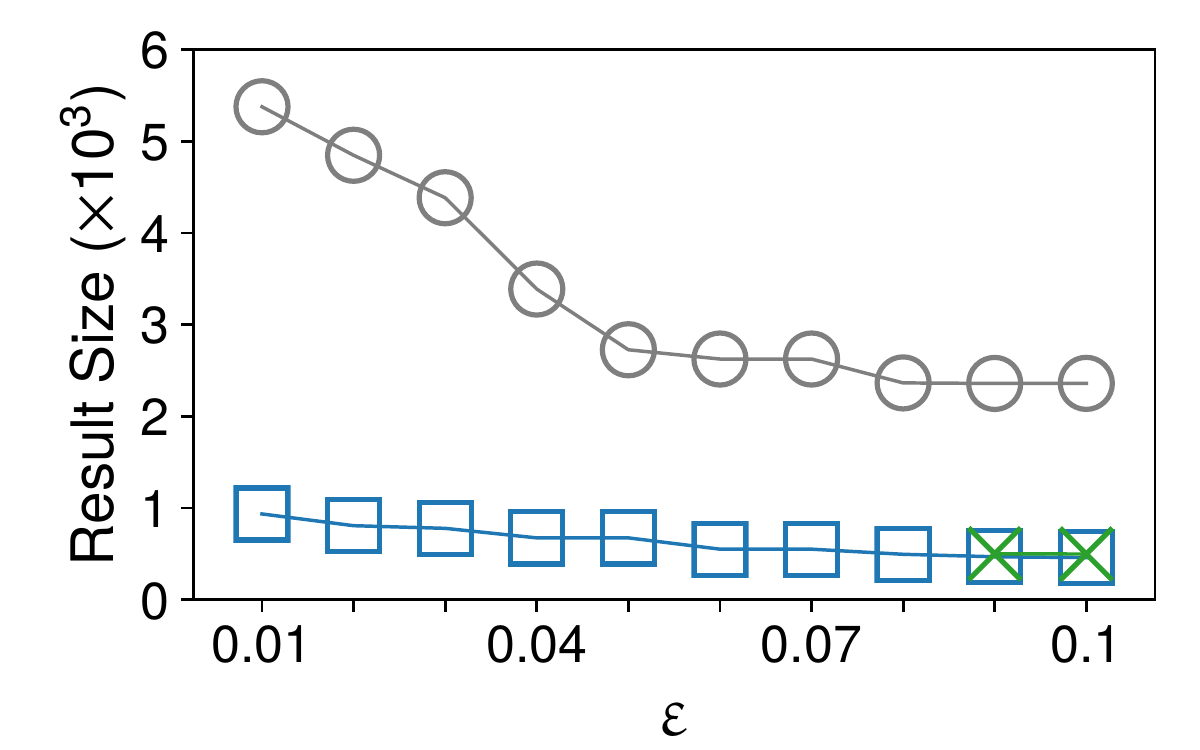}
  }
  \subfigure[SUSY]{
    \label{subfig:susy:size}
    \includegraphics[width=0.235\textwidth]{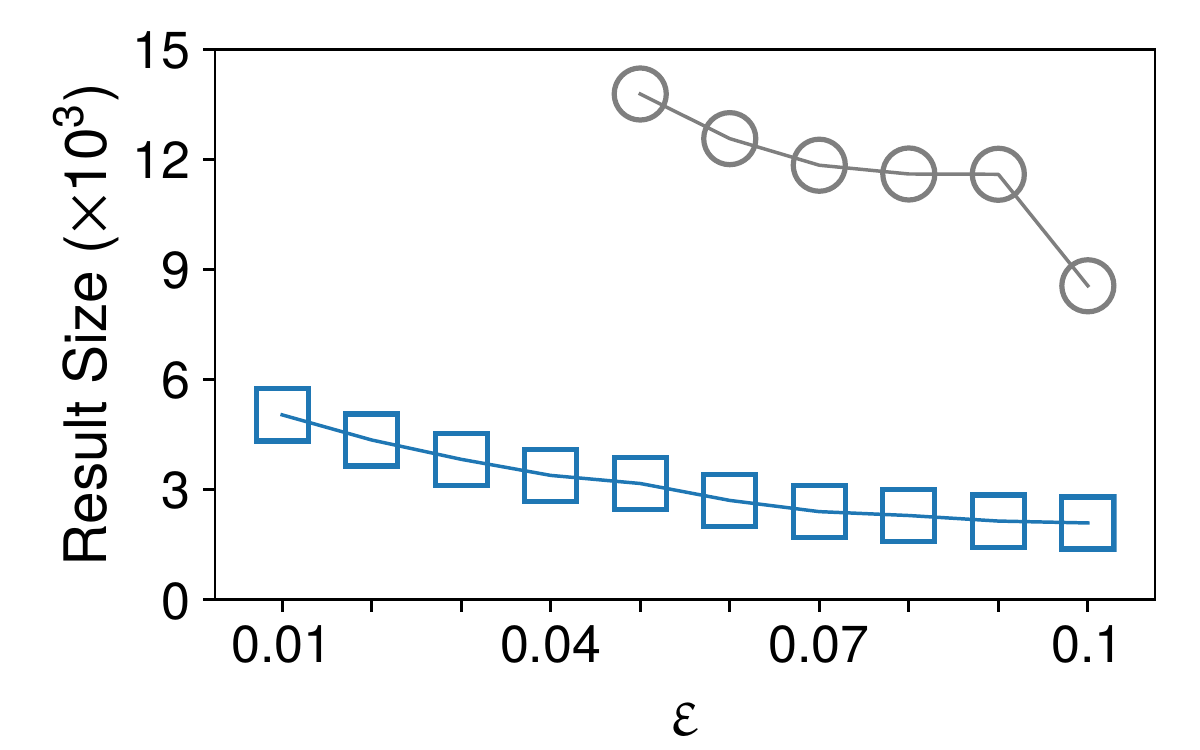}
  }
  \caption{Result size with varying $\varepsilon$}\label{fig:size:hd}
\end{figure}

\textbf{Impact of Parameter $\varepsilon$:}
Next, we vary $\varepsilon$ from $0.01$ to $0.1$ to evaluate the performance of each algorithm.
Note that we terminate the execution of an algorithm after running it
on a dataset for one day. \textsc{HD-RRMS} and \textsc{HittingSet} suffer from a low efficiency
and cannot return any result on a large dataset when $\varepsilon$ is small. 
In Figure~\ref{fig:loss:hd}, we show the maximum regret ratios of the results
of each algorithm for different $\varepsilon$. 
Similar to the 2D setting, we find empirically that RMS algorithms (\textsc{Greedy}, \textsc{HD-RRMS}, and \textsc{Sphere}) do not provide any valid result for GRMR on real datasets due to the skewness of data distributions.
The running time and result sizes of each algorithm with varying $\varepsilon$ are presented in Figures~\ref{fig:time:hd} and~\ref{fig:size:hd}, respectively.
First of all, we notice that \heuristic runs slower when $\varepsilon$ is larger. 
This is expected, as the dominance graph contains all edges with weights at most $\varepsilon$
and therefore has more edges for a larger $\varepsilon$.
It thus takes more time for both \emph{graph construction} and \emph{result computation}
when $\varepsilon$ is larger.
On the other hand, both \textsc{$\varepsilon$-Kernel} and \textsc{HittingSet} run faster
when $\varepsilon$ is larger because of smaller sample sizes for computation.
In terms of result sizes, all three algorithms identify smaller results
with increasing $\varepsilon$, as expected.
Finally, compared with \textsc{$\varepsilon$-Kernel}, \heuristic achieves higher efficiencies
in all datasets except \textsc{Climate}. 
At the same time, \heuristic produces results of significantly better quality than
\textsc{$\varepsilon$-Kernel}: the result size of \heuristic is up to $5.7$ times smaller than
that of \textsc{$\varepsilon$-Kernel}.
Compared with \textsc{HittingSet}, \heuristic runs up to four orders of magnitude faster while providing results with $1.07$--$2.1$ times smaller sizes.
In particular, \heuristic is the only algorithm that
returns a valid result for GRMR on \textsc{SUSY} in reasonable time when $\varepsilon < 0.05$.

\begin{figure}[t]
  \centering
  \includegraphics[width=0.96\textwidth]{legend-hd.pdf}
  \subfigure[Normal, time]{
    \label{subfig:normal:d:time}
    \includegraphics[width=0.235\textwidth]{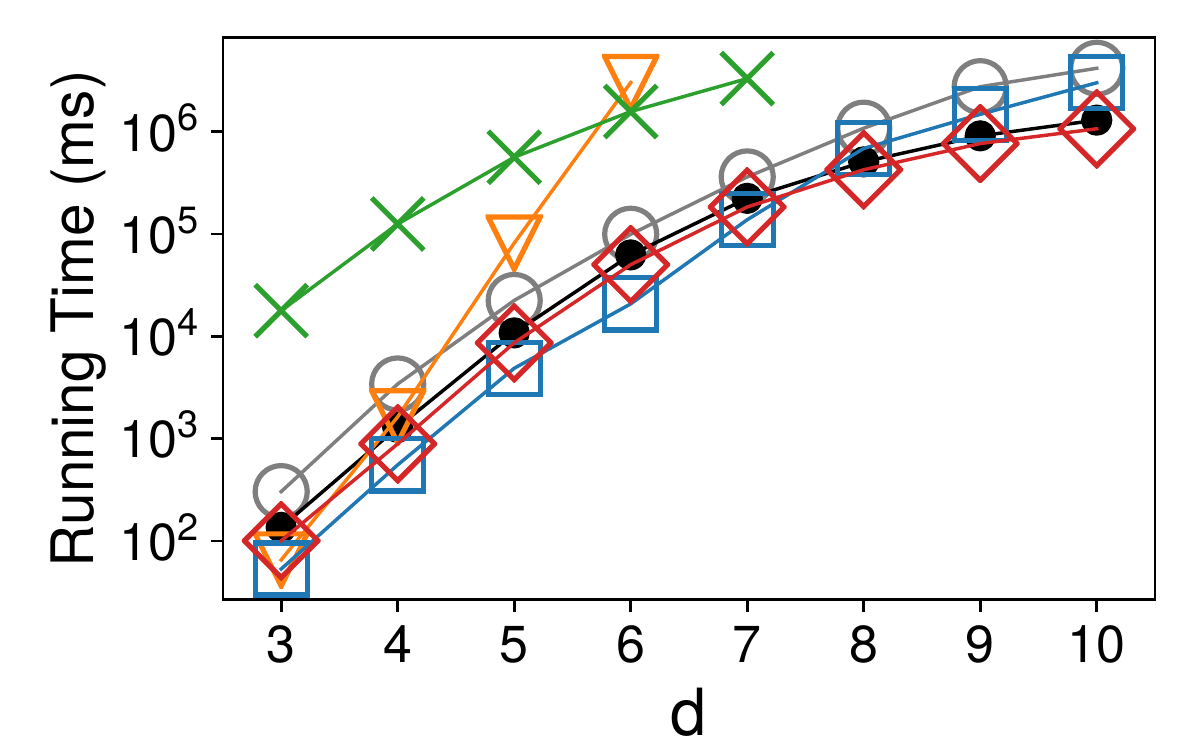}
  }
  \subfigure[Normal, size]{
    \label{subfig:normal:d:size}
    \includegraphics[width=0.235\textwidth]{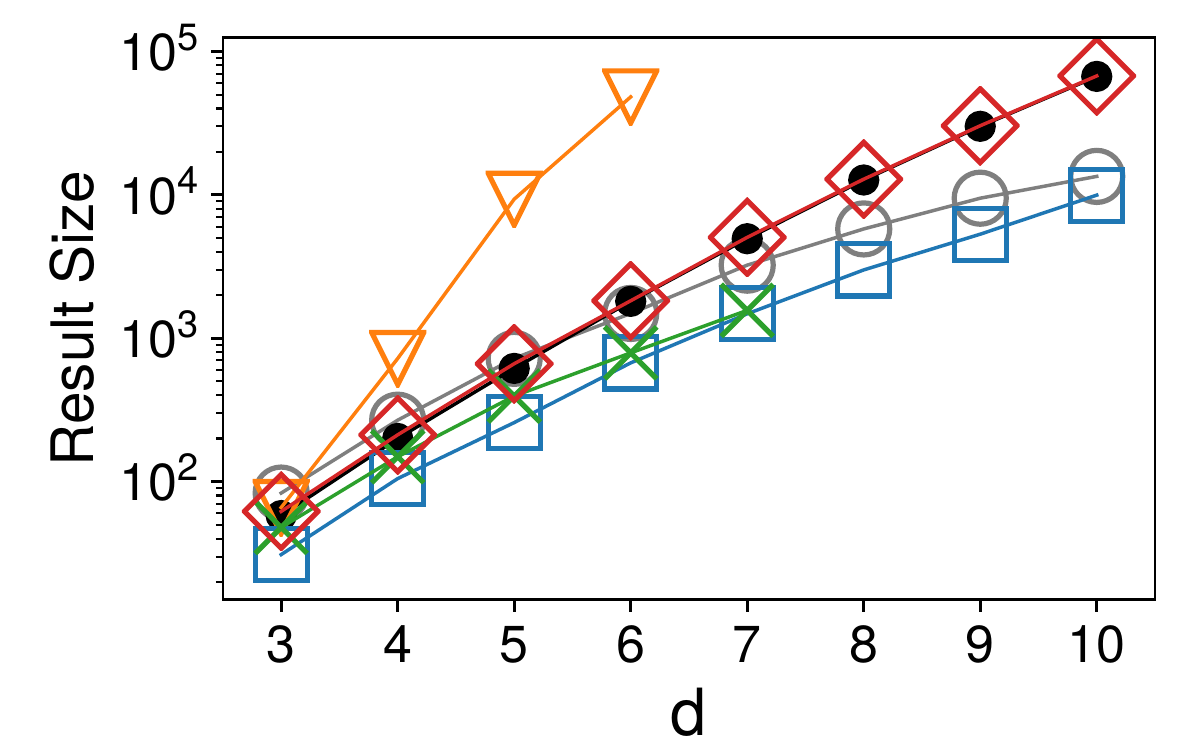}
  }
  \subfigure[Uniform, time]{
    \label{subfig:uniform:d:time}
    \includegraphics[width=0.235\textwidth]{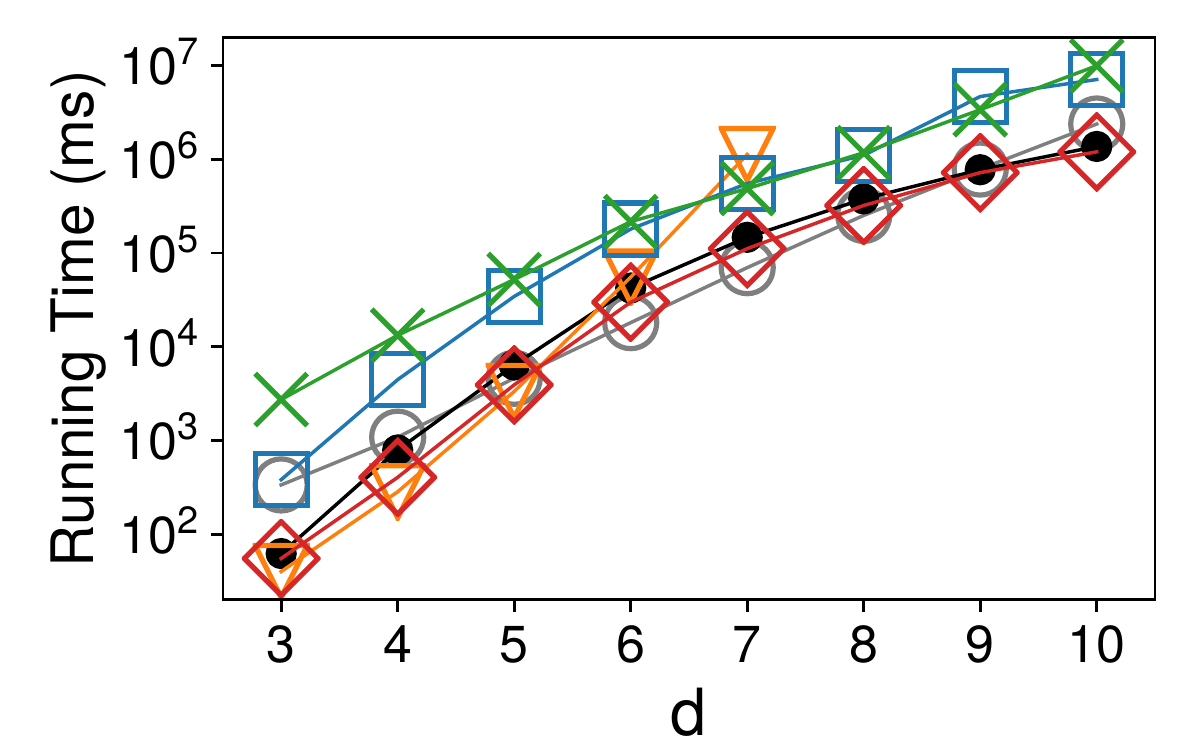}
  }
  \subfigure[Uniform, size]{
    \label{subfig:uniform:d:size}
    \includegraphics[width=0.235\textwidth]{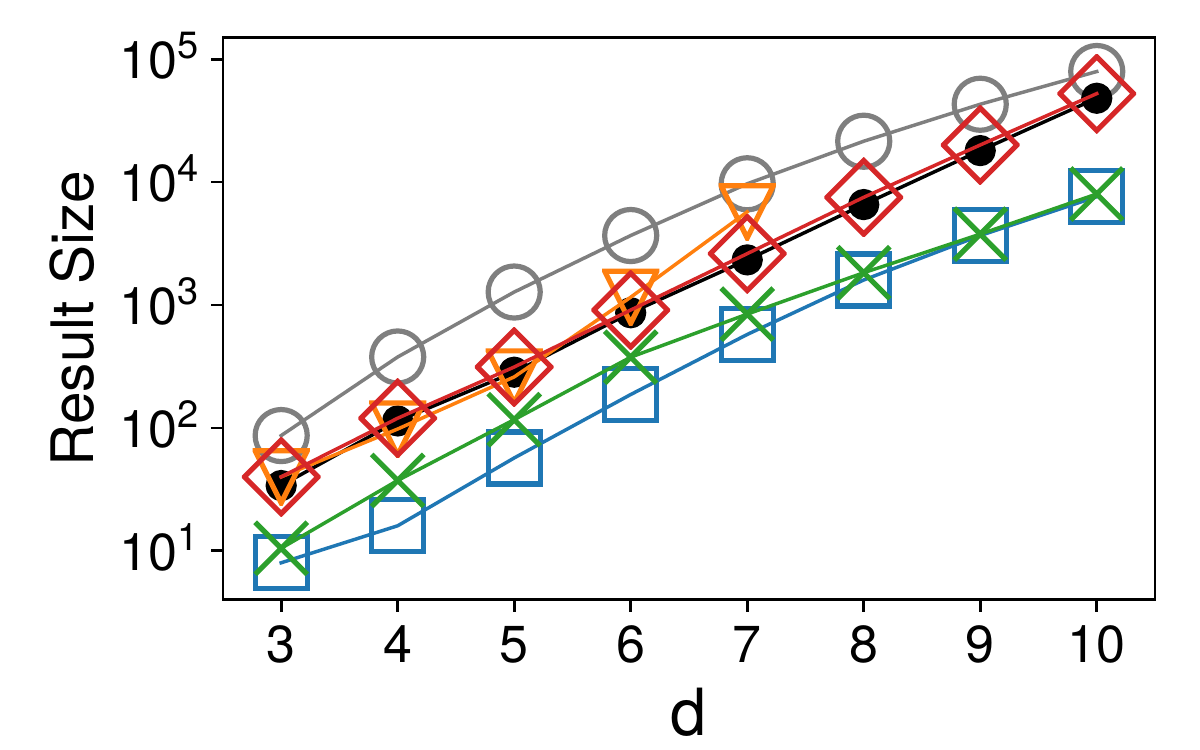}
  }
  \caption{Performance with varying the dimensionality $d$ ($\varepsilon=0.1$)}
  \label{fig:results:d}
  \vspace{1em}
  \subfigure[Normal (6D), time]{
    \label{subfig:normal:N:time}
    \includegraphics[width=0.235\textwidth]{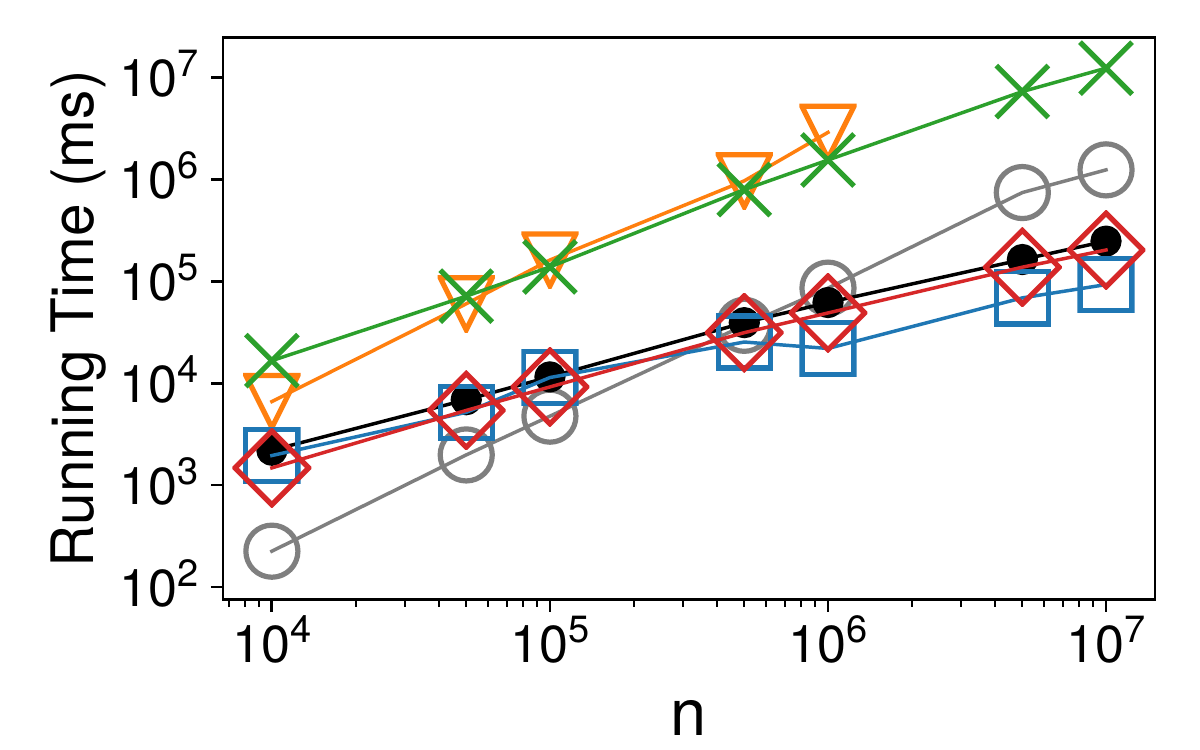}
  }
  \subfigure[Normal (6D), size]{
    \label{subfig:normal:N:size}
    \includegraphics[width=0.235\textwidth]{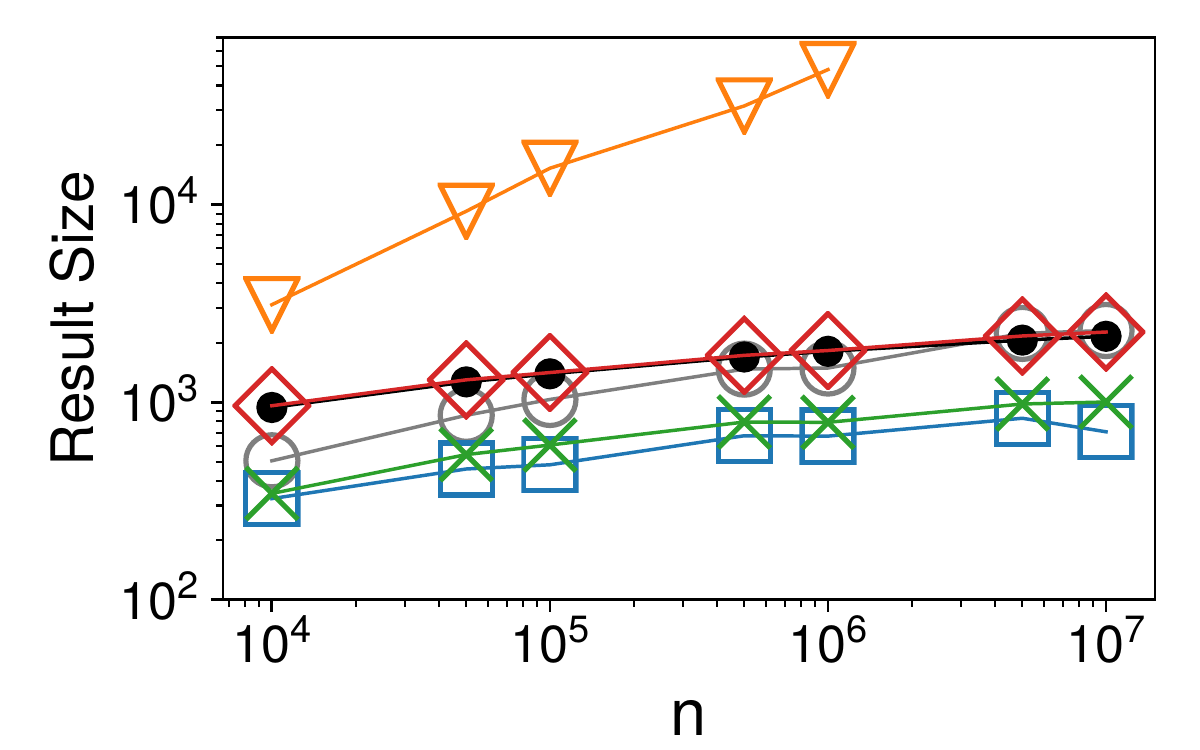}
  }
  \subfigure[Uniform (6D), time]{
    \label{subfig:uniform:N:time}
    \includegraphics[width=0.235\textwidth]{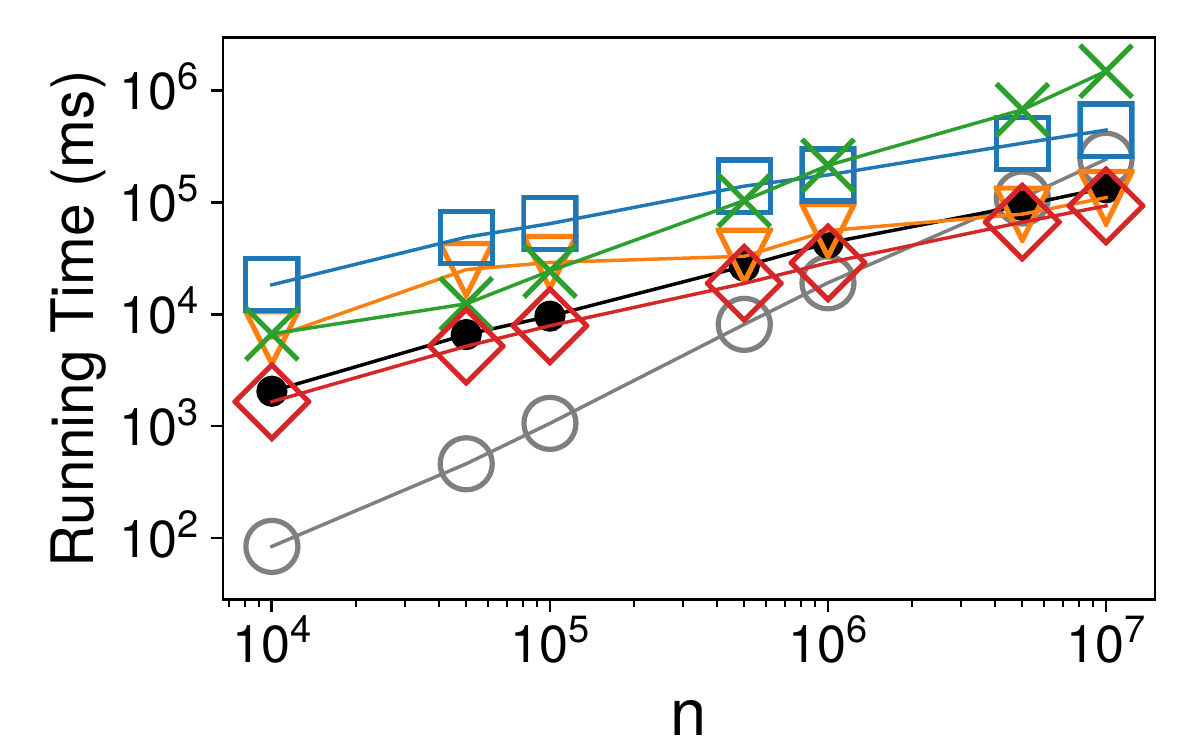}
  }
  \subfigure[Uniform (6D), size]{
    \label{subfig:uniform:N:size}
    \includegraphics[width=0.235\textwidth]{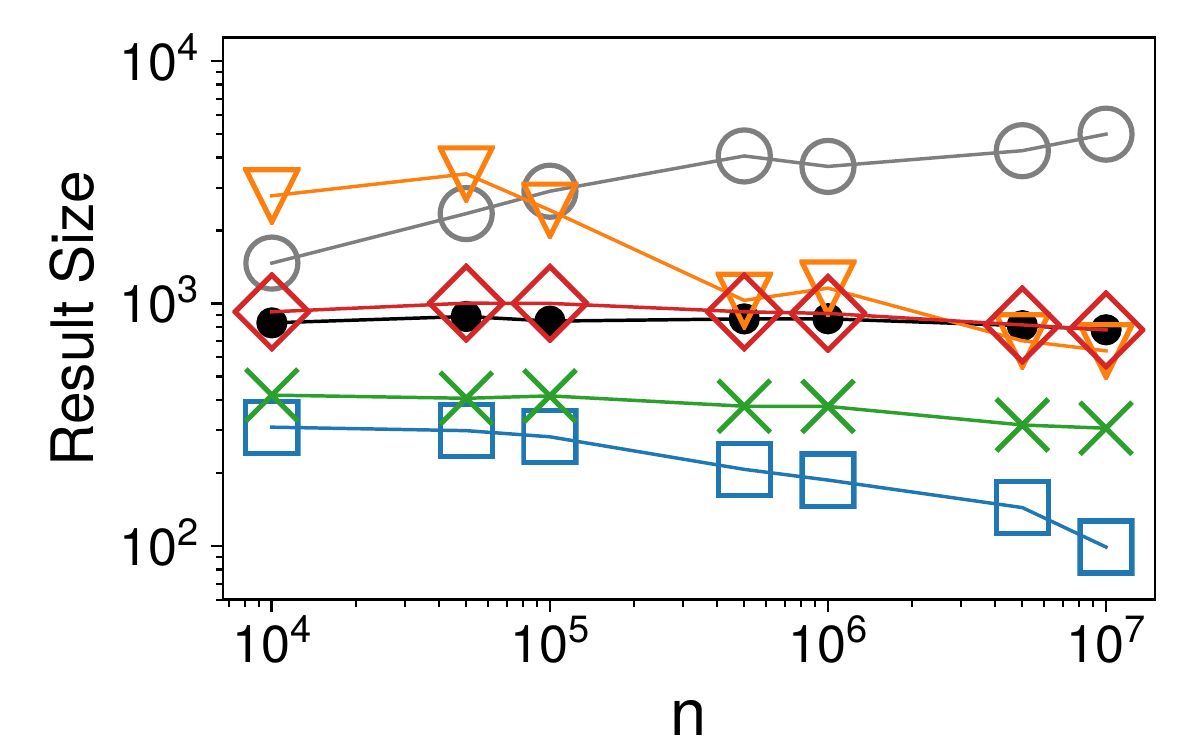}
  }
  \caption{Performance with varying the dataset size $n$ ($\varepsilon=0.1$)}
  \label{fig:results:n}
\end{figure}

\textbf{Impact of Dimensionality $d$ and Dataset Size $n$:}
Finally, we evaluate the impact of the dimensionality $d$ and the dataset size $n$ for different algorithms on the two synthetic datasets.
In these experiments, we fix $\varepsilon$ to $0.1$.
The results for varying dimensionality $d$ ($d = 3$ to $10$) are shown in
Figure~\ref{fig:results:d}. 
Both the running time and result sizes
of all algorithms grow rapidly with $d$. 
Such results are not surprising, because the number of extreme points in a dataset increases super-linearly with $d$. 
\textsc{HD-RRMS} and \textsc{HittingSet} are inefficient in high dimensions and
cannot return any result on \textsc{Normal} when $d>7$.
\textsc{Greedy} and \textsc{Sphere} return valid results for \grmr
by solving $2^d$ RMS problems. However, their solution quality is clearly inferior to \heuristic,
especially in higher dimensions.
Generally, \heuristic finds results of better quality than any other algorithm in reasonable time
when the dimensionality $d$ ranges from $3$ to $10$.
The results with varying the dataset size $n$ from $10^4$ to $10^7$ are illustrated in Figure~\ref{fig:results:n}.
The trends are generally similar to those for varying $d$.
Both the running time and result sizes grow with $n$ because
of the increasing number of extreme points.
The only exception is that the result size of most algorithms
(except \textsc{$\varepsilon$-Kernel}) decreases with $n$ on \textsc{Uniform}.
\heuristic outperforms all other algorithms in terms of quality of results.
In particular, the result of \heuristic is nearly $50$ times smaller than
that of \textsc{$\varepsilon$-Kernel} on \textsc{Uniform} when $n=10^7$.

\section{Related Work}\label{sec:literature}

A great variety of approaches to representing a large dataset by a small subset of data points
have been proposed recently~\cite{DBLP:journals/jacm/AgarwalHV04,DBLP:journals/pvldb/NanongkaiSLLX10,DBLP:conf/kdd/BadanidiyuruMKK14,DBLP:conf/cikm/ZhuangRHGHA16,DBLP:conf/kdd/BachemL018,DBLP:conf/icml/CelisKS0KV18,DBLP:journals/tkde/WangLT19,DBLP:conf/sigmod/AsudehN00J19,DBLP:conf/kdd/WangLT19}.
One important method that has been extensively investigated is \emph{maxima representation}~\cite{DBLP:conf/icde/BorzsonyiKS01,DBLP:conf/sigmod/ChangBCLLS00}, which finds a compact subset that contains the maxima (i.e., points with the highest scores) of a dataset for any possible ranking function. According to the class of ranking functions considered, there are many different definitions of maxima representations. 
In particular, the \emph{convex hull}~\cite{DBLP:books/sp/PreparataS85,DBLP:conf/sigmod/ChangBCLLS00} and \emph{skyline}~\cite{DBLP:conf/icde/BorzsonyiKS01} are two examples of maxima representations when the classes of all \emph{linear functions} and \emph{nonnegative monotonic functions} are considered, respectively.

In practice, since maxima representations can still be overwhelmingly large~\cite{DBLP:conf/sigmod/AsudehN00J19,DBLP:conf/sigmod/AsudehN0D17}, recent efforts have been directed toward reducing their sizes using approximation techniques. 
Nanongkai et al.~\cite{DBLP:journals/pvldb/NanongkaiSLLX10} were the first to propose the \emph{regret-minimizing set} (RMS) problem for approximate maxima representations.
They introduced the \emph{regret ratio}, which is the relative difference in scores between the top-ranked point of the dataset and the top-ranked point of a subset,
as the measure of \emph{regret} for a ranking function.
They used the \emph{maximum regret ratio}, i.e., the maximum of the regret ratios over all nonnegative (monotonic) linear functions, to measure how well a subset approximates the maxima representation of a dataset.
An RMS is defined as the smallest subset whose maximum regret ratio was at most $\varepsilon$.
The RMS problem was proven to be NP-hard~\cite{DBLP:conf/icdt/CaoLWWWWZ17,DBLP:conf/wea/AgarwalKSS17} for any dataset in three or higher dimensions.
Since the seminal work of Nanongkai et al., different approximation and heuristic algorithms~\cite{DBLP:journals/pvldb/NanongkaiSLLX10,DBLP:conf/icde/PengW14,DBLP:conf/sigmod/AsudehN0D17,DBLP:conf/wea/AgarwalKSS17,DBLP:conf/alenex/KumarS18,DBLP:conf/sigmod/XieW0LL18,DBLP:journals/pvldb/ShetiyaAAD19,DBLP:conf/icdt/CaoLWWWWZ17} were proposed for RMS. 
Please refer to~\cite{DBLP:journals/vldb/XieWL20} for a survey of algorithmic techniques for RMS.

Furthermore, several works~\cite{DBLP:journals/pvldb/ChesterTVW14,DBLP:conf/icdt/CaoLWWWWZ17,DBLP:journals/pvldb/FaulknerBL15,DBLP:journals/tods/QiZSY18,DBLP:conf/aaai/SomaY17a,DBLP:conf/icde/ZeighamiW19,DBLP:conf/aaai/StorandtF19,DBLP:conf/sigmod/AsudehN00J19,DBLP:conf/sigmod/NanongkaiLSM12,DBLP:conf/sigmod/XieWL19,DBLP:conf/icde/XieW0T20} studied different generalizations and variations of RMS.
Chester et al.~\cite{DBLP:journals/pvldb/ChesterTVW14} generalized the regret ratio to $k$-regret ratio that expressed the score difference between the top-ranked point in the subset and the $k$\textsuperscript{th}-ranked point in the dataset.
Accordingly, they extended RMS to $k$-RMS for approximating maxima representations of top-$k$ results (instead of top-$1$ results) w.r.t.~all ranking functions.
Different RMS problems with nonlinear utility functions were studied in~\cite{DBLP:journals/pvldb/FaulknerBL15,DBLP:journals/tods/QiZSY18,DBLP:conf/aaai/SomaY17a}.
Specifically, they considered \emph{convex/concave functions}~\cite{DBLP:journals/pvldb/FaulknerBL15}, \emph{multiplicative functions}~\cite{DBLP:journals/tods/QiZSY18}, and \emph{submodular functions}~\cite{DBLP:conf/aaai/SomaY17a}, respectively, but all of which were monotonic.
The \emph{average regret minimization}~\cite{DBLP:conf/icde/ZeighamiW19,DBLP:conf/aaai/StorandtF19,DBLP:journals/pvldb/ShetiyaAAD19} problem has also been investigated recently. 
Instead of targeting the maximum regret ratio, it uses the average of regret ratios over all ranking functions as the measure of representativeness.
In another variant, Nanongkai el al.~\cite{DBLP:conf/sigmod/NanongkaiLSM12} and Xie et al.~\cite{DBLP:conf/sigmod/XieWL19} proposed the \emph{interactive regret minimization} problem by introducing user interactions to enhance RMS. 
Moreover, Asudeh et al.~\cite{DBLP:conf/sigmod/AsudehN00J19} proposed the \emph{ranking-regret representatives} (RRR) in which the regret was defined by rankings instead of scores.
Specifically, an RRR is the smallest subset that contains at least one of
the top-$k$ points in the dataset for all ranking functions.
Xie et al.~\cite{DBLP:conf/icde/XieW0T20} studied a dual problem of RMS called \emph{happiness maximization}, where the goal is to maximize the happiness (i.e., one minus the regret) instead of minimizing the regret.

Note that all above approaches to approximate maxima representations
are designed for monotonic linear (or nonlinear, in some cases) ranking functions. 
Accordingly, most existing algorithms for RMS and related problems rely on
the monotonicity of ranking functions and naturally cannot be directly used for \grmr.
To the best of our knowledge, \grmr is the only method that considers the class of all linear
functions including non-monotonic ones with negative weights as the ranking functions.

\section{Conclusion}\label{sec:conclusion}

In this paper, we proposed the \emph{generalized regret-minimizing representative} (\grmr) problem
to identify the smallest subset that can approximate the maximum score of the dataset for any linear function
within a regret ratio of at most $\varepsilon$.
We proved the NP-hardness of \grmr in three or higher
dimensions. Following a geometric interpretation of \grmr, we designed
an exact algorithm for \grmr in two dimension and a heuristic algorithm for \grmr in arbitrary
dimensions. Finally, we conducted extensive experiments
on real and synthetic datasets to verify the performance of our proposed
algorithms. The experimental results confirmed the efficiency, effectiveness,
and scalability of our algorithms for \grmr.

\bibliographystyle{abbrvnat}
\bibliography{references}

\begin{appendix}
\section{Proof of Theorem~\ref{thm:np:hardness}}
\label{proof:np:hardness}

\begin{proof}
  For a set $P_0 $ of $n$ points in $\mathbb{R}^3_+$
  and a positive integer $ r \in \mathbb{Z}^+ $,
  a 3D RMS instance $\mathsf{RMS}(P_0,r)$ asks,
  given a real number $\varepsilon \in (0,1)$,
  whether there exists a subset $Q_0 \subseteq P_0$ of size $r$
  such that $\omega(x,Q_0) \geq (1-\varepsilon) \cdot \omega(x,P_0)$
  for any $x \in \mathbb{S}^2_{+}$. Intuitively, RMS is a restricted
  version of GRMR where both data points and utility vectors are in the nonnegative orthant.
  Here, we can restrict $P_0 \in [0,1]^3$
  because of the scale-invariance of RMS~\cite{DBLP:journals/pvldb/NanongkaiSLLX10}. We use
  $ l^{+}(Q_0)=\max_{x \in \mathbb{S}^2_+} 1-\frac{\omega(x,Q_0)}{\omega(x,P_0)}$
  to denote the \emph{maximum regret ratio} of $Q_0$ over $P_0$ for RMS.
  Given any $\mathsf{RMS}(P_0,r)$, we should construct an instance $\mathsf{GRMR}(P',r')$
  satisfying that there exists a subset $Q_0 \subseteq P_0$ of size $r$
  such that $l^{+}(Q_0) \leq \varepsilon$ if and only if there exists a subset $Q' \subseteq P'$
  of size $r'$ such that $l(Q') \leq \varepsilon$ for an arbitrary $\varepsilon \in (0,1)$.

  For $\mathsf{RMS}(P_0,r)$ and $\varepsilon \in (0,1)$, we add three new points
  $B=\{b_1,b_2,b_3\}$ to $P_0$.
  Let $b_1=(1-\eta,1,1)$, $b_2=(1,1-\eta,1)$, and $b_3=(1,1,1-\eta)$ where $\eta > 3$.
  The value of $\eta$ should be determined by $P_0$, $Q_0$, and $\varepsilon$ as discussed later.
  We will prove that $P_0$ has an $\varepsilon$-regret set of size $r$
  if and only if $ P' = P_0 \cup B $ has an $\varepsilon$-regret set of size $r'=r+3$.
  To prove this, we need to show (1) If $l^+(Q_0) \leq \varepsilon$ and
  $Q' = Q_0 \cup B$, then $l(Q') \leq \varepsilon$; and (2) If $l(Q') \leq \varepsilon$,
  then $B \subset Q'$ and, for $Q_0 = Q' \setminus B$,
  $l^+(Q_0) \leq \varepsilon$ over $P_0 = P' \setminus B$.

  We first prove (1) by showing that $l_x(Q') \leq \varepsilon$ for all $x \in \mathbb{S}^2$.
  First of all, we consider the case when $x \in \mathbb{S}^2_{+}$.
  Let $p^*=\arg\max_{p \in P'} \langle p,x \rangle $.
  If $p^* \in B$, then $l_x(Q') = 0$ because $B \subset Q'$;
  Otherwise, we have $p^* \in P_0$ and there always exists some $p \in Q_0$ such that
  $ \langle p,x \rangle \geq (1-\varepsilon) \cdot \langle p^*,x \rangle $
  and thus $l_x(Q') \leq \varepsilon$ because $Q_0$ is an $\varepsilon$-regret set of $P_0$.
  Next, we consider the case when $x \in \mathbb{S}^2 \setminus \mathbb{S}^2_{+}$.
  We want to show that the point with the highest score
  for any $x \in \mathbb{S}^2 \setminus \mathbb{S}^2_{+}$
  is always in $B$ and thus $l_x(Q') = 0$.
  Furthermore, we consider three cases for $x=(x[1],x[2],x[3])$ as follows:
  \begin{itemize}
  \item \textbf{Case 1.1 ($x[1] \geq 0, x[2] \geq 0, x[3] \leq 0$):}
  For any $p \in P_0$, we have
  \begin{displaymath}
    \langle p,x \rangle \leq p[1] \cdot x[1] + p[2] \cdot x[2] < x[1]+x[2] \leq \sqrt{2}
  \end{displaymath}
  In addition, we have
  \begin{displaymath}
    \langle b_3,x \rangle = x[1] + x[2] + (1-\eta) \cdot x[3] \geq \sqrt{2}
  \end{displaymath}
  Thus, $b_3$ always has a larger score than all points in $P_0$.
  This result also holds for $b_1$ or $b_2$ when $x[1] \leq 0$ or $x[2] \leq 0$
  and other dimensions are positive.
  \item \textbf{Case 1.2 ($x[1] \geq 0, x[2] \leq 0, x[3] \leq 0$):}
  For any $p \in P_0$, $ \langle p,x \rangle < x[1] \leq 1 $.
  Moreover, we have
  \begin{displaymath}
    \langle b_2,x \rangle = x[1] + x[2] + x[3] - \eta \cdot x[2]
  \end{displaymath}
  and
  \begin{displaymath}
    \langle b_3,x \rangle = x[1] + x[2] + x[3] - \eta \cdot x[3]
  \end{displaymath}
  If $ x[2] \leq x[3] $, then $ \langle b_2,x \rangle \geq \langle b_3,x \rangle $,
  and vice versa. So the minimum of $\max(\langle b_2,x \rangle,\langle b_3,x \rangle)$
  is always reached when $x[2] = x[3]$. In this case, we have
  \begin{displaymath}
    \langle b_2,x \rangle = \langle b_3,x \rangle = x[1] + (2-\eta) \cdot x[2]
  \end{displaymath}
  Let $x[2] = \beta$ and thus $ x[1] = \sqrt{1 - 2\beta^2} $.
  We consider the score $\langle b_2,x \rangle$ as a function $f(\beta)$, i.e.,
  \begin{displaymath}
    f(\beta) = \langle b_2,x \rangle = \sqrt{1 - 2\beta^2} + (2-\eta) \cdot \beta
  \end{displaymath}
  where $ \beta \in [-1,0] $. As $f(\beta)$ first increases and then decreases
  in the range $[-1,0]$, we have $\langle b_2,x \rangle \geq f(0) = 1$
  and $\langle b_2,x \rangle \geq f(-1) = \eta - 2 > 1$.
  Thus, the larger one between $\langle b_2,x \rangle$ and $\langle b_3,x \rangle$
  is always greater than the scores of all points in $P_0$ w.r.t.~$x$.
  Similar results can be implied when $x[2] \geq 0$ or $x[3] \geq 0$
  and other dimensions are negative.
  \item \textbf{Case 1.3 ($x[1] \leq 0, x[2] \leq 0, x[3] \leq 0$):}
  For any $p \in P_0$, $ \langle p,x \rangle \leq 0 $.
  In addition, there always exists $i\in[1,3]$ such that $x[i] \leq -\frac{\sqrt{3}}{3}$.
  Taking $x[1] \leq -\frac{\sqrt{3}}{3}$ as an example, we have
  \begin{displaymath}
    \langle b_1,x \rangle \geq \frac{\sqrt{3}}{3}\cdot\eta - \sqrt{3} > 0
  \end{displaymath}
  because $\eta > 3$.
  \end{itemize}
  We can prove $l(Q') \leq \varepsilon$ from the above three cases.

  To verify (2), we need to show: (2.1) If $B \not\subset Q'$, then $l(Q') > \varepsilon$;
  (2.2) If $l^+(Q_0) > \varepsilon$, then $Q_0 \cup B$ is not
  an $\varepsilon$-regret set of $P'$.
  The correctness of (2.1) is easy to prove: Taking $x=(-1,0,0) \in \mathbb{S}^2 $,
  it is obvious that $ \langle b_1,x \rangle > 0 $ and $ \langle p,x \rangle < 0 $
  for any $ p \in P'\setminus\{b_1\} $. So if $B \not\subset Q'$, then $l(Q') > 1$.
  The proof of (2.2) involves determining the value of $\eta$ according to $P_0$, $Q_0$,
  and $\varepsilon$. If $l^+(Q_0) > \varepsilon$,
  then there exists a point $p \in P_0 \setminus Q_0$ and a vector
  $ x \in \mathbb{S}^2_{+} $ such that
  $ (1-\varepsilon) \cdot \langle p,x \rangle > \omega(x,P_0) $.
  Since $l^+(Q_0)$ and the vector $x$ with $l^+(Q_0)=l_x(Q_0)$
  can be computed by a linear program~\cite{DBLP:journals/pvldb/NanongkaiSLLX10},
  it is guaranteed that such a point $p$ and a vector $x$ can be found in polynomial time.
  When $\eta > \frac{3 - (1-\varepsilon) \cdot \langle p,x \rangle}
  {x[i]}$, we have $ \langle b_i,x \rangle < (1-\varepsilon) \cdot \langle p,x \rangle $
  ($i=1,2,3$).
  Therefore, if $l^+(Q_0) > \varepsilon$,
  $ l_x(Q') > \varepsilon $ once $\eta$ is large enough
  and thus $Q'$ is not an $\varepsilon$-regret set of $P'$.
  We prove (2) from (2.1) and (2.2).

  We complete the reduction from RMS in $\mathbb{R}^3_{+}$ to GRMR in $\mathbb{R}^3$
  in polynomial time and hence prove the NP-hardness of GRMR in $\mathbb{R}^3$.
  Since the reduction can be generalized 
  to higher dimensions, GRMR is also NP-hard when $d>3$.
\end{proof}

\section{Proof of Lemma~\ref{lm:cand}}
\label{proof:cand}

\begin{proof}
  First of all, it is obvious that $R_{\varepsilon}(p) \neq \varnothing$ if $p \in S$
  since it must hold that $ x^*_i \in R_{\varepsilon}(p) $.
  Next, we will prove $R_{\varepsilon}(p) = \varnothing$ if $p \notin S$ by showing
  the minimum of the regret ratio $l_x(p) = 1 -\frac{\langle p,x \rangle}{\omega(x,P)}$
  of $p$ for any $x \in \mathbb{S}^1$ is greater than $\varepsilon$.
  Then, we consider each extreme point $t_i \in X$ separately.
  Let $f(x) = 1 - \frac{\langle p,x \rangle}{\langle t_i,x \rangle}$ where $x \in R(t_i)$
  and $t_i \in X$. According to Line~\ref{ln:cand:condition} of Algorithm~\ref{alg:grmr:2d},
  we have got if $p \notin S$, then $1-\frac{\langle p,x^*_i \rangle}{\langle t_i,x^*_i \rangle}
  > \varepsilon$ for all $i \in [|X|]$.
  Thus, we have
  \begin{displaymath}
    f(x) = 1 - \frac{\|p\|}{\|t_i\|}\cdot
    \frac{\cos(\theta(p)-\theta(x))}{\cos(\theta(t_i)-\theta(x))}
  \end{displaymath}
  where $x \in R(t_i)$, i.e., $x \in [x^*_{i-1},x^*_i]$.
  If $\theta(p)<\theta(t)$, $f(x)$ will monotonically increase with increasing $\theta(x)$;
  If $\theta(p)>\theta(t)$, $f(x)$ will monotonically decrease with increasing $\theta(x)$;
  And if $\theta(p)=\theta(t)$, $f(x)$ will be a constant $1 - \frac{\|p\|}{\|t_i\|}$.
  Thus, the minimum of $f(x)$ can only be reached when $ x=x^*_{i-1}$ or $ x^*_i $.
  In addition, we have
  \begin{displaymath}
    f(x^*_{i-1})=1-\frac{\langle p,x^*_{i-1} \rangle}{\langle t_{i-1},x^*_{i-1} \rangle} > \varepsilon
    \quad \textnormal{and} \quad
    f(x^*_i)=1-\frac{\langle p,x^*_i \rangle}{\langle t_i,x^*_i \rangle} > \varepsilon
  \end{displaymath}
  Combining the above results, we have $l_x(p)>\varepsilon$ for any $x \in \mathbb{S}^1$
  if $\frac{\langle p,x \rangle}{\langle t_i,x \rangle} < 1-\varepsilon$ for any $t_i \in X$
  and conclude the proof.
\end{proof}

\section{Proof of Lemma~\ref{lm:regret}}
\label{proof:regret}

\begin{proof}
  Firstly, the removal of any non-extreme point does not lead to any regret and
  it is safe to only consider extreme points for regret computation.
  For $s_i,s_j \in S (i<j)$ and $t \in X$ where
  $ \theta(s_i) \leq \theta(t) \leq \theta(s_j) $, we need to compute an upper bound
  $l_{ij}(t)$ of the maximum regret ratio of $\{s_i,s_j\}$ over $t$
  for any vector $x \in R(t)$. By sweeping all utility vectors in a counterclockwise
  direction, we observe that, similar to the proof of Lemma~\ref{lm:cand},
  \begin{displaymath}
    f_i(x)=1-\frac{\langle s_i,x \rangle}{\langle t,x \rangle}=1-
    \frac{\|s_i\|}{\|t\|}\cdot\frac{\cos(\theta(s_i)-\theta(x))}{\cos(\theta(t)-\theta(x))}
  \end{displaymath}
  monotonically increases with increasing $x$ since $\theta(s_i) \leq \theta(t)$ while
  \begin{displaymath}
    f_j(x)=1-\frac{\langle s_j,x \rangle}{\langle t,x \rangle}=1-
    \frac{\|s_j\|}{\|t\|}\cdot\frac{\cos(\theta(s_j)-\theta(x))}{\cos(\theta(t)-\theta(x))}
  \end{displaymath}
  monotonically decreases with increasing $x$ since $\theta(s_j) \geq \theta(t)$.
  Therefore, the regret ratio of $\{s_i,s_j\}$, i.e., $ \min\big(f_i(x),f_j(x)\big) $ is
  maximized when $ f_i(x)=f_j(x) $ (i.e., $\langle s_i,x \rangle=\langle s_j,x \rangle$).
  Finally, we have
  \begin{displaymath}
    l_{ij}=\max_{t \in X[s_i,s_j]}1-\frac{\langle s_i,x \rangle}{\langle t,x \rangle}
  \end{displaymath}
  is the upper bound of the maximum regret ratio of $\{s_i,s_j\}$ after removing the candidates between them.
\end{proof}

\section{Proof of Lemma~\ref{lm:cycle}}
\label{proof:cycle}

\begin{proof}
  For each extreme point $t \in X \setminus Q$, there always exists two points
  $s_i,s_j \in Q$ where $ \theta(s_i) \leq \theta(t) \leq \theta(s_j) $
  and $l_{ij} \leq \varepsilon$ because $C$ is a cycle of $G$.
  According to Lemma~\ref{lm:regret}, we have the regret ratio of $\{s_i,s_j\}$
  over $t$ is at most $\varepsilon$ for any $x \in R(t)$.
  Moreover, if $t \in X \cap Q$, it is obvious that
  the maximum regret ratio of $Q$ over $t$ is $0$.
  Therefore, the maximum regret ratio of $Q$ over $t$ is at most $\varepsilon$
  for every $t \in X$ and $Q$ is an $\varepsilon$-regret set of $P$.

  A subset $Q \subseteq P$ is called a locally minimal $\varepsilon$-regret set
  if (1) $Q$ is an $\varepsilon$-regret set and (2) $Q \setminus \{q\}$ is not
  an $\varepsilon$-regret set for any $q \in Q$.
  Obviously, the optimal result $Q^*_{\varepsilon}$
  (i.e., the globally minimum $\varepsilon$-regret set)
  is guaranteed to be locally minimal.
  First, if $Q$ is a locally minimal $\varepsilon$-regret set, we can get $Q \subseteq S$.
  For any $p \notin S$, if a subset $Q^\prime$ where $p \in Q^\prime$
  is an $\varepsilon$-regret set,
  it will hold that $Q^\prime \setminus \{p\}$ is still an $\varepsilon$-regret set
  since $R_{\varepsilon}(p)=\varnothing$, which implies that $Q^\prime$ is not locally minimal.
  Then, we arrange all points of $Q$ in a counterclockwise direction
  as $\{ q_1,\ldots,q_{|Q|} \}$.
  We prove that there exists an edge $(q_i \rightarrow q_{i+1})$ in $G$
  for every $i \in [|Q|]$ by contradiction.
  If $(q_i \rightarrow q_{i+1}) \notin E$, then either
  (1) $ \theta(q_{i+1})-\theta(q_i) > \pi $ or
  (2) $ 1 - \frac{\langle q_i,x^* \rangle}{\langle t,x^* \rangle} > \varepsilon$
  when $ \langle q_i,x^* \rangle = \langle q_{i+1},x^* \rangle $.
  In the previous case, we have $l(Q) > 1$ and $Q$ is not an $\varepsilon$-regret set.
  In the latter case, if there does not exist any $i^\prime < i$ or $i^{\prime\prime} > i+1$
  such that the regret ratio of
  $\{q_{i},q_{i+1}\} \cup \{q_{i^\prime}\}$ or $\{q_{i},q_{i+1}\} \cup \{q_{i^{\prime\prime}}\}$
  for $x^*$ is at most $\varepsilon$, we will have $l_{x^*}(Q) > \varepsilon$
  and $Q$ is not an $\varepsilon$-regret set; Otherwise,
  if there exists such $i^\prime < i$ or $i^{\prime\prime} > i+1$,
  we will have $q_i$ or $q_{i+1}$ is redundant and $Q$ is not locally minimal.
  Based on all above results, we conclude that there exists
  a cycle $C$ of $G$ corresponding to $Q$ as long as $Q$
  is a locally minimal $\varepsilon$-regret set of $P$.
\end{proof}

\end{appendix}

\end{document}